\tikzset{
  big arrow/.style={
    decoration={markings,mark=at position 1 with {\arrow[scale=2,#1]{>}}},
    postaction={decorate},
    shorten >=0.4pt},
  big arrow/.default=black}
\tikzset{every picture/.style={line width=0.75pt}}
\tikzstyle{none}=[inner sep=0pt] 
\tikzstyle{NodeCross}=[draw, shape=circle, cross out, inner sep=0pt, minimum size=6pt,line width=0.25mm]
\tikzstyle{Circle}=[draw, shape=circle, black,  fill=black, inner sep=0pt, minimum size=6pt]
\tikzstyle{Star}=[draw, shape=star, fill=black, star points=8, inner sep=0pt, minimum size=8pt]
\tikzstyle{DashedLine}=[-, densely dashed, line width=0.25mm]
\tikzstyle{DottedLine}=[-, dotted, line width=0.25mm]
\tikzstyle{ThickLine}=[-, line width=0.25mm]
\tikzstyle{ArrowLineRight}=[-, -{Stealth[scale=1.75]}, line width=0.1mm, scale=5]
\tikzstyle{RedLine}=[-, draw={rgb,255: red,191; green,0; blue,0}, fill=none, line width=0.25mm]
\tikzstyle{DottedRed}=[-, dotted, draw={rgb,255: red,191; green,0; blue,0}, fill=none, line width=0.25mm]
\tikzstyle{DashedLineThin}=[-, densely dashed, line width=0.125mm, fill=none, draw=black]
\tikzstyle{ArrowLineRed}=[-, -{Stealth[scale=1.75]}, draw={rgb,255: red,191; green,0; blue,0}, line width=0.25mm, scale=5]
\newtheorem{definition}{Definition}[section]
\newtheorem{theorem}{Theorem}
\newtheorem{example}{Example}
\newtheorem{remark}{Remark}
\newcommand{\be}{\begin{equation}}
\newcommand{\ee}{\end{equation}}
\newcommand{\ba}{\begin{aligned}}
\newcommand{\ea}{\end{aligned}}
\newcommand{\C}{\mathbb{C}}
\newcommand{\bea}{\begin{eqnarray}}
\newcommand{\eea}{\end{eqnarray}}
\newcommand{\e}{\mathrm{e}}
\newcommand{\Z}{{\mathbb Z}}
\def\Tr{\mathop{\mathrm{Tr}}\nolimits}
\def\unit{{1\kern-.65ex {\rm l}}}
\def\1{{1\kern-.65ex {\rm l}}}
\def\gt{{\widetilde{g}}}
\def\CA{{\cal A}}
\def\CB{{\cal B}}
\def\CC{{\cal C}}
\def\CF{{\cal F}}
\def\CG{{\cal G}}
\def\CL{{\cal L}}
\def\CO{{\cal O}}
\def\CP{{\cal P}}
\def\CR{{\cal R}}
\def\CW{{\cal W}}
\def\CZ{{\cal Z}}
\def\bbC{{\mathbb{C}}}
\def\bbR{{\mathbb{R}}}
\def\bbZ{{\mathbb{Z}}}
\def\now{%
\ifnum \hour<13
  \ifnum \hour=0 \advance \hour by 12 \number\hour:\else \number\hour:\fi%
     \ifnum \minute<10 0\fi%
     \number\minute%
\ A.M.%
\else \advance \hour by -12 \number\hour:%
  \ifnum \minute<10 0\fi%
  \number\minute%
  \ P.M.%
\fi%
}
\def\mb{\mathbb}
\def\mbf{\mathbf}
\def\mc{\mathcal}
\def\bp{\begin{pmatrix}}
\def\ep{\end{pmatrix}}
\newtheorem{lemma}{Lemma}
\def\ptl{\partial}
\def\im{\mathrm{im}}
\def\ker{\mathrm{ker}}
\def\Aut{\mathrm{Aut}}
\def\Cech{${\rm \check{C}ech}$}
\def\dd{{\rm d}}
\def\mfr{\mathfrak}
\def\ptlums{\ptl^{(\mu,\sigma)}}
\def\ptldms{\ptl_{(\mu,\sigma)}}
\begin{document}

\baselineskip=18pt  
\numberwithin{equation}{section}  
\allowdisplaybreaks  

\thispagestyle{empty}

\vspace*{0.8cm} 
\begin{center}
{\huge Higher-Matter and Landau-Ginzburg Theory of Higher-Group Symmetries}\\

 \vspace*{1.5cm}

 {\large Ruizhi Liu$^{1,2}$, Ran Luo$^3$, Yi-Nan Wang$^{3,4}$}

\vspace*{1.0cm}

{\it $^1$ Department of Mathematics and Statistics,\\
Dalhousie University,NS B3H 1Z9, Canada}\\

\smallskip

{\it  $^2$ Perimeter Institute of Theoretical Physics,\\
Waterloo, ON N2L 2Y5, Canada}

\smallskip
{\it $^3$ School of Physics,\\
Peking University, Beijing 100871, China\\ }

\smallskip

{\it  $^4$ Center for High Energy Physics, Peking University,\\
Beijing 100871, China}

\vspace*{0.8cm}
\end{center}
\vspace*{.5cm}

\noindent
Higher-matter is defined by higher-representation of a symmetry algebra, such as the $p$-form symmetries, higher-group symmetries or higher-categorical symmetries. In this paper, we focus on the cases of higher-group symmetries, which are formulated in terms of the strictification of weak higher-groups. We systematically investigate higher-matter charged under 2-group symmetries, defined by automorphism 2-representations. Furthermore, we construct a Lagrangian formulation of such higher-matter fields coupled to 2-group gauge fields in the path space of the spacetime manifold. We interpret such model as the Landau-Ginzburg theory for 2-group symmetries, and discuss the spontaneous symmetry breaking (SSB) of 2-group symmetries under this framework. Examples of discrete and continuous 2-groups are discussed. Interestingly, we find that a non-split 2-group symmetry can admit an SSB to a split 2-group symmetry, where the Postnikov class is trivialized. We also briefly discuss the strictification of weak 3-groups, weak 3-group gauge fields and 3-representations in special cases.

\newpage


\tableofcontents




\section{Introduction}
\label{sec:intro}

In the recent years, there have been a plethora of approaches to extend the notion of symmetries beyond the group theory paradigm, see the review articles~\cite{Kong:2020cie,Schafer-Nameki:2023jdn,Brennan:2023mmt,Bhardwaj:2023kri,Luo:2023ive}. The most general algebraic structure for generalized symmetries is an $n$-category, where the morphisms correspond to symmetry transformations. If these morphisms are all invertible, the algebraic structure is called an $n$-group. Physically, it  consists of codimension-1, 2, $\dots$, $n$-topological generators of symmetry transformations. Higher-group symmetries have been extensively discussed in physics literature, see e.g. \cite{Gukov:2013zka,Kapustin:2013uxa,Sharpe:2015mja,Cordova:2018cvg,Delcamp:2018wlb,Benini:2018reh,Wan:2018djl,Hsin:2019fhf,Tanizaki:2019rbk,Bhardwaj:2020phs,Hidaka:2020izy,Iqbal:2020lrt,Cordova:2020tij,DelZotto:2020sop,Yu:2020twi,DeWolfe:2020uzb,Gukov:2020btk,Brennan:2020ehu,Brauner:2020rtz,Hsin:2021qiy,Apruzzi:2021vcu,Apruzzi:2021mlh,Bhardwaj:2021wif,Hidaka:2021kkf,DelZotto:2022fnw,Kaya:2022edp,DelZotto:2022joo,Barkeshli:2022edm,Cvetic:2022imb,Nakajima:2022feg,Bhardwaj:2022scy,Bhardwaj:2022lsg,Nakajima:2022jxg,Barkeshli:2022wuz,Cordova:2022qtz,Bhardwaj:2022maz,Barkeshli:2023bta,Bhardwaj:2023wzd,Debray:2023rlx,Bartsch:2023pzl,Moradi:2023dan,Bartsch:2023wvv,Pace:2023kyi,Cvetic:2023pgm,Kang:2023uvm,Apruzzi:2024htg,Armas:2024caa}.

For 2-groups, there are actually two different mathematical formulations that are used in different contexts:

\begin{enumerate}
\item \textit{Strict 2-group}, which is defined as a strict 2-category whose morphisms are invertible and associative. Strict 2-groups commonly appears in the  mathematical formulations of 2-group gauge theories~\cite{Baez0307,BaezSchreiber2005,Baez:2010ya,Gukov:2013zka,Kapustin:2013uxa}. Algebraically, a strict 2-group is described by a crossed module $(G,H,\partial,\rhd)$, as reviewed in section~\ref{sec:Strictify-2}.

\item \textit{Weak 2-group}, which is defined as a weak 2-category (bicategory) whose morphisms are invertible but not associative. A weak 2-group is characterized by a tuple $(\Pi_1,\Pi_2,\rho,\beta)$, where $\Pi_1$ and $\Pi_2$ are interpreted as the physical ``0-form'' and 1-form symmetries, $\rho:\Pi_1\rightarrow $Aut$(\Pi_2)$ is a group action and $\beta\in H_{\rho}^3(B\Pi_1;\Pi_2)$ is called the Postnikov class. When the Postnikov class $\beta=0(\neq 0)$, the weak 2-group is called split (non-split). In condensed matter physics literature, a split 2-group symmetry is also denoted as ``symmetry fractionalization'', see e.g. \cite{Chen:2014wse,Barkeshli:2014cna,Yu:2020twi}. Weak 2-group is the most commonly used notion of 2-groups in the generalized symmetry literature.  
\end{enumerate}

Different strict 2-groups could be equivalent to the same weak 2-group via weak equivalence. On the other hand, for a given weak 2-group $(\Pi_1,\Pi_2,\rho,\beta)$, there are multiple ways to construct a corresponding strict 2-group. Such a choice of strict 2-group (crossed module) is called a \textit{strictification} of the weak 2-group $(\Pi_1,\Pi_2,\rho,\beta)$. 

Although the weak 2-group description is more appropriate to define symmetry generators, we find that the strict 2-group description is more convenient in many physical aspects.

\paragraph{Strictification of gauge fields}

First, when discussing the gauging of 2-group symmetry and constructing a 2-group gauge theory, the gauge transformation rules of strict 2-group gauge theory in terms of the strict gauge fields $(A,B)$ is much simpler than that of weak 2-group gauge theory with weak gauge fields $(a,b)$~\cite{Kapustin:2013uxa}. We present a derivation of strict 2-group gauge fields from the weak 2-group counterpart in section~\ref{sec:Strictify-gauge}. 

\paragraph{Higher-matter}

For an ordinary symmetry group $G$, it acts on matter fields $\phi\in V$ in forms of a \textit{representation} $G\rightarrow \mathrm{Hom}(V,V)$. More abstractly, a representation is defined as a functor $G\rightarrow \mathbf{Vect}$ from the 1-category $G$ into the category of vector spaces $\mathbf{Vect}$. 

As a generalization, the (higher-)matter fields charged under higher-group $\mc{G}$ global or gauge symmetries are described by a \textit{higher-representation}, defined as a functor $\mc{G}\rightarrow n\mathbf{Vect}$ from the higher-group to an $n$-category of $n$-vector spaces (also known as the \textit{higher charge})~\cite{Bhardwaj:2023wzd,Bhardwaj:2023ayw}. The detailed definitions and structures of general higher-representations are open problems in mathematics. For weak 2-groups, the characterization and classification of 2-representations are well studied, see e.g.~\cite{elgueta2007representation,Bhardwaj:2023wzd,Bartsch:2023pzl,Huang:2024wdr}. Nonetheless, we  would still like to ask a question: how to write down higher-matter coupled to higher gauge fields in a more physical, Lagrangian formulation?

It turns out that the higher vector spaces are not convenient for our purpose. In this paper, we instead propose to use the \textit{automorphism 2-representation} of an algebra $A$, to model the 2-representations of 2-groups in the strict description. In particular, we often use the group algebra $A=\mb{C}[K]$ for a group $K$, which can correspond to a collection of ``Wilson loop'' operators physically.

Then there is another question: what is the physical operator on which the 2-group act, in the Lagrangian form? 

Our answer is that the proper way for writing down the coupling of 2-group gauge fields with 2-matter is to work in the path space (loop space) $\mc{P}(M)$ of the spacetime manifold $M$, which we elaborate in section~\ref{sec:Higher-gauge}. Again we would utilize the strict 2-group gauge fields, and realizing the 2-matter in terms of an automorphism 2-representation.

\paragraph{Landau-Ginzburg model for strict 2-group symmetries}

For the physical applications, we establish the Landau-Ginzburg model of 2-group symmetries as a strict 2-gauge theory in the path space in section~\ref{sec:LG-model}. We discuss the SSB (Spontaneous Symmetry Breaking) of 2-group symmetry using this effective field theory. When the 2-group only contains a 1-form symmetry, our model is reduced to the Mean String Field Theory, which is the Landau-Ginzburg model of 1-form symmetries~\cite{Iqbal_2022}.

We apply the formalism to a simple example of non-split weak 2-group $(\mb{Z}_2,\mb{Z}_2,\mathrm{id.},\beta\neq 0\in H^3(B\mb{Z}_2;\mb{Z}_2))$, which has a strictification $(\mb{Z}_4,\mb{Z}_4,\ptl=\times 2,1\rhd a=4-a)$ (note that we use the additive notation for $\mb{Z}_n$ cyclic groups in the paper). Interestingly, we found that in the strict formulation, the non-split 2-group can be SSB to a split 2-group with a trivial Postnikov class, without breaking $\Pi_1=\mb{Z}_2$ or $\Pi_2=\mb{Z}_2$!

We also attempt to discuss the SSB of a continuous non-split weak 2-group $(\mb{Z}_N,U(1),\mathrm{id.},\beta\neq 0\in H^3(B\mb{Z}_N;U(1))$ by the approximation $\lim_{M\to\infty}\mb{Z}_{M} = U(1)$. The SSB behavior of this 2-group Landau-Ginzberg model can break the symmetry to $\Pi_1 = \bbZ_K$, $\Pi_2 = \bbZ_P$, and the Postnikov class is broken to $\beta \ {\rm mod} \ {\rm gcd}(P,K) $. With suitably chosen parameters, we can still realize a symmetry breaking from a non-split 2-group to a split 2-group.

\paragraph{3-groups}

We have also extended beyond the level of 2-category, and explored the cases of 3-group symmetries, whose physical applications were explored in \cite{Hidaka:2020iaz,Hidaka:2020izy,Gukov:2020btk}. In this case, a \textit{weak 3-group} is formulated by the following Postnikov tower\cite{robinson1972moore}:
\be
\begin{array}{ccc}
B^3\Pi_3 &\rightarrow  &B\mb{G}_3\\
& & \downarrow\\
B^2\Pi_2 & \rightarrow & B\mb{G}_2\\
& & \downarrow\\
& & B\Pi_1
\end{array}
\ee
and is parameterized by the tuple $(\Pi_1,\Pi_2,\Pi_3,\rho,\beta,\gamma)$. Where $\rho$ is a group action of $\Pi_1$ and $\beta,\gamma$ are two group cohomology elements:
\be
\beta\in H^3(B\Pi_1,\Pi_2)\ ,\ \gamma\in H^4(B\mb{G}_2,\Pi_3)\,.
\ee

For a general weak 2-category, one can always strictify it into a strict 2-category. But for a general weak 3-category, it can only be strictified into a semi-strict (Gray) 3-category \cite{gordon1995coherence}. Hence for a weak 3-group, one can only strictify it into a \textit{semi-strict 3-group} in general, which is described by the algebraic structure of 2-crossed-module $(G,H,L,\ptl_1,\ptl_2,\rhd,\{-,-\})$. We discuss the definitions of these notions of 3-groups and the strictification in the special cases of either $\Pi_1=0$ or $\Pi_2=0$ in section~\ref{sec:Strictify-3}. We have also shortly discussed the 3-representations and physical aspects of strict 3-gauge theories.

\subsection*{Structure of This Paper}

In pursuit of higher-gauge theory with matter under higher representations, we must clarify the algebraic structure of higher-groups (2-group and 3-group), and work out how the weak-category language of predecessors aligns with the strict-category language we predominantly utilize.

In section \ref{sec:Strictify-2}, we briefly review the basic algebraic structure of 2-groups in both strict and weak languages, and discuss their transitions. We also explicitly construct both discrete and continuous examples. More examples can be found in appendix \ref{app:more-strictify}.

In section \ref{sec:Strictify-3}, we investigate the algebraic structure of semi-strict and weak 3-groups, and construct algebraic models for strictification of weak 3-groups in the special cases of $\Pi_2 = 0$ or $\Pi_1 = 0$.

In section \ref{sec:Strictify-gauge}, we present a dictionary of gauge fields, gauge transformations and observables between the weak-/strict- categorical languages in 2-group gauge theories. We also build the strictification of 3-gauge theories.

With the algebraic structures and categorical languages clarified, we can define higher-matter through automorphism higher-representations of higher-groups, and fit them into gauge theories.

In section \ref{sec:Higher-rep}, we introduce automorphism 2-representation with explicit examples, and discuss its relation with previous definition of higher-representation. We also extend the automorphism 2-representation to define 3-representation for semi-strict 3-groups with $\Pi_1 = 0$.

In section \ref{sec:Higher-gauge}, we build 2-group gauge theories with 2-matter using the path space formalism developed in \cite{BaezSchreiber2005}. We give explicit construction of 2-matter in both continuous and discrete cases using fields in the path space, and discuss $n$-matter forming brane fields.

In section \ref{sec:LG-model}, with all the previously defined 2-matter, we construct the Lagrangian Landau-Ginzberg theory of 2-groups in both continuous and discrete languages, and discuss their spontaneously symmetry breaking patterns.

\section{Strictification of 2-groups}
\label{sec:Strictify-2}

\subsection{A review of 2-groups}
\label{sec:2-group}

As mentioned in section~\ref{sec:intro}, there are two notions of 2-groups that are used in different references. We review the relation between strict 2-groups and weak 2-groups in this section.

We first write down the formulation of a \textit{strict 2-group} in terms of a crossed module $(G,H,\partial,\rhd)$. $G$ and $H$ are two groups, the map $\partial: H\rightarrow G$ is a group homomorphism, i.e. $\ptl (h_1 h_2)=(\ptl h_1)(\ptl h_2)$ for all $h_1,h_2\in H$, and $\rhd: G\rightarrow \Aut(H)$ is an action of $G$ on $H$. In particular, for any $g\in G$, $g\rhd $ induces a group automorphism of $H$, and it is required that $g\rhd 1_H=1_H$. 

Furthermore, for all $g\in G$, $h,h'\in H$, the following equations hold:
\be
\label{2-group-gh1}
\partial(g\rhd h)=g(\partial h) g^{-1}
\ee
\be\label{Peiffer_identity}
(\partial h)\rhd h'=h h' h^{-1}\,.
\ee
If we define the action of $G$ on itself $\rhd: G\times G\rightarrow G$ as
\be
g\rhd g'=g  g' g^{-1}\,,
\ee
(\ref{2-group-gh1}) can be rewritten as
\be
\partial(g\rhd h)=g\rhd (\partial h)
\ee

Following \cite{Kapustin:2013uxa}, a better way to classify a crossed module $(G,H,\partial,\rhd)$ is by the equivalent classes of strict 2-groups, which preserves the groups $\Pi_1=G/\im(\partial)$ and $\Pi_2=\ker(\partial)$ instead of $G$ and $H$. 

One defines the quadruple $(\Pi_1,\Pi_2,\rho,\beta)$, which is called a \textit{weak 2-group} out of the crossed module $(G,H,\partial,\rhd)$. The action  of $\rho:\Pi_1\times \Pi_2\rightarrow\Pi_2$ is naturally induced from $\rhd: G\times H\rightarrow H$. The non-trivial element $\beta\in H^3_\rho(B\Pi_1;\Pi_2)\cong H^3_{\text{grp},\rho}(\Pi_1;\Pi_2)$ is called the Postnikov class of the weak 2-group. If the group action $\rho$ is trivial, $\beta$ is an element of the group cohomology, otherwise it is an element of the twisted group cohomology with action $\rho$.

Now we want to ask the following question: given a weak 2-group $(\Pi_1,\Pi_2,\rho,\beta)$, how to  construct its corresponding strict 2-group $(G,H,\partial,\rhd)$ (which is not unique). If the Postnikov class $\beta$ is trivial, we can simply construct the strict 2-group as
\be
G=\Pi_1\ ,\ H=\Pi_2\ ,\ \partial h=1_G\ (\forall h\in H)\ ,\ \rhd=\rho\,.
\ee
If $\beta$ is a non-trivial element, the answer is not obvious, and we need to construct an exact sequence accompanied with an action $\rhd: G\rightarrow \Aut(H)$:
\begin{equation}
\label{2-group-seq}
    1\to \Pi_2\stackrel{i}{\longrightarrow} H\stackrel{\ptl}{\longrightarrow} G\stackrel{p}{\longrightarrow}\Pi_1\to 1\,,
\end{equation}
such that it gives the correct group cohomology element $\beta\in H^3(B\Pi_1;\Pi_2)$.

After the strictification, the pullback of the Postnikov class is trivial, $p^*(\beta)=0$, see appendix~\ref{app:trivial-Post} for a proof.

 We use the procedure and notations in \cite{brown2012cohomology} to compute the Postnikov class $\beta$. First we choose a cross-section function $s:\Pi_1\rightarrow G$, which satisfies $p\circ s=\text{id.}$. The failure of group associativity for $s(g)$ is encoded in a map $f:\Pi_1\times\Pi_1\rightarrow \ker (p)$, defined as
\be
s(g)s(h)=s(gh)f(g,h)\,.
\ee
The function $f(g,h)$ satisfies the 3-cocycle condition
\be
s(g)f(h,k)s(g)^{-1}f(g,hk)=f(g,h)f(gh,k)\,.
\ee

Now we uplift $f$ to a function $F:\Pi_1\times\Pi_1\rightarrow H$ by requiring that $\ptl (F(g))\equiv f(g)$ for all $g\in\Pi_1$. The failure of cocycle condition for $F(g,h)$ can be written in terms of
\be
\label{cocycle-F}
(s(g)\rhd F(h,k))F(g,hk)=i(\beta(g,h,k))F(g,h)F(gh,k)\,,
\ee
where $\beta:\Pi_1\times\Pi_1\times\Pi_1\rightarrow\Pi_2$ is the desired Postnikov class in $H^3(B\Pi_1;\Pi_2)$\,.

It is worth noting that if two exact sequences of strictification with the same $(\Pi_1,\Pi_2,\rho)$ can weave a commutative diagram:
\be
\label{2-group-equiv}
\begin{tikzcd}
	&& {H_1} & {G_1} \\
	1 & {\Pi_2} & {} && {\Pi_1} & 1 \\
	&& {H_2} & {G_2}
	\arrow[from=2-1, to=2-2]
	\arrow["{i_1}", from=2-2, to=1-3]
	\arrow["{i_2}"', from=2-2, to=3-3]
	\arrow["{\partial_1}",from=1-3, to=1-4]
	\arrow["{\partial_2}",from=3-3, to=3-4]
	\arrow["{t_H}"', from=1-3, to=3-3]
	\arrow["{t_G}"', from=1-4, to=3-4]
	\arrow["p_1",from=1-4, to=2-5]
	\arrow["p_2",from=3-4, to=2-5]
	\arrow[from=2-5, to=2-6]
\end{tikzcd}
\ee
then the two strict two groups renders the same Postnikov class $\beta$, and are thus equivalent up to the classification by $(\Pi_1,\Pi_2,\rho,\beta)$\cite{brown2012cohomology}.

We have to emphasize the point that middle arrows above (i.e. $t_{H},\,t_{G}$) are not necessarily isomorphisms. In this case, there is no ``inverse'' of above morphism between 2-groups in general. In mathematics, this equivalence relation is  called weak equivalence.

As a consequence, given a weak 2-group, we can strictify it into many different crossed modules. Although these crossed modules have the same Postnikov class, there might be no weak equivalence between them.

For the later application to gauge theory, to construct a sensible 2-bundle, we require that $\im(\partial)\subset Z(G)$ ($Z(G)$ is the center of $G$), see (\ref{1FormGaugeTransform}). 

\subsection{A general procedure of strictification}\label{general_2_strictification}

In this section, we describe a general procedure of strictification that works for (finitely generated) abelian groups as well as $U(1)$. This construction is a slight generalization of \cite{144620}.

Let $c\in H^{3}(B\Pi_{1};\Pi_{2})$ be a cocycle that is additive w.r.t the first argument. Let us denote $A=\mathrm{Hom}(\Pi_{1};\Pi_{2})$ in this section. Thus, $\omega$ determines a 2-cocycle $Z\in H^{2}(B\Pi_{1};A)$, that is
\be
Z(g_{2},g_{3})(g_{1}):=c(g_{1},g_{2},g_{3})\,.
\ee
Hence $Z$ determines a central extension
\be
1\to A \to G\to \Pi_{1}\to 1\,.
\ee
Locally, $G$ can be described by a pair $(\gamma,g)\in A\times \Pi_1$. However, the group operation of $G$ is twisted by $Z$ as
\be
(\gamma,g)\cdot(\gamma',g')=(\gamma\,\gamma'Z(g,g'),gg')\,.
\ee
In the next step, we fix an extension
\be
1\to\Pi_{2}\to H\to A\to 1\,.
\ee
It turns out that the choice of $H$ is rather arbitrary, and we can simply choose  $H=\Pi_{2}\times A$ and use the trivial extension\footnote{Other choices are also possible, but following steps must be modified as well.}. Now the decomposition $H\to A\to G$ gives a desired boundary map $\partial$:
\be
\partial (b,\gamma)=(\gamma,0)\,.
\ee
$0$ is the identity element of $\Pi_1$. The group action of $G$ on $H$ can be described as ($\Pi_2$ is additive)
\be
(\gamma,g)\triangleright (b,\gamma')=(b+\gamma'(g),\gamma')\,.
\ee

Now we show that above construction indeed gives our desired Postnikov class. It's easy to see that
\be
F(g_{1},g_{2})=(0,Z(g_{1},g_{2}))\,.
\ee
Now we have
\be
\beta(g_{1},g_{2},g_{3})=(Z(g_{2},g_{3})(g_{1}),\delta Z(g_{1},g_{2},g_{3}))=(\beta(g_{1},g_{2},g_{3}),0)
\ee
where we used the fact $\delta Z=0$ (cocycle condition of $Z$).

So the only question is, how general it is to assume that cocycle $\beta\in H^{3}(B\Pi_{1};\Pi_{2})$ is additive (w.r.t. at least one argument). Actually, it is indeed the case provided $\Pi_{1}$ is finitely generated abelian groups or their product of $U(1)$. This can be deduced by investigating the structure of cohomology ring and the additivity of generators.

In the following subsections, we present three examples of weak 2-groups and discuss their strictifications. We show more examples in appendix~\ref{app:more-strictify}.

\subsection{Example: $\Pi_1=\Pi_2=\mb{Z}_2$}
\label{sec:Z2Z2}

The first example is a case of discrete 2-group, with $\Pi_1=\mb{Z}_2$, $\Pi_2=\mb{Z}_2$. Since $H^3(B\mb{Z}_2;\mb{Z}_2)=\mb{Z}_2$, there is only one non-trivial element $\beta\neq 0\in H^3(B\mb{Z}_2;\mb{Z}_2)$. 

Because Aut$(\mb{Z}_2)=1$, the action of $\Pi_1=\mb{Z}_2$ on $\Pi_2=\mb{Z}_2$ can only be taken as the trivial one.

For the strictification, let us choose $G=H=\mb{Z}_4$ and the following exact sequence:
\begin{equation}
    1\to \mb{Z}_2\stackrel{i}{\longrightarrow} \mb{Z}_4\stackrel{\ptl}{\longrightarrow} \mb{Z}_4\stackrel{p}{\longrightarrow}\mb{Z}_2\to 1
\end{equation}
The maps are
\be
i:\ a\rightarrow 2a\,,\quad \ptl:\ a\rightarrow 2a\,,\quad p:\ (\text{mod}\ 2)\,.
\ee
Note that we are using the additive notation for the group elements of $\mb{Z}_n$.

Besides the exact sequence, we also need to specify the action $\rhd:G\times H\rightarrow H$, which trivializes after restricted to $\rhd:\Pi_1\times\Pi_2\rightarrow\Pi_2$\,. 

We show that there are two different choices of $\rhd$, which realizes the two different elements of $H^3(B\mb{Z}_2;\mb{Z}_2)=\mb{Z}_2$.

 In the computation of the Postnikov class $\beta$, we choose the cross-section $s:\Pi_1\rightarrow G$ as the identity map $s(g)\equiv g$. Then the function $f:\Pi_1\times\Pi_1\rightarrow \ker (p)$ defined as
\be
s(g)s(h)=f(g,h)s(gh)
\ee
takes the following values:
\be
f(0,0)=f(0,1)=f(1,0)=0\ ,\ f(1,1)=2\,.
\ee
After lifting $f$ to a function $F:\Pi_1\times\Pi_1\rightarrow H$ by requiring that $\ptl (F(g,h))\equiv f(g,h)$, we can choose
\be
F(0,0)=F(0,1)=F(1,0)=0\ ,\ F(1,1)=1\,.
\ee
The failure of cocycle condition for $F(g,h)$ can be written in terms of
\be
(s(g)\rhd F(h,k))F(g,hk)=i(\beta(g,h,k))F(g,h)F(gh,k)\,,
\ee
where $\beta:\Pi_1\times\Pi_1\times\Pi_1\rightarrow\Pi_2$ is the desired Postnikov class in $H^3(B\Pi_1;\Pi_2)$\,.

Now there are the following two choices of $\rhd$:
\begin{enumerate}
\item Trivial action: $a\rhd b=b$ for any $a\in G,b\in H$. In this case, one can check that 
\be
F(h,k)F(g,hk)=F(g,h)F(gh,k)
\ee
holds for every $g,h,k\in\Pi_1$. Hence we have a trivial Postnikov class $\beta=0\in H^3(B\mb{Z}_2;\mb{Z}_2)=\mb{Z}_2$.
\item Non-trivial action: $a\rhd b=(2a+1)b\ (\text{mod}\ 4)$. One can check that the action is indeed trivial for $b\in \im(i)\subset H$\,.

In this case, we can compute the evaluation table for $\beta(g,h,k)$:
\be
\begin{array}{c|ccc}
\beta(g,h,k) & g & h & k\\
\hline
0 & 0 & 0 & 0\\
0 & 0 & 0 & 1\\
0 & 0 & 1 & 0\\
0 & 0 & 1 & 1\\
0 & 1 & 0 & 0\\
0 & 1 & 0 & 1\\
0 & 1 & 1 & 0\\
1 & 1 & 1 & 1\\
\end{array}
\ee
Hence $\beta(g,h,k)$ leads to the non-trivial Postnikov class $\beta=1\in H^3(B\mb{Z}_2;\mb{Z}_2)=\mb{Z}_2$.

\end{enumerate}

\subsection{Example: $\Pi_1=\mb{Z}_N$, $\Pi_2=U(1)$}
\label{sec:pi1znpi2u1}
We take an example where $\Pi_1$ is discrete and $\Pi_2$ is continuous. When $\Pi_1=\mb{Z}_N$, $\Pi_2=U(1)$, we can choose the following exact sequence
\begin{equation}
    1\to U(1)\stackrel{i}{\longrightarrow} U(1)\times\mb{Z}_N\stackrel{\ptl}{\longrightarrow} \mb{Z}_N.\mb{Z}_N\stackrel{p}{\longrightarrow}\mb{Z}_N\to 1
\end{equation}
We use additive notations for $\mb{Z}_N$ and multiplicative notations for $U(1)$. $G=\mb{Z}_N.\mb{Z}_N$ denotes a generally non-split group extension of $\mb{Z}_N$ by $\mb{Z}_N$, and its group elements are in form of $(a,b)$, $(0\leq a,b<N)$. The group elements of $H=U(1)\times\mb{Z}_N$ are $(e^{2\pi ia},b)$, $(0\leq a<1\ ,\ 0\leq b<N)$.

The maps are
\be
i:\ e^{2\pi ia}\rightarrow (e^{2\pi ia},0)\,,\quad \ptl:(e^{2\pi ia},b)\rightarrow (b,0)\,,\quad p:(a,b)\rightarrow b\,.
\ee
The group action $\rhd:G\rightarrow\Aut(H)$ is
\be
(a,b)\rhd (e^{2\pi i c},d)=(e^{2\pi i(c+bd/{N})},d)\,.
\ee

The Postnikov class element
\be
m\in H^3(B\mb{Z}_N;U(1))=\mb{Z}_N
\ee
is encoded in the group operation on $G=\mb{Z}_N.\mb{Z}_N$, which takes the form of
\be
(a,b)+(c,d)=\left\{\begin{array}{rl} (a+c,b+d)& (b+d<N)\\
(a+c+m,b+d) & (\text{otherwise})
\end{array}
\right.\,.
\ee
Hence for different Postnikov class elements, in order to strictify the weak 2-group, we should choose different groups $G$ with $N^2$ elements. If $m=0$, the 2-group is split and we have $G=\mb{Z}_N\times\mb{Z}_N$.

\subsection{Example: $\Pi_1=\Pi_2=U(1)$}
\label{sec:U1U1}

We also discuss the case of continuous weak 2-group of $\Pi_1=\Pi_2=U(1)$. This case was extensively studied in \cite{Cordova:2018cvg}. where the weak 2-group symmetry $(U(1),U(1),1,m)$ is denoted as $U(1)^{(0)}\times_{m}U(1)^{(1)}$. The element of $H^3(BU(1);U(1))=\mb{Z}$ is characterized by an integer $m$. We can use the following exact sequence
\begin{equation}
    1\to U(1)\stackrel{i}{\longrightarrow} U(1)\times\mb{Z}\stackrel{\ptl}{\longrightarrow} \mb{Z}. U(1)\stackrel{p}{\longrightarrow}U(1)\to 1
\end{equation}
We use additive notations for $\mb{Z}$ and multiplicative notations for $U(1)$.

The maps are
\be
i:\ e^{2\pi ia}\rightarrow (e^{2\pi ia},0)\,,\quad \ptl:(e^{2\pi ia},b)\rightarrow (b,1)\,,\quad p:(a,e^{2\pi ib})\rightarrow e^{2\pi ib}\,.
\ee
The group action $\rhd:G\rightarrow\Aut(H)$ is
\be
(a,e^{2\pi ib})\rhd (e^{2\pi ic},d)=(e^{2\pi i(c+bd)},d)\,.
\ee

The Postnikov class element
\be
m\in H^3(BU(1);U(1))=\mb{Z}
\ee
is encoded in the group operation on $G=\mb{Z}. U(1)$, which takes the form of
\be
(a,e^{2\pi ib})+(c,e^{2\pi id})=\left\{\begin{array}{rl} (a+c,e^{2\pi i(b+d)})& (b+d<1)\\
(a+c+m,e^{2\pi i(b+d)}) & (\text{otherwise})
\end{array}\right.\,.
\ee

\section{Strictification of 3-groups}
\label{sec:Strictify-3}

\subsection{Semi-strict 3-group as a 2-crossed-module}
\label{sec:2-crossed-module}

In this section we discuss the strictification of weak 3-groups. Let us review the notion of a \textit{semi-strict} 3-group in terms of a 2-crossed-module: $(G,H,L,\ptl_1,\ptl_2,\rhd,\{-,-\})$, which has the following components (see e.g. the appendix of \cite{Hidaka:2020izy}):
\begin{enumerate}
\item{$G$, $H$ and $L$ are groups.}
\item{$\ptl_1: H\rightarrow G$ and $\ptl_2: L\rightarrow H$ are group homomorphisms, which means that $\ptl_1 (h_1 h_2)=(\ptl_1 h_1)(\ptl_1 h_2)$ for all $h_1,h_2\in H$ and $\ptl_2 (l_1 l_2)=(\ptl_2 l_1)(\ptl_2 l_2)$ for all $l_1,l_2\in L$. Furthermore, they satisfy
\be
\ptl_1\circ\ptl_2(l)=1_G
\ee
for all $l\in L$.
}
\item{There are group actions $\rhd: G\times G\rightarrow G$, $\rhd: G\times H\rightarrow H$, $\rhd: G\times L\rightarrow L$, where the first one is
\be
g\rhd g'=g g' g^{-1}
\ee
for all $g,g'\in G$.
}
\item{$G$-equivariance of $\ptl_1$, $\ptl_2$: for all $g\in G$, $h\in H$, $l\in L$,
\be
\ptl_1(g\rhd h)=g\rhd(\ptl_1 h)\ ,\ \ptl_2(g\rhd l)=g\rhd (\ptl_2 l)\,.
\ee
}
\item{Finally, there is a map called Peiffer lifting: $\{-,-\}:H\times H\rightarrow L$, which satisfies for any $h_{1,2,3}\in H$, $l,l_1,l_2\in L$:
\be
\ba
\label{Peiffer}
\ptl_2\{h_1,h_2\}&=h_1 h_2 h_1^{-1}(\ptl_1 h_1)\rhd h_2^{-1}\,,\\
g\rhd\{h_1,h_2\}&=\{g\rhd h_1,g\rhd h_2\}\,,\\
\{\ptl_2 l_1,\ptl_2 l_2\}&=l_1 l_2 l_1^{-1} l_2^{-1}\,,\\
\{h_1 h_2,h_3\}&=\{h_1,h_2 h_3 h_2^{-1}\}(\ptl_1 h_1)\rhd\{h_2,h_3\}\,,\\
\{h_1,h_2 h_3\}&=\{h_1,h_2\}\{h_1,h_3\}\{\ptl_2\{h_1,h_3\}^{-1},(\ptl_1 h_1)\rhd h_2\}\,,\\
\{\ptl_2 l,h\}\{h,\ptl_2 l\}&=l(\ptl_1 h)\rhd l^{-1}\,.
\ea
\ee
}
\end{enumerate}

A 3-group $(G,H,L,\ptl_1,\ptl_2,\rhd,\{-,-\})$ contains a 2-group $(H,L,\ptl_2,\rhd')$ as a subgroup, where the action $\rhd'$ is defined as
\be\ba\label{eq:2-subgroup}
h\rhd' h'&=h h' h^{-1}\\
h\rhd' l&=l\{\ptl_2 l^{-1},h\}
\ea\ee
for $h,h'\in H$ and $l\in L$.

If we want to realize the 0-form, 1-form and 2-form symmetry of a physical theory using a 2-crossed-module. It is natural to let $H$ and $L$ be abelian, which simplifies a lot of axioms in the previous section. Furthermore, similar to the classification of 2-groups, the physical symmetry group is given by a \textit{weak 3-group}, with the group components
\be
\label{weak-3-group-Pi}
\Pi_1=G/\im(\ptl_1)\ ,\ \Pi_2=\ker{(\ptl_1)}/\im(\ptl_2)\ ,\ \Pi_3=\ker(\ptl_2)\,.
\ee
The maps $\ptl_1$ and $\ptl_2$ becomes trivial when acting on $\Pi_2$ and $\Pi_3$: $\ptl_1 h=1_{\Pi_1}$ for any $h\in\Pi_2$, $\ptl_2 l=1_{\Pi_2}$ for any $l\in\Pi_3$.

In general, the classification of weak 3-groups is based on the following Postnikov tower\cite{robinson1972moore}:
\be
\label{3Postnikov}
\begin{array}{ccc}
B^3\Pi_3 &\rightarrow  &B\mb{G}_3\\
& & \downarrow\\
B^2\Pi_2 & \rightarrow & B\mb{G}_2\\
& & \downarrow\\
& & B\Pi_1
\end{array}
\ee
Here $\mb{G}_3$ is the structure group of the whole 3-group, and $\mb{G}_2\subset \mb{G}_3$ is the structure group involving $\Pi_1$ and $\Pi_2$.

A general weak 3-group $(\Pi_1,\Pi_2,\Pi_3,\rho,\beta,\gamma)$ is characterized by two group cohomology elements
\be
\beta\in H^3(B\Pi_1;\Pi_2)\ ,\ \gamma\in H^4(B\mb{G}_2;\Pi_3)\,.
\ee
The group homology $H^3(B\Pi_1;\Pi_2)$ characterizes the equivalence class of the following exact sequence:
\be
0\rightarrow \Pi_2\rightarrow H/(\im \ptl_2)\overset{\ptl_1}{\rightarrow} G\rightarrow \Pi_1\rightarrow 0\,.
\ee
This sequence is exact because $\ptl_1\circ\ptl_2(l)=1_G$ always holds for any $l\in L$, hence the image of the map $\ptl_1:H/(\im\ptl_2)\rightarrow G$ is exactly the same as the image of the map $\ptl_1:H\rightarrow G$.

\subsection{Weak 3-groups with $\Pi_{2}=0$}
\label{sec:3-grouppi2=0}

The strictification of a general weak 3-group is involved. Here we first discuss the special case where $\Pi_2=0$. Physically, this case can arise from gauging a finite subgroup of 0-form symmetry in 4d in presence of mixed 't Hooft anomaly~\cite{Tachikawa:2017gyf}. In this case, as $\ker{(\ptl_1)}/\im(\ptl_2)=\Pi_2=0$ in (\ref{weak-3-group-Pi}), the following sequence is exact:
\be
1\rightarrow \Pi_3\rightarrow L\overset{\ptl_2}{\rightarrow} H \overset{\ptl_1}{\rightarrow} G\rightarrow \Pi_1\rightarrow 1\,.
\ee
Thus the equivalence class is characterized by a degree 4 cohomology class $c\in H^4(B\Pi_1;\Pi_3)$. 
We can also assume $c$ is additive with respect to the first argument, which holds if $\Pi_{1}$ is product of finitely generated abelian groups and $U(1)$-factors.
Let us again denote $A:=\text{Hom}(\Pi_{1};\Pi_{3})$. Then we have a degree 3 cocycle $Z\in H^{3}(B\Pi_{1};A)$, which is given by
\be
\label{Pi2-0-Zc}
Z(g_{2},g_{3},g_{4})(g_{1}):=c(g_{1},g_{2},g_{3},g_{4})\,.
\ee
Now $Z$ determines a 2-group, and we can translate it into a crossed module as in the section \ref{general_2_strictification}:
\be
\label{Pi2-0-2-group}
1\to A\to H\stackrel{\partial_{1}}{\longrightarrow} G\stackrel{p}{\longrightarrow}\Pi_{1}\to 1\,.
\ee
In addition, we fix a trivial exact sequence
\be
\label{Pi2-0-factor}
1\to \Pi_{3}\to L\to A\to 1
\ee
where $L:=\Pi_{3}\times A$. Similarly, the $G$-action on $L$ factorizes through $\Pi_{1}$. Explicitly, if $g\in G$
\be
\label{Pi2-0-Laction}
g\triangleright (b,\gamma):=(b+\gamma(p(g)),\gamma)
\ee
where $(b,\gamma)\in L=\Pi_{3}\times A$ and $\Pi_3$ is additive. 

At last, the boundary map $\partial_{2}$ is given by composition $L\to A\to H$. Hence we obtain a 2-crossed module with trivial Peiffer lifting. Triviality of Peiffer lifting is completely due to $\Pi_{2}=0$.

It is easy to verify that this 2-crossed module has the desired Postnikov class. Apparently, similar procedure works
for $H^{n+2}(B\Pi_{1};\Pi_{n+1})$ provided $\Pi_{1}$ satisfies our assumption (i.e. it is a product of finitely generated abelian group and $U(1)$ factors).

\paragraph{Example: $\Pi_1=\mb{Z}_2$, $\Pi_2=0$, $\Pi_3=\mb{Z}_2$}

In this case, $A=\mathrm{Hom}(\mb{Z}_2;\mb{Z}_2)=\mb{Z}_2$, and the elements of $A$ are
\be
\ba
f_0\in A:&\quad f_0(x):=0\,,\cr
f_1\in A:&\quad f_1(x):=x\,.
\ea
\ee
Let us construct the crossed module (\ref{Pi2-0-2-group}) which is a strictification of 2-group $(\Pi_1=\mb{Z}_2,A=\mb{Z}_2,0,Z)$:
\be
1\to \mb{Z}_2\to \mb{Z}_4\stackrel{\partial_{1}}{\longrightarrow} \mb{Z}_4\stackrel{p}{\longrightarrow}\mb{Z}_2\to 1\,.
\ee
As in the strictification of 2-group in section~\ref{sec:Z2Z2}, we take $G=H=\mb{Z}_4$, $\ptl_1:a\rightarrow 2a$, $p:(\textrm{mod}\ 2)$. Now there are two different group actions $\rhd:G\rightarrow\Aut(H)$, leading to different $Z\in H^3(B\Pi_1;A)$ and finally different Postnikov classes $c\in H^4(B\Pi_1;\Pi_3)$ for the weak 3-group:
\begin{enumerate}
\item Trivial group action $g\rhd h=h$, giving rise to the trivial element $0\in H^3(B\Pi_1;A)=\mb{Z}_2$. In this case, from (\ref{Pi2-0-Zc}), since $Z(g_2,g_3,g_4)=f_0$ for all $g_2,g_3,g_4\in\Pi_1$, we have $c(g_1,g_2,g_3,g_4)=0$ for all $g_1,g_2,g_3,g_4\in\Pi_1$, hence the weak 3-group $(\mb{Z}_2,1,\mb{Z}_2,\mathrm{id.},0,c_0)$ has a trivial Postnikov class $c=c_0\in H^4(B\Pi_1;\Pi_3)=\mb{Z}_2$.
\item Non-trivial group action $g\rhd h=(2g+1)h$, giving rise to the non-trivial element $\beta\in H^3(B\Pi_1;A)=\mb{Z}_2$. In this case, we have 
\be
Z(g_2,g_3,g_4)=\left\{\begin{array}{rl} f_1 & g_2=g_3=g_4=1\\
f_0 & \mathrm{other\ cases}\end{array}\right.\,.
\ee
Hence 
\be
c(g_1,g_2,g_3,g_4)=\left\{\begin{array}{rl} 1 & g_1=g_2=g_3=g_4=1\\
0 & \mathrm{other\ cases}\end{array}\right.\,.
\ee
This corresponds to the weak 3-group $(\mb{Z}_2,1,\mb{Z}_2,\mathrm{id.},0,c_1)$ with non-trivial Postnikov class $c_1\in  H^4(B\Pi_1;\Pi_3)=\mb{Z}_2$. Note that in this case $B\mb{G}_2\equiv B\Pi_1$ since $\Pi_2=0$.
\end{enumerate}
Finally, for either of these two cases, from (\ref{Pi2-0-factor}) we have $L=\mb{Z}_2\times\mb{Z}_2$, and the group action $\rhd:G\rightarrow \Aut(L)$ is defined as (\ref{Pi2-0-Laction}):
\be
g\rhd (a,b)=(a+f_b(g\ \mathrm{mod}\ 2),b)\,.
\ee

\subsection{An algebraic model of weak 3-groups with $\Pi_{1}=0$}
\label{sec:3-group-0Pi1}

In this section, we provide an algebraic model of weak 3-groups with $\Pi_1=0$, which is more convenient for computations.

Before that, we need to recall the notion of quadratic functions/forms.
\begin{definition}
 Given $A$ and $B$ are abelian groups, a function on $A$ (valued in $B$) is said to be quadratic if there is a bilinear function $F:A\times A\to B$ such that
 \be\label{bilinear_form}
 f(x)=F(x,x),\quad x,y\in A\,.
 \ee
\end{definition}

The key point is following result by Eilenberg and Maclane\cite{Eilenberg_Maclane_II}
\begin{theorem}
    The cohomology class $H^{4}(B^{2}\Pi_{2};\Pi_{3})$ is in one to one correspondence with $\Pi_{3}$-valued quadratic functions on $\Pi_{2}$.
\end{theorem}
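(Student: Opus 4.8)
The plan is to identify $B^{2}\Pi_{2}$ with the Eilenberg--MacLane space $K(\Pi_{2},2)$ and to read off $H^{4}(B^{2}\Pi_{2};\Pi_{3})$ from the low-degree integral homology of this space via the universal coefficient theorem, the essential point being that $H_{4}(K(\Pi_{2},2);\mb{Z})$ is Whitehead's universal quadratic group $\Gamma(\Pi_{2})$. Concretely, I would first record that $K(\Pi_{2},2)$ is simply connected, that $H_{2}(K(\Pi_{2},2);\mb{Z})\cong\Pi_{2}$ by Hurewicz, that $H_{3}(K(\Pi_{2},2);\mb{Z})=0$, and that, by the classical computation of Eilenberg and MacLane,
\be
H_{4}(K(\Pi_{2},2);\mb{Z})\;\cong\;\Gamma(\Pi_{2})\,,
\ee
where $\Gamma$ is characterized by the universal property that, for every abelian group $B$, the set $\mathrm{Hom}(\Gamma(\Pi_{2}),B)$ is naturally isomorphic to the group of $B$-valued quadratic functions on $\Pi_{2}$, the universal one being $\gamma:\Pi_{2}\to\Gamma(\Pi_{2})$. (For orientation, $\Gamma(\mb{Z})=\mb{Z}$ with $\gamma(n)=n^{2}$.)

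Granting this input, the statement is immediate. Since $H_{3}(K(\Pi_{2},2);\mb{Z})=0$, the $\mathrm{Ext}$ term in the universal coefficient sequence vanishes, so
\be
H^{4}(B^{2}\Pi_{2};\Pi_{3})\;\cong\;\mathrm{Hom}\big(H_{4}(K(\Pi_{2},2);\mb{Z}),\,\Pi_{3}\big)\;\cong\;\mathrm{Hom}(\Gamma(\Pi_{2}),\Pi_{3})\,,
\ee
and the right-hand side is, by the defining property of $\Gamma$, exactly the group of $\Pi_{3}$-valued quadratic functions on $\Pi_{2}$. To make the correspondence explicit I would describe the two maps: a class $[\omega]\in H^{4}(B^{2}\Pi_{2};\Pi_{3})$ is sent to $q_{\omega}(x):=\iota_{x}^{*}[\omega]\in H^{4}(\mathbb{CP}^{\infty};\Pi_{3})\cong\Pi_{3}$, where $\iota_{x}:\mathbb{CP}^{\infty}=K(\mb{Z},2)\to K(\Pi_{2},2)$ classifies $x\in\Pi_{2}$; the accompanying symmetric bilinear form is obtained from the mixed term of $\mu^{*}[\omega]$ under the group multiplication $\mu:K(\Pi_{2},2)\times K(\Pi_{2},2)\to K(\Pi_{2},2)$, using the K\"unneth formula together with $H^{2}(K(\Pi_{2},2);-)\cong\mathrm{Hom}(\Pi_{2},-)$. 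One then checks that $q_{\omega}$ is quadratic, that it depends only on the class $[\omega]$, and that the resulting assignment is bijective and natural in both $\Pi_{2}$ and $\Pi_{3}$ --- all routine once the homology of $K(\Pi_{2},2)$ is in hand.

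The only real obstacle is that homological input itself: establishing $H_{3}(K(\Pi_{2},2);\mb{Z})=0$ and $H_{4}(K(\Pi_{2},2);\mb{Z})\cong\Gamma(\Pi_{2})$ is the substantive part, and it is precisely Eilenberg and MacLane's determination of $H_{*}(\Pi,n)$ in low degrees; I would cite this rather than reprove it. If a self-contained derivation is wanted, the alternative is to compute directly with normalized cochains on the bar model of $B^{2}\Pi_{2}$: a normalized $4$-cocycle is determined, modulo coboundaries, by its values on the cells assembled from one and from two generating $2$-cells, that is, by a function $q:\Pi_{2}\to\Pi_{3}$ together with a pairing $\Pi_{2}\times\Pi_{2}\to\Pi_{3}$; the $4$-cocycle identity then forces $q(-x)=q(x)$ and bi-additivity of $(x,y)\mapsto q(x+y)-q(x)-q(y)$, while replacing the cocycle by a cohomologous one does not change the induced quadratic function. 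This reproduces the same correspondence, but it is a substantially longer and more delicate bookkeeping exercise, so the $\Gamma$-functor argument is the efficient one.
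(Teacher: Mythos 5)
Your proposal is correct, and it follows essentially the same route as the paper, which offers no independent proof but simply cites Eilenberg--MacLane's computation of $H_{*}(K(\Pi,2))$ --- precisely the input ($H_{3}=0$, $H_{4}\cong\Gamma(\Pi_{2})$) that you combine with the universal coefficient theorem and the universal property of Whitehead's $\Gamma$-functor. The only point worth flagging is that the correspondence is with quadratic functions in Whitehead's sense ($f(-x)=f(x)$ with $f(x+y)-f(x)-f(y)$ biadditive), which is slightly broader than the paper's stated definition $f(x)=F(x,x)$ for bilinear $F$ (e.g.\ $\Pi_{2}=\mb{Z}_2$, $\Pi_{3}=\mb{Z}_4$); your formulation via $\mathrm{Hom}(\Gamma(\Pi_{2}),\Pi_{3})$ is the accurate one.
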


Given an abelian group $A$, there is an associated group $\Gamma(A)$ called the universal quadratic group\cite{Benini:2018reh} (unique) with a (non-unique) map $\gamma:A\to\Gamma(A)$ such that for any quadratic function $f$ on $A$ (values in $B$), there is a unique homomorphism $\tilde{f}:\Gamma(A)\to B$ with $f=\tilde{f}\circ\gamma$.

In this sense, we only have to consider the universal quadratic groups.

We have following theorem:
\begin{theorem}
    \begin{enumerate}
        \item If $A=\Z_{r}$ with even $r$, then $\Gamma(A)=\Z_{2r}$ and $\gamma(1)=1$.
        \item If $A=\Z_{r}$ with $r$ odd, then $\Gamma(A)=\Z_{r}$ with $\gamma(1)=1$.
        \item for finite abelian group $A=\bigoplus^{n}_{i}A_{i}$,
        we have
        \be
        \Gamma(A)=\bigoplus_{i}\Gamma(A_{i})\bigoplus_{i<j}A_{i}\otimes A_{j}
        \ee
    \end{enumerate}
\end{theorem}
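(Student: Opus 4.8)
The plan is to prove the three statements about the universal quadratic group $\Gamma(A)$ by reducing everything to the defining universal property: $\Gamma(A)$ together with $\gamma\colon A\to\Gamma(A)$ corepresents the functor $B\mapsto\{\Pi_3$-valued quadratic functions on $A\}$. So for each claim I would exhibit a candidate pair $(\Gamma_0,\gamma_0)$ and verify that $\gamma_0$ is quadratic and that every quadratic function $f$ on $A$ factors uniquely through $\gamma_0$. Concretely, a quadratic function is controlled by its associated symmetric bilinear form $b_f(x,y)=f(x+y)-f(x)-f(y)$ together with the ``diagonal'' data; the key classical fact I will use is that a quadratic function $f$ on a cyclic group $\Z_r=\langle 1\rangle$ is determined by the single value $f(1)$, subject to one consistency relation coming from $r\cdot 1=0$.

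For part (1), $A=\Z_r$ with $r$ even: I would compute the constraint on $f(1)$. Expanding $f(n)$ in terms of $f(1)$ and $b_f$, and using $b_f(1,1)=f(2)-2f(1)$, one finds $f(n)=n f(1)+\binom{n}{2}b_f(1,1)$ and hence $f(n)=\tfrac{n(n-1)}{2}\bigl(f(2)-2f(1)\bigr)+nf(1)$, which rearranges to $f(n)=\tfrac{n^2-n}{2}f(2)+ (2n-n^2)f(1)$ — the upshot being that $f$ is pinned down by $f(1)$ once we know it extends, and the obstruction $f(r)=0$ becomes the condition that $2r\cdot(\text{generator}) = 0$ but $r\cdot(\text{generator})\neq 0$ in general. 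This identifies the universal target as $\Z_{2r}$ with $\gamma(1)=1$: I would check directly that the map $n\mapsto n^2 \bmod 2r$ on $\Z_r$ is well-defined (since $(n+r)^2-n^2=2nr+r^2\equiv 0\bmod 2r$ when $r$ is even) and quadratic with associated bilinear form $2xy$, and that it is universal by noting any quadratic $f$ with $f(1)=c$ satisfies $f(n)\equiv c n^2\bmod(\text{torsion})$, giving the factoring homomorphism $\tilde f\colon\Z_{2r}\to B$, $1\mapsto c$. Part (2), $r$ odd: here $2$ is invertible mod $r$, so $(n+r)^2\equiv n^2\bmod r$ already (since $2nr+r^2\equiv 0\bmod r$), and $n\mapsto n^2\bmod r$ is well-defined on $\Z_r$ itself; one checks it is universal, so $\Gamma(\Z_r)=\Z_r$. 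Part (3) is the direct-sum decomposition: a quadratic function on $\bigoplus_i A_i$ restricts to a quadratic function on each $A_i$ and to a bilinear pairing on each cross term $A_i\times A_j$ ($i<j$) — this is the polarization identity applied block-wise, $f(x+y)-f(x)-f(y)$ being biadditive — and conversely such data glue back to a quadratic function; by the universal property this gives $\Gamma\bigl(\bigoplus_i A_i\bigr)\cong\bigoplus_i\Gamma(A_i)\oplus\bigoplus_{i<j}A_i\otimes A_j$, using that bilinear maps out of $A_i\times A_j$ are corepresented by $A_i\otimes A_j$.

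The main obstacle I anticipate is \emph{part (1)}, specifically getting the factor of $2$ exactly right: one must be careful that the universal quadratic group is $\Z_{2r}$ and not $\Z_r$, which hinges on the distinction between the bilinear form $b_f$ (which only sees $f$ mod $2$-torsion issues) and the refinement carried by the quadratic function itself — the ``square'' $n\mapsto n^2$ genuinely needs the doubled modulus to be well-defined when $r$ is even. A clean way to organize this is to recall that $\Gamma(A)=H_4(K(A,2))$ (or equivalently the Whitehead $\Gamma$-functor) and cite its known values, but since the paper is presenting an ``algebraic model'' I would instead give the self-contained computation above, emphasizing the well-definedness check $(n+r)^2\equiv n^2\pmod{2r}$ for even $r$ as the crux. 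The rest — quadraticity of $\gamma$, uniqueness of $\tilde f$, and the additivity in part (3) — is routine diagram-chasing with the universal property and polarization, and I would state it as such rather than belabor it.
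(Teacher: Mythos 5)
Your overall strategy---corepresenting quadratic functions, doing the cyclic case by an explicit computation, and deducing part (3) from polarization together with the universal property of the tensor product---is the standard route to this classical result; note that the paper itself offers no proof (the theorem is quoted from the Eilenberg--MacLane/Whitehead literature), so the only benchmark is that classical argument, which your outline follows.

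There is, however, a genuine gap exactly at the point you flag as the crux in part (1). Your formula $f(n)=nf(1)+\binom{n}{2}b_f(1,1)$ leaves \emph{two} independent parameters, $f(1)$ and $b_f(1,1)$, and nothing in your argument forces the relation $b_f(1,1)=2f(1)$, which is precisely what makes $f$ ``pinned down by $f(1)$'' via $f(n)=n^2f(1)$ and is the sole source of the factor of $2$. That relation comes from the evenness axiom $f(-x)=f(x)$ in the Eilenberg--MacLane notion of quadratic function (compare $b_f(1,-1)=f(0)-f(1)-f(-1)=-2f(1)$ with $b_f(1,-1)=-b_f(1,1)$), and you never invoke it. Being precise here is not optional: if one instead takes the paper's own Definition literally ($f(x)=F(x,x)$ with $F$ bilinear on $A$), then $r\,f(1)=rF(1,1)=F(0,1)=0$, every quadratic function on $\Z_r$ is $r$-torsion valued, and the ``universal'' group would be $\Z_r$, contradicting part (1); the theorem is true only for the Whitehead/Eilenberg--MacLane definition, and your proof must say so and use it. Relatedly, the defining relation $2r\,\gamma(1)=0$ of $\Gamma(\Z_r)$ for even $r$ should be derived from $r\,b_f(1,1)=0$ (biadditivity of the polarization on $\Z_r$) combined with $b_f(1,1)=2f(1)$; the obstruction you actually state, $f(r)=0$, only yields $r^2f(1)=0$, which for even $r$ is strictly weaker than $2rf(1)=0$ and would not identify the answer as $\Z_{2r}$. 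Once these two relations are established, the remainder of your sketch---the candidate $\gamma(n)=n^2 \bmod 2r$ and its well-definedness, the odd case via $\gcd(2r,r^2)=r$, and the gluing argument for part (3)---goes through as described.
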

where $\otimes$ stands for the tensor product of $\Z$-modules.

\begin{example}
    Let's consider $A=\Z_{4}$ and $B=\Z_{4}$.
    In this case, the bilinear forms $F_k:\ A\times A\rightarrow B$ are given by $F_k(x,y)=kxy$ $(k=0,1,2,3)$. The associated quadratic form $f_k:A\rightarrow B$ are classified as $f_k(x)=kx^2$ $(k=0,1,2,3)$. 
\end{example}

Next, we strictify the weak 3-group with $\Pi_{1
}=0$ to semi-strict 3-group, i.e. the 2-crossed module. We claim that the choice
\be
(G,H,L,\ptl_{1},\ptl_{2},\rhd,\{-,-\})=(0,\Pi_{2},\Pi_{3},0,0,\mathrm{id.},F)
\ee
will do the job, where $F$ is the bilinear form defined by \eqref{bilinear_form}. Note that this choice ensures that $H$ and $L$ are abelian groups.
Especially this construction will satisfy the axioms of Peiffer lifting listed in \eqref{Peiffer}, as we will check below.

The first axiom that $\ptl_{2}\{h_{1},h_{2}\}=h_{1}h_{2}h_{1}^{-1}(\ptl_{1}h_{1})\rhd h_{2}^{-1}$ follows trivially because both sides are 1. This follows from that
$\ptl_{2}=0$ and $H$ is abelian.

The second axiom states that $g\rhd\{h_{1},h_{2}\}=\{g\rhd h_{1},g\rhd h_{2}\}$, which is also obviously true since $G=1$ by our choice so it acts trivially on $H,L$.

The third axiom $\{\ptl_{2}l,\ptl_{2}l_{2}\}=l_{1}l_{2}l_{1}^{-1}l_{2}^{-1}$ is true because $\ptl_{2}=0$ and $L$ is abelian.

The fourth property and fifth axiom is ensured by bilinearity of $F$ and abelian nature of $H$ and $L$.

The sixth, the last axiom $\{\ptl_{2}l, h\}\{h,\ptl_{2}l\}=l(\ptl_{1}h)\rhd l^{-1}$ is ensured by $\ptl_{1}=0=\ptl_{2}$.

So the choice $(0,\Pi_{2},\Pi_{3},0,0,\mathrm{id.},F)$ does give us a 2-crossed module, which is semi-strict 3-group.

\section{Strictification on higher gauge fields}
\label{sec:Strictify-gauge}

In most of the physics literature, people prefer to work with weak 2-groups as they describe ``physical'' global symmetries. However, in mathematics literature, people prefer to construct 2-gauge theories with strict 2-groups, e.g. \cite{BaezSchreiber2005}. Based on the fact that strict version of 2-groups is equivalent to weak 2-groups~\cite{Baez0307}, we should expect that gauge theories based on strict or weak 2-groups should be somehow equivalent as well. In this section, we obtain an explicit relation between ``strict gauge fields'' and ``weak gauge fields''. We derive the Green-Schwarz shift in \cite{Kapustin:2013uxa,Cordova:2018cvg}. Furthermore, we will consider this equivalence at the level of observables, and extend the discussions to the cases of 3-groups. In section~\ref{sec:disc-dic} we focus on the cases of discrete 2-groups, 

\subsection{The dictionary of discrete gauge fields}
\label{sec:disc-dic}

In this section we will perform the strictification on the level of gauge field content. In the following discussion, we will use \v{C}ech cohomology to formulate bundles.

For simplicity, we work with discrete groups. Given following crossed module extension:
\be
1\to\Pi_{2}\stackrel{i}{\longrightarrow} H\stackrel{\partial}{\longrightarrow}G\stackrel{p}{\longrightarrow} \Pi_{1}\to 1
\ee
with the group action $\rhd:G\to \text{Aut}(H)$ and the associated Postnikov class $\beta\in H^{3}(B\Pi_{1};\Pi_{2})$. We will collect these data $(G,H,\partial,\rhd)$ and denote it as the 2-group $\mathcal{G}$.
We introduce the following notations:
\begin{enumerate}
    \item $M$ is our $n$-dimensional spacetime.
    \item $a,b$ are ``weak'' gauge fields on $M$ with gauge groups $\Pi_{1}^{[0]}$ and $\Pi_{2}^{[1]}$ respectively, where the superscript means that $\Pi_{2}$ acts as a 1-form symmetry group and $\Pi_{1}$ is a 0-form symmetry group. We will model $a$ as a 1-cochain and $b$ as a 2-cochain. Similarly $A$ and $B$ will be the gauge fields of $G$ and $H$ respectively and again, $A$ is a 1-cochain, $B$ is a 2-cochain.
    \item The \v{C}ech differential will be denoted as $\delta$, and the differential twisted by $A$ will be $\delta_{A}$.
    \item $B\mathcal{G}$ will be the classifying space of our 2-group\cite{Baez0801}.
    \item Some notations of crossed module extension will be the same as section~\ref{sec:2-group}.
\end{enumerate}

Gauge fields of discrete gauge group can play many roles, including classifying map, or transition functions of the bundle (equivalently, lattice gauge field defined on the nerve of trivialization charts), see e.g. \cite{Luo:2023ive} for more details.
Hence we will not distinguish the gauge field $a$ and the classifying map $M\to B\Pi_{1}$ for the $\Pi_{1}$ bundle over $M$.

We have the following commutative diagram:
\be
\centering
\begin{tikzcd}
	{M} & {B\mathcal{G}} & {B^{2}\Pi_{2}} \\
	& {B\Pi_{1}}
	\arrow["Bp", from=1-2, to=2-2]
	\arrow["f", from=1-1, to=1-2]
	\arrow["Bi", from=1-3, to=1-2]
	\arrow["a"', from=1-1, to=2-2]
\end{tikzcd}
\ee
The right hand part is the Postnikov tower associated to the 2-group $\mathcal{G}$. It means that whenever we have a 2-bundle $f:M\to B\mathcal{G}$, we automatically get a principal $\Pi_{1}$ bundle $a:M\to B\Pi_{1}$, which implies that
\be
\delta a=1_{\Pi_1}\,.
\ee
This is the cocycle condition of the transition function. However, we should not expect $\delta b=1_{\Pi_2}$ since the image of the map $f$ does not fall into $B^{2}\Pi_{2}$ in general.
Let us choose a lift of $s:\Pi_{1}\to G$ that satisfies $p\circ s=1$, and define $A=s(a)$. In fact, our construction does not depend on the choice of the lift. Different choices of $s$ result in a ``1-form''  gauge transformation, as will be clear in section \ref{Transformation_Consistency}.

We apply the \v{C}ech differential $\delta$ on $A=s(a)$:
\be
p(\delta A)=\delta p(A)=\delta a=1_{\Pi_1}\,,
\ee
hence $\delta A=\partial B$ for some $B\in H$ (note that $\text{ker}(p)=\text{im}(\partial)$). This is the exactly the flat condition of fake curvature described in \cite{BaezSchreiber2005}. 

On the other hand we have $A=s(a)=a^{*}(s)$, where $a^{*}$ is the pullback by the map $a$. Hence $\delta A=a^{*}(\delta s)=a^{*}(\partial F)$, where $F$ is the one in (\ref{cocycle-F}). Using
\be
\partial(a^{*}F)=\partial(B)\,,
\ee
we finally get 
\be
\label{strictify-b}
b=B^{-1}(a^{*}F)\in \ker \partial\,.
\ee
Note that there exists the 2-truncation condition for our case $\delta_{A}B=1_H$ \cite{BaezSchreiber2005}, since we have a 2-group rather than a 3-group (there is no 3-morphism). Now we apply the twisted differential $\delta_{A}$ on both sides of (\ref{strictify-b}):
\be
\delta_{A}b=\delta_{A}(a^{*}F)=a^{*}(\delta_{A}F)\,.
\ee
Recall that $\delta_{A}F=\beta\in H^{3}(B\Pi_{1};\Pi_{2})$ (\ref{cocycle-F}).
Furthermore, since $b\in \Pi_{2}$, which is an element of the central subgroup of $H$, the group action by $A$ on $b$ will descent to an action by $a=p(A)$.
Hence we can write
\be\label{GS_shift}
\delta_{a}b=a^{*}\beta\,.
\ee
This is the Green-Schwarz shift occured in \cite{Kapustin:2013uxa}. We demand  the equation (\ref{GS_shift}) to be covariant under the gauge transformation of $a$, i.e. after $a\to a\delta\lambda$, we need to impose $b\to b\gamma$, where $\gamma$ satisfies
\be
\delta_{a}\gamma=(\delta\lambda)^{*}F\,.
\ee
This equation determines the gauge transformation of weak gauge fields and explain how the gauge transformations in \cite{Cordova:2018cvg} arise.

\subsection{Gauge transformations and consistency}
\label{Transformation_Consistency}

In this section, we discuss the gauge transformations of gauge field $A$ and $B$. In particular, we will write down a discrete analogy of the gauge transformations for Lie 2-gauge theory in \cite{BaezSchreiber2005}.

We will work with \v{C}ech cohomology, and we need to choose trivialization charts $\{U_{i}\}_{i\in \mathcal{I}}$ ($\mathcal{I}$ is the set of indices). We define $A_{ij}$ as the locally constant gauge field $A$ on $U_{i}\cap U_{j}$, similarly $B_{ijk}$ is that of $B$ on $U_{i}\cap U_{j}\cap U_{k}$.

Let us summarize the dictionary between strict and weak gauge fields derived in section~\ref{sec:disc-dic}:
\be
\ba
A_{ij}&=s(a_{ij})\,,\cr
\partial B_{ijk}&=(\delta A)_{ijk}\,,\cr
b_{ijk}&=B^{-1}_{ijk} F(a_{ij},a_{jk})\,.
\ea
\ee
Here $s:\Pi_{1}\to G$ is a section and see (\ref{cocycle-F}) for the definition of $F$.

\subsubsection{0-form gauge transformations}
First, we expect 
\be
A_{ij}\to \lambda_{i} A_{ij}\lambda_{j}^{-1}
\ee
where $\lambda_{i}$ is a $G$-valued function on $U_{i}$. To preserve $\delta A=\partial B$, note that $(\delta A)_{ijk}=A_{ij} A_{jk}A_{ik}^{-1} $, we have
\be
(\delta A)_{ijk}\to \lambda_{i}(\delta A)_{ijk}\lambda_{i}^{-1}\,.
\ee
Hence we expect
\be
(\partial B)_{ijk} \to \lambda_{j}(\partial B)_{ijk}\lambda_{i}^{-1}=\partial(\lambda_{i}\triangleright B_{ijk})\,.
\ee
Thus we deduce that under the gauge transformation $A_{ij}\to \lambda_{i}A_{ij}\lambda_{j}^{-1}$, $B$ transforms as $B_{ijk}\to B'_{ijk}=(\lambda_{j}\triangleright B_{ijk})\rho_{ijk}$, where $\rho_{ijk}\in \ker(\partial)=\Pi_{2}$. However, this $\rho$ does not play a significant role and can be omitted here.

Let us check that the 2-truncation condition $\delta_{A}B=1_H$ is well-defined (covariant) under the gauge transformation with parameter $\lambda$.
Note that both $A$ and $B$ will undergo the gauge transformation, not only $B$:
\be
\delta_{A'}(B')_{ijkl}=(A'_{ij}\triangleright B'_{jkl})B'^{-1}_{ikl} B'^{-1}_{ikl}B'_{ijk}\,.
\ee
It simplifies to 
\be
\delta_{A'}(B')_{ijkl}=(\lambda_{i})(\triangleright (A_{ij}\triangleright B_{jkl})B^{-1}_{ikl} B^{-1}_{ikl}B_{ijk})=\lambda_{i}\triangleright (\delta_{A}B)=1_H\,,
\ee
and the 2-truncation condition is indeed gauge invariant.

In summary, under the 0-form gauge transformation, we have
\be
A_{ij}\to \lambda_{i}A_{ij}\lambda_{j}^{-1}
\ee
\be
B_{ijk} \to \lambda_{i}\triangleright B_{ijk}
\ee
for a $G$-valued function $\lambda_{i}$ defined on $U_{i}$.

\subsubsection{1-form gauge transformation}
\label{1FormGaugeTransform}
Before starting the discussions, we emphasize that it is necessary to assume $\im(\partial)$ is a central subgroup of $G$ in order to have a well-defined
\be
\label{Fake-curv-1-form}
\delta A=\partial B\,.
\ee
We apply $\delta$ on both sides of (\ref{Fake-curv-1-form}), and get
\be
\label{1-form-1G-1}
1_G=\partial(\delta B)\,.
\ee 
Meanwhile we have $\delta_{A}B=1_H$, so 
\be
\label{1-form-1G-2}
\partial(\delta_{A}B)=1_G
\ee
trivially. We compare (\ref{1-form-1G-1}) and (\ref{1-form-1G-2}) and deduce that 
\be
\partial (A_{ij}\triangleright B_{ijk})=\partial B_{ijk}\,.
\ee
 Again it is equivalent to $A_{ij}\partial(B_{ijk})=\partial(B_{ijk})A_{ij}$, which holds under the assumption that $\im(\partial)$ is central in $G$, for arbitary $A_{ij}$ and $B_{ijk}$. This justifies the use of Bockstein homomorphisms.

For the 1-form gauge transformation, we observe that $\delta_{A}B=1_H$ is preserved by $B_{ijk}\to B_{ijk}(\delta_{A}\Lambda)_{ijk}$, since $\delta_{A}^{2}=1$ ($\Lambda_{ij}\in H$ here). We perform such a transformation in $\partial B_{ijk}=(\delta A)_{ijk}$. Note $\partial \circ \delta_{A}=\delta\circ \partial$, hence the full 1-form gauge transformation is given by
 \be
 \ba
 \label{1-form_trans_A}
 A_{ij}&\to A_{ij}\partial \Lambda_{ij}\cr
 B_{ijk}&\to B_{ijk}(\delta_{A}\Lambda)_{ijk}
 \ea
 \ee
 for $\Lambda\in H$.

\begin{remark}
    Consider $A=s(a)$ as in the dictionary, let us choose another section $s'$. Note that  $s'(a)=s(a)\partial \Lambda$ where $\Lambda\in H$. Hence $A'=s'(a)=s(a)\partial\Lambda=A\partial \Lambda$, and changing the choice of section amounts to a 1-form gauge transformation.
\end{remark}

\subsection{Dictionary on observables}
\label{sec:disc-obs}

Now we discuss the definition of observables in terms of the strict gauge fields $(A,B)$.
For simplicity we work on a spacetime lattice, i.e. picking up a good cover of the spacetime manifold and taking its nerve\footnote{A good cover means that for $\{ U_i \}$ a cover of M, each $U_i\cong \mathbb{R}^n$ and each intersection $\cap_{i} U_i\cong \mathbb{R}^n$. Also, we do always have such a choice of good  cover.}.
On this lattice, the 1-form gauge field $A$ is defined on links. Given a link $e$ we will write $A_{e}$ for the gauge field on it. Similarly, given a 2-simplex $\sigma$, there is a 2-form gauge field defined on it $\sigma$.

Given a loop $\gamma$ on the lattice and a representation $\rho$ of group $G$, we can define the Wilson loop operator
\be
\label{Wilson-loop}
W_{\gamma}:=\mathrm{Tr}(\prod_{e\in\gamma} \rho(A_{e}))\,.
\ee

Now let us check gauge invariance of (\ref{Wilson-loop}). Under the 0-form gauge transformation
\be
A_{ij}\to\lambda_{i} A_{ij}\lambda_{j}^{-1}\,.
\ee
It is easy to see the Wilson loop is invariant. On the other hand, to make the Wilson line (\ref{Wilson-loop}) invariant under the 1-form gauge transformation
\be
A_{ij}\to A_{ij}\partial(\Lambda_{ij})\,,
\ee
we need to impose that $\text{im}\partial\subset \ker(\rho)$. In this case, $\rho$ induces a well-defined map $\tilde{\rho}$ on $G/\im{\partial}=\Pi_{1}$. Hence the gauge invariant Wilson loops of the theory is labelled by representations of $\Pi_{1}$.

Now we consider surface operators. Given a closed surface $\Sigma$ and a representation $\eta$ of $H$, we define the surface operator by
\be
\label{Wilson-surface}
W_{\Sigma}:=\text{Tr}(\prod_{\sigma\in \Sigma}\eta(B_{\sigma}))\,.
\ee
To ensure the Wilson surface operator is invariant under the 0-form gauge transformation
\be
B_{ijk}\to \lambda_{i}\triangleright B_{ijk}\,,
\ee
we impose that $\eta$ is a $G$-invariant representation of $H$, that is
\be
\eta(g\triangleright h)=\eta(h)
\ee
for any $g\in G,\, h\in H$.
With such constraint, Wilson surface is manifestly invariant under the 0-form gauge transformation.

For the 1-form gauge transformation
\be
B_{ijk}\to B_{ijk}(\delta_{A}\Lambda)_{ijk}\,,
\ee
we note that $\eta(\delta_{A}\Lambda)=\delta\eta(\Lambda)$ since $\eta$ is $G$-invariant. Hence $\Lambda$ will contribute a factor
\be
\prod_{\sigma\in\Sigma}(\delta\eta(\Lambda))_{\sigma}=1
\ee
by ``Stokes'' theorem. In conclusion, gauge invariant Wilson surface operators correspond to $G$-invariant representations of $H$.

One can also construct gauge invariant operator using the fake curvature $\mathfrak{F}=dA-\ptl B$ (in the additive form). We take a surface $\Sigma$ with boundary $\ptl\Sigma$, and define the gauge invariant operator
\be
\ba
W_{\Sigma,\rho}&=\exp\left(\int_\Sigma\rho(\mathfrak{F})\right)\cr
&=\exp\left(\int_{\ptl\Sigma}\rho(A)\right)\exp\left(\int_\Sigma \rho(\ptl(B))\right)\,.
\ea
\ee
$\rho$ is an arbitary representation of $G$.

In the multiplicative notations, we can write an equivalent form
\be
W_{\Sigma,\rho}=\text{Tr}(\prod_{e\in\ptl\Sigma}\rho(A_e))\text{Tr}(\prod_{\sigma\in\Sigma}\rho(\ptl(B_\sigma))^{-1})\,.
\ee

\subsection{Lie 2-gauge theory}
\label{sec:2-gauge-Lie}

Now we briefly discuss the case of Lie 2-groups, and we will derive the famous Green-Schwarz shift~\cite{Cordova:2018cvg} algebraically.

As before, we have the exact sequence
  \begin{equation}
      1\to \Pi_2 \to H \stackrel{\ptl}{\longrightarrow} G \stackrel{p}{\longrightarrow} \Pi_1 \to 1
  \end{equation}
  and a section map $s: \Pi_1 \to G  $ st. $p\circ s = \mathrm{id.} $. Here $\Pi_{1,2}$ are both Lie groups.
  
  Now we start from a $\Pi_1$-bundle and derive \Cech{} cohomology. We denote $a$ an ${\rm Lie}(\Pi_1)$-valued 1-form (1-form gauge field), where ${\rm Lie}$ denotes the functor mapping Lie groups to their corresponding Lie algebras. As a principal $\Pi_1$-bundle, we should obviously have
  \begin{equation}
      g_{ij} g_{jk} g_{ki} = 1_{\Pi_1}\,.
  \end{equation} 
  Taking section s on both sides ($\gt{}_{ij}\equiv s(g_{ij})$  ),
  \begin{equation}
      \gt_{ij}\gt_{jk}\gt_{ki} = s(1_{\Pi_1}) \equiv \Omega_{ijk}\,. 
  \end{equation} 
  Since $p\circ s = \mathrm{id.}$, we clearly have $\Omega_{ijk}\in \ker(p) = \im(\partial)$. Thus we can assign an $\omega_{ijk}$ st. $\Omega_{ijk} = \partial(\omega_{ijk})$.
  
  Consider the $\mathfrak{g}$-valued 1-form $A$ induced by $\Pi_1$ connection $a$ through $A = \underline{s} (a) $ where the underline denotes the differential of a map. The induced gauge transformation of $A$ is therefore
  \begin{equation}\label{GTA}
      A_i - A_j = \gt{}_{ij}^{-1} {\rm d} \gt{}_{ij}
    \end{equation}
  To arrive at the relation for $\omega_{ijk}$, we permute the labels of \eqref{GTA} ($ij, jk, ki$) and sum them up, to get
  \begin{equation}
      0 = \underline{\partial}(\omega_{ijk}^{-1} {\rm d} \omega_{ijk} )
  \end{equation}
  Thus the $\mathfrak{h}$-valued 1-form $\omega_{ijk}^{-1} {\rm d} \omega_{ijk}\in \ker(\underline{\partial}) = {\rm Lie}(\Pi_2) $.
  
  Then let's consider the derivation of \Cech{} 3-cocycle $\beta$. This would require an intersection of four $U_i$ patches, namely we can calculate $\gt{}_{ij}\gt{}_{jk}\gt{}_{kl}$ in two different orders of composition and the result should agree in $G$:
  \begin{equation}\label{Omega-beta}
      \Omega_{ijl}\Omega_{jkl} = \Omega_{ijk}\Omega_{jkl}\,.
  \end{equation}
  So the corresponding $\omega$s should have the following relation,
  \begin{equation}\label{omega-beta}
      \omega_{ijl}\omega_{jkl} = \omega_{ijk}\omega_{jkl}\beta_{ijkl}
  \end{equation}
  where $\beta_{ijkl}\in \ker(\partial) = \Pi_2$ carries the information of Postnikov class. We can check that by taking $\ptl$ on both sides of \eqref{omega-beta} we clearly have \eqref{Omega-beta}.
  
  To view $\beta_{ijkl}$ as an element in group cohomology $H^3_{grp}(\Pi_1;\Pi_2)$, we simply take $\beta_{ijkl}: \Pi_1\times\Pi_1\times\Pi_1\to \Pi_2, (g_{ij},g_{jk},g_{kl})\mapsto \beta_{ijkl} $, and the cocycle condition can be checked by pentagon identity. To view $\beta$ as \Cech{} cohomology $\check{H}^3(\Pi_1;\underline{\Pi_2})$, we naturally take the indices as intersection of elements of $\{U_i\}$ and $\beta$ as the map from 3 \Cech{} simplexes into $\Pi_1$. Note that in the crossed module $p^* \beta =0$, we can describe the bundle with strict 2-group $(G,H,\ptl,\rhd)$ in the way of \cite{BaezSchreiber2005}.
  
  To consider the gauge transformation, we make the following correspondence, $a$ is the ${\rm Lie}(\Pi_1)$-valued 1-form gauge field, $B$ as a ${\rm Lie}(\Pi_2)$-valued 2-form gauge field and $\beta$ the 3-form gauge field on $B\Pi_1$. Consider the classifying map $f:M\to B\Pi_1$, we directly derive from \Cech{} cohomology that
  \begin{equation}
      {\rm d}B = f^* \beta
  \end{equation}
  and the gauge transformation rules can be derived from that.
  
  An example would be the Chern-Simons form for $\Pi_1 = \Pi_2 = U(1)$. For $\beta\in H^3(BU(1);U(1))=\mb{Z}$, its pullback corresponds to the $\mathbb{Z}$-graded Chern-Simons form, namely,
  \begin{equation}
      {\rm d}B = \frac{\kappa}{2\pi} a\wedge{\rm d} a
  \end{equation}
  with corresponding gauge transformation when $a\mapsto a+{\rm d}\lambda$,
  \begin{equation}
      B' = B + \frac{\kappa}{2\pi} \lambda \wedge{\rm d} a\,.
  \end{equation}

\subsection{Strictification of 3-gauge fields}
\label{sec:strictify-3-gauge}

In this subsection, we present a similar construction for 3-group gauge fields.
We focus on two special cases which are discussed in section~\ref{sec:Strictify-3}, which are 3-groups with either $\Pi_{1}=0$ or $\Pi_{2}=0$.

\paragraph{The case of $\Pi_{1}=0$.}

This case is completely in analogue with the 2-group case with degree shift by one. More precisely, let us fix the exact sequences
\be
1\to \Pi_{3}\to L\to \im(\ptl_{2})\to 1
\ee
and 
\be
1\to \im(\ptl_{2})\to H\to \Pi_{2}\to 1\,.
\ee
Given $b\in H^{2}(M,\Pi_{2})$ for spacetime manifold $M$ which satisfies $\delta b=1_{\Pi_2}$ where $\delta$ is the \v{C}ech differential. Then one chooses a section $s:\Pi_{2}\to H$ which may not be a homomorphism. So $\delta s(b)\not= 1_H$ in general. Nonetheless, we can easily verify that $\delta s(b)\in \im(\ptl_{2})$, so one can further lift
\be
\delta(s(b))=\ptl_{2}(l)
\ee
for some $l\in L$. Note that $\delta^{2}=0$, hence we obtain
\be
\ptl_{2}(\delta l)=1_H\,.
\ee
It means that $\delta l=b^{*}\gamma\in H^{4}(M,\Pi_{3})$ where $\gamma\in H^{4}(B^{2}\Pi_{2};\Pi_{3})$ is the Postnikov cocycle.

\paragraph{The case of $\Pi_{2}=0$.}

In the case of $\Pi_2=0$, let us fix the exact sequences
\be
\label{3-gauge-Pi2-0}
\begin{split}
   & 1\to \Pi_{3}\to L\to \im(\ptl_{2})\to 1\\
& 1\to\ker(\partial_{1})\to H\to G\to\Pi_{1}\to1\,.
\end{split}
\ee

Consider the second exact sequence in (\ref{3-gauge-Pi2-0}), one starts with $a\in H^{1}(M,\Pi_{1})$ in this case which satisfies $\delta a=1_{\Pi_1}$ as usual. We fix a set theoretic section $s:\Pi_{1}\to G$ and we have
\be
\label{3-gauge-Pi2-0-ptl1}
\delta(s(a))=\partial_{1}(b)
\ee
as before. Taking \v{C}ech differential of (\ref{3-gauge-Pi2-0-ptl1}), we obtain
\be
\partial_{1}(\delta b)=1
\ee
which means that there exists $l\in L$ such that
$\partial_{2}(l)=\delta(b)$. We take yet another \v{C}ech differential, we obtain $\partial_{2}(\delta l)=1$. So it defines a Cech cohomology class $\gamma=\delta(l)\in H^{4}(M;\Pi_{3})$, which is the pullback of Postnikov cocycle in $H^{4}(B\Pi_{1};\Pi_{3})$ by the classifying map $a:M\to B\Pi_{1}$.

\section{Higher-representations}
\label{sec:Higher-rep}

\subsection{Automorphism 2-representation}
\label{sec:Auto-2-rep}

The mathematical theory of higher representations of higher groups is not fully established. In this paper, for the physical application purpose, we would construct 2-representations of strict and weak 2-groups using the automorphism 2-group of algebras.

\begin{definition}
    Given an algebra $A$, the \textbf{\rm automorphism 2-group}\cite{kristel2022insidious} of the algebra $\CA ut(A)$ is defined to be
    \be\ba 
    H &= A^\times = \{  {\rm invertible \ elements \ in } \ A  \} \\ 
    G &= {\rm Aut}(A) \\
    \mathrm{Ad.} &: A^\times \to {\rm Aut}(A) , \ a \mapsto {\rm Adj}_a \ .
    \ea\ee
\end{definition}
This definition gives a crossed extension sequence,
\be   1 \to Z(A^\times) \to A^\times \to {\rm Aut}(A) \to {\rm Out}(A) \to 1 \ . \ee
Therefore we define a 2-representation of a 2-group on $A$ as follows.
\begin{definition}
    Given an algebra $A$, $\CA ut(A)$ is the automorphism 2-group, we define a representation of a strict 2-group $\CG$ on $A$ to be a strict intertwiner\cite{kristel2023representation}
    \be  \CR : \ \CG \to \CA ut(A)  \ , \ee
    see figure~\ref{fig:Aut-2-rep}.
    
    \begin{figure}
    \centering
\begin{tikzpicture}

\node at(-2,-0.7)[]{$H$};
\node at(-2,1)[]{$G$};
\node at(1.2,-0.7)[]{$A^{\times}$};
\node at(1.2,1)[]{$\Aut(A)$};
\node at(1.2,-2.4)[]{$A$};

\draw[thick,->] (-1.6 , -0.6055) arc (-60:240:0.7);
\draw[shift = {(3.2,0)},thick,->] (-1.6 , -0.6055) arc (-60:240:0.7);
\draw[shift = {(3.2,-1.7)},thick,->] (-1.6 , -0.6055) arc (-60:240:0.7);

\draw[double, ->, double distance = 0.6] (-1.0,0)--(0.5,0);
\node at(-0.25,-0.5)[]{$\mathcal{R}$};

\end{tikzpicture}

\caption{This figure shows how we construct an automorphism 2-representation. We choose a suitable algebra $A$, and its automorphism 2-group $\mathcal{A}ut(A)$ can be calculated by definition. Then we build an intertwiner $\mathcal{R}$ to embed the 2-group structure into $\mathcal{A}ut(A)$.}\label{fig:Aut-2-rep}
    \end{figure}
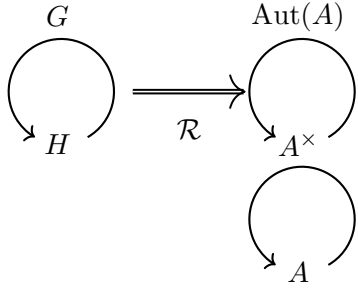
\end{definition}
This could be considered as a ``homomorphism" (with respect to the group laws, $\ptl$ map and action) from a strict 2-group to an automorphism 2-group of an algebra. Note that this is not a weak equivalence, since we do not require the $\Pi_1$ and $\Pi_2$ of the two crossed modules to be the same.

Algebraically, a 2-representation breaks down to building a commutative diagram preserving the strict 2-group action $\rhd$, namely,
\be
\begin{tikzcd}
H \arrow[r, "\ptl"] \arrow[d, "t_H"'] & G \arrow[d, "t_G"'] & t_G(g) \big( t_H(h) \big) = t_H(g\rhd h) \\
A^\times \arrow[r, "{\rm Ad}_\cdot"]  & {\rm Aut}(A)        &                                         
\end{tikzcd}
\ee
thus we can denote a 2-representation by $(t_H,t_G,A)$.
We define the automorphism 2-representation to be faithful if both $t_H$ and $t_G$ are injective group homomorphisms.

During the process of constructing automorphism 2-representations, we may sometimes take considerations of the subgroup of our choices, since the entirety could be massive and redundant. For example, when we take $A = \bbC [K]$, we sometimes only consider 
\be {\rm im}(t_G) \subseteq {\rm Aut}(K)\subseteq \Aut(A) \ . \ee
We will write out the maps explicitly if anything may cause confusion.

The physical meaning of the algebra $A$ here is typically the fusion algebra of line operators.

For example, in the case of pure $U(1)$ 1-form symmetry of Maxwell theory, these line operators are Wilson lines of electromagnetic fields. They are described by the algebra $A=\C[\widehat{U(1)}]$ where $\C[K]$ stands for the group algebra of discrete group $K$ and $\widehat{K}:=\text{Hom}(K,U(1))$ is the Pontryagin dual. More concretely, let's denote the Wilson line with fundamental charge as $x$, then
\be
A=\mathbb{C}[x,x^{-1}]
\ee
Note that only finite sums are allowed.
In this situation, $\Aut(A)=\C^{\times}\rtimes \Z_{2}$. Given $k\in \mathbb{C}^{\times}$, the automorphism acts as $x\to kx$. On the other hand, for $1\not =\sigma\in \Z_{2}$, we have $\sigma(x)=x^{-1}$, which is actually charge conjugation.

Thus, the automorphism 2-representation of $BU(1)$ on $A$ is labelled by $\text{Hom}(U(1),\mathbb{C}^{\times})=\Z$, which is the charge of Wilson lines, as expected.

More generally, for Wilson lines in $K$-gauge theory, we have $A=\C[\widehat{K}]$. 

In the 2-representation, both $\Aut(A)$ and $A^\times_0$ shall admit an action on the algebra $A$, the former by natural action, and the latter by left multiplication. However, if we choose the algebra $A$ to be a group algebra $A = \bbC[K]$ as we do in most cases, there can be another type of action given by Pontryagin dual. We denote the duality map by
\be \widehat{} : K \to \widehat{K} = {\rm Hom_{Grp}}(K,U(1)) \ , \  k \mapsto \hat{k} \ .  \ee
In this notation, the two types of actions of the 2-group $\mc{G}$ on the algebra $A$ can also be constructed as
\be
\label{Natural-action}
t_G(g) \rhd \sum_{k\in K} f(k) \ k = \sum_{k\in K} f(k) 
 \ t_G(g)(k) \ , 
\ee
\be
\label{Wilsonian-action}
t_H(h) \cdot \sum_{k\in K} f(k) \ k = \sum_{k\in K} f(k) \cdot \hat{k}(t_H(h)) \ k \ .
\ee

In the following, we will call the first type of action in (\ref{Natural-action}), in which $H$ acts by left multiplication, the \textbf{natural $H$-action}. The second type of action in (\ref{Wilsonian-action}), in which $H$ acts by a phase determined by Pontryagin dual of each element, is called the \textbf{Wilsonian $H$-action}, as the 2-matter operator would correspond to a linear combination of a collection of Wilson loops with coefficients.

\subsubsection{Example: $\Pi_1 = \Pi_2 = \bbZ_2$}\label{sssec:Z2Z2-2Rep}
Here we give a concrete example of faithful automorphism 2-representation for the case of $\Pi_1 = \Pi_2 = \bbZ_2$. We present both cases of Postnikov classes $\beta = 0,1 \in H^3_{\rm grp}(\bbZ_2 ; \bbZ_2) \cong \bbZ_2$. In the construction below, we use the scheme of taking the algebra $A$ to be group algebra $\bbC[K]$, and roughly takes ${\rm Aut}(K)\subseteq {\rm Aut}(\bbC[K])$. It does not cause any problem, since we can always embed ${\rm Aut}(K)$ back into ${\rm Aut}(\bbC[K])$.

\textbf{Trivial $\beta$:} With the above formulation, one simple example we can show is for
\begin{equation}
    1\to \mb{Z}_2\stackrel{i}{\longrightarrow} \mb{Z}_4\stackrel{\ptl}{\longrightarrow} \mb{Z}_4\stackrel{p}{\longrightarrow}\mb{Z}_2\to 1
\end{equation}
and the strict action $\rhd$ being trivial, we choose the algebra of representation to be $\bbC[D_4]$. The notation for group elements of $D_4$ is $D_4 = \expval{a,x | a^4 = e , (a^n x)^2 = e \ \forall n}$. Thus we have
\be
\begin{tikzcd}
\bbZ_4 \arrow[d, "t_H"'] \arrow[r, "\ptl = \times 2"'] & \bbZ_4 \arrow[d, "t_G"] \\
D_4 \arrow[r, "{\rm Ad}_\cdot"']                       & D_4                    
\end{tikzcd}\ee
where $\Aut(D_4) = D_4$; $t_H(1) = a$, from this one can see that ${\rm Ad}_{a^2} = {\rm id.}$ is exactly what we would expect for the diagram to commute; $t_G(1) = \kappa$ is the representative of outer automorphism, acting as $\kappa(x) = ax, \kappa(a) = a$, and acts on ${\rm Im}(t_H)$ trivially.

\textbf{Non-Trivial $\beta$:}
Suppose now we are trying to represent the $\Pi_1=\Pi_2 = \bbZ_2$ case with a non-trivial Postnikov class non-trivial. The closest automorphism 2-group we can naively imagine would be 
\be\ba
&(\bbZ_4,\bbZ_2= \Aut(\bbZ_4),\rhd , \ptl)\\
&1 \rhd 1 = 3 \ , \ 1\rhd 3 = 1 \ , \ \ptl = \times 2  \ .
\ea\ee 
But this 2-representation is not faithful. The construction we attempt to build now is to properly expand the algebra (and in this simple case, a group $K$) to accommodate the $G$. Here is how we can do it in the example of \ref{sec:Z2Z2} with non-trivial action.

We can slightly decorate to the automorphism 2-group that fixes the deficiency of order-4-elements in the automorphism group. The proposal is
\be
\begin{tikzcd}
\bbZ_4 \arrow[d, "t_H"'] \arrow[r, "\ptl = \times 2"'] & \bbZ_4 \arrow[d, "t_G"] \\
K = \bbZ_4 \times S_4 \arrow[r, "{\rm Ad}_\cdot"']                       &  \bbZ_2\times S_4    \subset   \Aut(K) \ ,          
\end{tikzcd}\ee
where
\be\ba  
t_H (1) = (1,(13)(24)) \ , \ t_G(1) = ((1\leftrightarrow 3), {\rm Ad}_{(1234)})  \\
\ea\ee
In this way, we successfully made $t_G$ injective by decorating some group with order-4 automorphism element.

One may ask if we can choose an abelian $K$. The answer is no, since we hope for both $t_H$ and $t_G$ to be faithful, and ${\rm im}\ptl\ne \{ e_G\}$, $t_G\circ {\rm im}\ptl \ne \{ \rm id._{\Aut(K)}\}$. But an abelian $K$ will most certainly have ${\rm Ad}_\cdot (K) =\{ \rm id._{\Aut(K)}\}$.

\subsection{Relation to 2-representations of weak 2-groups}
\label{sec:rel-weak-2-rep}
Here we review the 2-representations for weak 2-groups, see e.g. \cite{Bartsch:2023pzl}, and observe how it could be related with automorphic 2-representations for strict 2-groups.

For a weak 2-group $(\Pi_1,\Pi_2,\rho,\beta)$, a 2-representation is labeled by $(n,\sigma,\chi_i,c_i)$, $(i=1,\dots,n)$. The object for the 2-representation is $n$ line operators $L_i(\gamma)$, $i=1,\dots,n$. $\sigma:\Pi_1\rightarrow S_n$ is a permutation of the line operators. 

An element $g\in\Pi_1$ acts on $L_i(\gamma)$ as
\be
g\cdot L_i(\gamma)=L_{\sigma_g(i)}(\gamma)\,,
\ee
where $\sigma_g(i)\in\{1,\dots,n\}$.

$\chi_i:\Pi_2\rightarrow U(1)^n$ is a $n$-dimensional unitary representation of $\Pi_2$, which obeys
\be
\label{chi-cond1}
\chi_i(a)\chi_i(b)=\chi_i(ab)\,.
\ee
It is also called the character of the 2-representation.

An element $a\in\Pi_2$ acts on $L_i(\gamma)$ as
\be
a\cdot L_i(\gamma)=\chi_i(a)L_i(\gamma)\,.
\ee
The functions $\chi$ also satisfy the constraint
\be
\label{chi-cond2}
\chi_i(a)=\chi_{\sigma_g(i)}(\rho_g(a))\,.
\ee

Finally, we have $c_i:\Pi_1\times\Pi_1\rightarrow U(1)^n$ $(i=1,\dots,n)$, which describes the additional phase factor when two $\Pi_1$ symmetry generators act on the same line operator, see Figure 14 of \cite{Bartsch:2023pzl}. They are required to satisfy
\be
\label{c-cond}
(\delta_\sigma c)_i(g,h,k)=\chi_i(\beta(g,h,k))\,.
\ee
Hence for a split weak 2-group with $\beta=0$, $c$ is an element of $H^2_\sigma(B\Pi_1,U(1)^n)$.

Two 2-representations $(n,\sigma,\chi,c)$ and $(n,\sigma',\chi',c')$ are equivalent if there exists a permutation $\tau\in S_n$ such that
\be
\sigma'=\tau\circ\sigma\circ\tau^{-1}\ ,\ [c']=[\tau\cdot c]\ ,\ \chi'=\tau\cdot \chi\,.
\ee

How does the weak 2-representation relate to the automorphism 2-representation for strict 2-groups? One way is to construct the following commutative diagram.
\be\label{diag:weakstrict2-rep}
\begin{tikzcd}
                        & \Pi_2 \arrow[r, "i" description] \arrow[dd, "\chi" description] & H \arrow[r, "\ptl" description] \arrow[dd, "t_H" description] & G \arrow[r, "p" description] \arrow[dd, "t_G" description] & \Pi_1 \arrow[rd] \arrow[dd, "\sigma" description] &   \\
1 \arrow[ru] \arrow[rd] &                                                                 &                                                               &                                                            &                                                   & 1 \\
                        & U(1)^n \arrow[r, "i'" description]                              & A^\times \arrow[r, "{\rm Ad}_\cdot" description]              & {\rm Aut}(A) \arrow[r, "p'" description]                   & S_n \arrow[ru]                                    &  
\end{tikzcd}
\ee
By carefully choosing the algebra $A$, and respectively the homomorphisms $t_{G,H}$, we shall ensure that $(t_H,t_G,A)$ forms an automorphism 2-representation, and the lower row forms a crossed module. 

Notably, the role of $c$ is played by $F$ in the procedure of obtaining the Postnikov class from strict 2-group in section \ref{sec:2-group}. More precisely, we should have
\be  i'(c(g_1,g_2)) = t_H(F(g_1,g_2)) \ , \ee
which in terms re-generate (\ref{cocycle-F}) mapped into the lower crossed module in (\ref{diag:weakstrict2-rep}).

\subsubsection{Example: $\Pi_1 = \Pi_2 = \bbZ_2$}

Let us shed some concreteness into the construction relating strict and weak 2-representations through the simplest discrete example. Unless specified, we will still adopt the abbreviation by taking group algebra and only consider the automorphism of groups. In this example, we do not consider one particular outer automorphism of $U(1)$,
\be \alpha_i : U(1) \to U(1)\ , \ e^{i\theta} \mapsto e^{-i\theta} \ .   \ee

\paragraph{Trivial $\beta$:} In the case where the crossed module is
\be     1\to \mb{Z}_2\stackrel{i}{\longrightarrow} \mb{Z}_4\stackrel{\ptl}{\longrightarrow} \mb{Z}_4\stackrel{p}{\longrightarrow}\mb{Z}_2\to 1
 \ee
with the action $\rhd$ trivial, one can have myriad choices of weak 2-representations. In the follows, we will discuss some of these weak 2-representations $(n,\sigma,\chi_i,c_i)$ and relate them with automorphism 2-representations.

\begin{enumerate}
\item  When $n=1$, $\sigma$ has to be trivial, and we can choose $\chi_1$ to be either $\chi_1(a)=1$ or $\chi_1(a)=(-1)^a$. $c_1\in C^2(\mb{Z}_2,U(1))$ is a choice independent of $\chi_1$.

In the strict description, we choose $A^\times = U(1)$, $t_G(1) = {\rm id.}_A$, hence the actual ${\rm Aut}(A)$ does not matter. There are 2 choices for $t_H$, $t_H(1) = \pm 1$ if $\chi_1(a) =1 $; $t_H(1) = e^{\pm \frac{\pi i}{2}}$ if $\chi_1(a) = (-1)^a$. Any of the above choices yields trivial Postnikov class and gives trivial $c$ up to coboundary.

\item When $n=2$, there are two choices for $\sigma$: 

\begin{enumerate}[label=(\alph*)]
    \item $\sigma$ is trivial, and one may choose $\chi_1$, $\chi_2$, $c_1$, $c_2$ independently. The representation is a direct sum $(1,1,\chi_1,c_1)\oplus(1,1,\chi_2,c_2)$.

    In the strict description, one correspondingly choose $A^\times = U(1)^2$, $t_H = (\pm 1, \pm 1)$ being the direct sum of previous results, and $t_G(1) = {\rm id}$.
    
    \item $\sigma(0)={\rm id}$, $\sigma(1)=(12)$. In this case, because of the condition (\ref{chi-cond2}), we have to impose
    \be
    \chi_1(a)=\chi_2(a)\quad \forall a\,.
    \ee
    Hence we have either $\chi_1(a)=\chi_2(a)=1$ or $\chi_1(a)=\chi_2(a)=(-1)^a$. 

    The choices of $c_1$, $c_2$ are independent of $\chi_1,\chi_2$, they need to satisfy
    \be
    \label{Z2Z2-2d-2rep}
    c_{\sigma_g(i)}(h,k)c_i(g,hk)c_i^{-1}(g,h)c_i^{-1}(gh,k)=1\quad (i=1,2)\,.
    \ee
    In this case we can still choose $A^\times = U(1)^2$, and take $S_2\subset \Aut(A)$ to be where $\im(t_G)$ resides. Then we can still have 2 choices for $t_H$, $t_H(1) =( \pm 1,\pm 1)$ if $\chi_1(a) =1 $; $t_H(1) = (e^{\pm \frac{\pi i}{2}},e^{\pm \frac{\pi i}{2}})$ if $\chi_1(a) = (-1)^a$. For $t_G$, we take $t_G(1) = (12)$. One can check that this choice will make the diagram commutative while preserving the action.
    \end{enumerate}

\end{enumerate}

\paragraph{Non-trivial $\beta$:} Now suppose the crossed module is same as before,
\be     1\to \mb{Z}_2\stackrel{i}{\longrightarrow} \mb{Z}_4\stackrel{\ptl}{\longrightarrow} \mb{Z}_4\stackrel{p}{\longrightarrow}\mb{Z}_2\to 1
 \ee
but the action $\rhd$ is non-trivial, thus yielding the Postnikov class non-trivial. The situation becomes more interesting.
\begin{enumerate}
    \item When $n=1$, $\sigma$ has to be trivial, and we can choose $\chi_1$ to be either $\chi_1(a)=1$ or $\chi_1(a)=(-1)^a$.
    \begin{enumerate}[label = (\alph*)]
        \item If $\chi_1(a)=1$, $c_1\in C^2(\mb{Z}_2,U(1))$. The strict automorphic representation is same as before, since we do not allow any information of action (and Postnikov class) to appear.
        \item If $\chi_1(a)=(-1)^a$, $c_1$ satisfies
        \be
        c_1(h,k)c_1(g,hk)c_1^{-1}(g,h)c_1^{-1}(gh,k)=\chi_1(\beta(g,h,k))\,.
        \ee
        However, the above equation has no solution, hence when this weak 2-group has a non-trivial $\beta$, there is no one-dimensional non-trivial representation.
    \end{enumerate}
    \item When $n=2$, if $\sigma$ is trivial, the 2-representation is a direct sum of two 1-dimensional 2-representations. And we do not repeat the details here. 
    
    However, if $\sigma(1) = (12)$, from (\ref{chi-cond2}) we have to impose
    \be
    \chi_1(a)=\chi_2(a)\quad \forall a\,.
    \ee
    There are two choices for $\chi_i$:
    \begin{enumerate}[label = (\alph*)]
        \item If $\chi_1(a)=\chi_2(a)=1$, the choices of $\chi_i$ satisfy the same equation of (\ref{Z2Z2-2d-2rep}), hence the strict automorphic 2-representation has the same structure.
        \item If $\chi_1(a)=\chi_2(a)=(-1)^a$, the equations (\ref{c-cond}) for $c_i$ become
        \be\label{eq:z2z2strictweaknontrivc}
        \ba
        &c_1(0,0)=c_1(0,1)=c_2(1,0)\,,\cr
        &c_2(0,0)=c_2(0,1)=c_1(1,0)\,,\cr
        &c_2(1,1)c_1(1,0)c_1^{-1}(1,1)c_1^{-1}(0,1)=-1\,.
        \ea
        \ee
        There are eight choices of $(c_1,c_2)$. Using a construction resembling the non-trivial Postnikov class case of section \ref{sssec:Z2Z2-2Rep}, we give a commutative diagram
        \be
        \begin{tikzcd}
        & \bbZ_2 \arrow[r] \arrow[dd, "\chi" description] & \bbZ_4 \arrow[r] \arrow[dd, "t_H" description] & \bbZ_4 \arrow[r] \arrow[dd, "t_G" description] & \bbZ_2 \arrow[rd] \arrow[dd, "\sigma" description] &   \\
        1 \arrow[ru] \arrow[rd] &                                                 &                                                &                                                &                                                    & 1 \\
        & U(1)^2 \arrow[r]                                & U(1)^2\times S_4 \arrow[r]                     & \Aut(U(1)^2\times S_4) \arrow[r]               & {\rm Out}(U(1)^2\times S_4)\arrow[ru]                                     &  
        \end{tikzcd}\ee
        where we take
        \be\ba
        &t_H(1) = \Bigg(\mqty(e^{\pm \frac{\pi i}{2}} & 0 \\ 0 & e^{\pm \frac{\pi i}{2}}) , (1234) \Bigg) \ , \ t_G(1) = ((12) , (1234))\in S_2\times S_4 \subset \Aut(U(1)^2\times S_4) \\
        &i_A : U(1)^2\to U(1)^2\times S_4\ , \ \mqty(-1 & 0 \\ 0 & -1) \mapsto  \Bigg( 
 \mqty(-1 & 0 \\ 0 & -1) , (13)(24) \Bigg)    \\
        &\sigma(1) = (12) \in {\rm Out}(U(1)^2\times S_4) \ \ , \ \ \sigma(1)\cdot \mqty (a &  0 \\ 0 & b) = \mqty (b &  0 \\ 0 & a)  \ , \ \forall a,b\in U(1) \ .
        \ea\ee
        One can verify that this diagram is indeed commutative and is consistent with the group action.
        
        With this construction, we can uncover the information of $[c]$ in 2-representation by choosing ${\rm im}(t_H)$, where each diagonal element $e^{\pm\frac{\pi i}{2}}$ can take the $\pm$ independently. If $t_H(1)\propto {\rm diag}(1,1)$, then the swapping generated by $t_G(1)$ and $\sigma(1)$ acts trivially;  if $t_H(1)\propto {\rm diag}(1,-1)$, $t_G(1)$ acts non-trivially on ${\rm im}(t_H)$, rendering the automorphism 2-representation faithful, while $\sigma$ acting on ${\rm im}(\chi)$ is still trivial by definition. It's the two classes of choices gives different $[c]$ information: 
        the $t_H(1)\propto {\rm diag}(1,1)$ case cannot generate $c$ satisfying Eq.(\ref{eq:z2z2strictweaknontrivc}) from $F$; while the $t_H(1)\propto {\rm diag}(1,-1)$ can generate a non-trivial Postnikov class in the 2-representation of our choice, in other words, it faithfully represent the Postnikov class information.
    \end{enumerate}
\end{enumerate}

\subsection{3-representations}
\label{sec:3-group-0Pi1}

In this section, we discuss the 3-representations of 3-group $(G,H,L,\ptl_1 , \ptl_2,\rhd,\{ , \} )$ in the special case of $\Pi_1 = G/{\rm im}(\ptl_1) = 0$.

When $\Pi_1 = 0$, we can choose $G$ to be trivial in the semi-strict 3-group. In this case, the 3-group is essentially characterized by a 2-group structure $(H,L,\ptl_2,\rhd')$ as described by \eqref{eq:2-subgroup}.
Therefore, we can still use the automorphism representation for 2-groups to construct the representation of 3-groups. We will now see for the simplest case how this works.

\paragraph{Example: $\Pi_2 = \Pi_3 = \bbZ_4$.}
    In this example, we consider the case when $\Pi_1 = 0$ and $\Pi_2 = \Pi_3 = \bbZ_4$. With the choice of $\ptl_2 = \times 2$, to determine the 2-group $(\Pi_2,\Pi_3,\ptl_2,\rhd')$, we long to see what choices of Peiffer lifting could give trivial or non-trivial actions $\rhd'$. The results of calculation is presented in Table \ref{tab:3-rep-pi1=0example}. We can construct both the case of trivial action and the case of non-trivial action with suitable choices of the quadratic form $F$, which correspond to elements of $H^4(B^2\mb{Z}_4;\mb{Z}_4)$. With the 2-group structure determined, we can extradite this case back to section \ref{sssec:Z2Z2-2Rep} where the 2-representation of this particular 2-group is explicitly constructed.
    \begin{table}[htbp]
    \label{tab:3-rep-pi1=0example}
        \begin{center}
\begin{tabular}{|c|c|c|c|}
\hline
 $F$ & $\rhd '$ & $F$ & $\rhd '$                     \\ \hline
$\bp 0 & 0 & 0 & 0\\0 & 0 & 0 & 0\\0 & 0 & 0 & 0\\0 & 0 & 0 & 0\ep$ & trivial & $\bp 0 & 0 & 0 & 0\\0 & 2 & 0 & 2\\0 & 0 & 0 & 0\\0 & 2 & 0 & 2\ep$ & trivial\\ \hline
 $F$ & $\rhd '$ & $F$ & $\rhd '$                     \\ \hline
$\bp 0 & 0 & 0 & 0\\0 & 1 & 2 & 3\\0 & 2 & 0 & 2\\0 & 3 & 2 & 1\ep$ & non-trivial & $\bp 0 & 0 & 0 & 0\\0 & 3 & 2 & 1\\0 & 2 & 0 & 2\\0 & 1 & 2 & 3\ep$ & non-trivial\\ \hline
\end{tabular}

\caption{This table shows the choice of quadratic forms $F:\bbZ_4\times \bbZ_4\to \bbZ_4$ and the resulting action $\rhd': \bbZ_4\to \Aut(\bbZ_4)\cong \bbZ_2$ in the sub 2-group structure. In this table, $F= (F)_{4\times 4}$ means that $F(i,j) = F_{i+1,j+1}$ for $i,j\in \bbZ_4$.}\label{tab:3-rep-pi1=0example}
    \end{center}
\end{table}


\section{Higher gauge theory in path space}
\label{sec:Higher-gauge}

\subsection{Path space formalism}
\label{sec:Path-Space}

In this section, we formulate the 2-group gauge theory in the path space following \cite{BaezSchreiber2005}. We denote by $M$ the base manifold, $(G,H,\ptl,\rhd)$ the Lie 2-group, and the Lie algebras take basis
\be\ba  
\mathfrak{g} &= {\rm Span}(T^a ,  \ a = 1,\cdots ,{\rm dim}(\mathfrak{g}) ) \\   
\mathfrak{h} &= {\rm Span}(S^a ,  \ a = 1,\cdots ,{\rm dim}(\mathfrak{h}) ) \\   
\ea\ee
and we follow the mathematicians' notation of gauge group and holonomy as in \cite{BaezSchreiber2005} (i.e. a Wilson line is $\exp(\int_\gamma A)$ instead of $\exp(i\int_\gamma A)$).
    For two points $s,t\in M$, $\CP_s^t(M)$ denotes the \textbf{path space},
    \be  \CP_s^t(M) = \{ X:[0,1]\to M , \ X(0) = s , \ X(1) = t  \} \,. \ee
   We also define the evaluation map
    \be  e_\sigma : \CP_s^t \to M , \ X\mapsto X(\sigma)  \ , \ee
    which would induce a natural pullback,
    \be  e_\sigma^* : \Omega^p(M) \to \Omega^p(\CP_s^t(M))   , \ \omega \mapsto e^*_\sigma (\omega) \equiv \omega(\sigma ) \ .  \ee
The specific components are
\be \omega(\sigma) \equiv e^*_\sigma ( \omega ) = \omega_{\mu_1 \cdots \mu_p } \pdv{X^{(\mu_1,\sigma)}}{X^{(\nu_1,\rho_1)}} \cdot \pdv{X^{(\mu_2,\sigma)}}{X^{(\nu_2,\rho_2)}} \dots \pdv{X^{(\mu_p,\sigma)}}{X^{(\nu_p,\rho_p)}}  ( X( \sigma ) )  \dd X^{(\nu_1,\rho_1)}\wedge \cdots  X^{(\nu_p,\rho_p)} \ .  \ee
Given that
\be  \pdv{X^{(\mu_i,\sigma)}}{X^{(\nu_i,\rho_i)}} = \delta^{\mu_i}_{\nu_i} \delta(\sigma-\rho_i)  \ , \ee
the components are explicitly
\be  \omega(\sigma ) = \omega_{\mu_1 \cdots \mu_p } ( X ( \sigma ) ) \  \dd X^{(\mu_1,\sigma)}\wedge \cdots \wedge \dd X^{(\mu_p,\sigma)} \ . \ee
As such, the exterior differential $\mathbf{d}$ on the path space $\CP_s^t(M)$ is written as
\be 
\mathbf{d} = \dd X^{(\mu , \sigma)} \wedge \frac{\delta}{\delta X^{( \mu , \sigma )}} \ .
\ee
For a given line $X$, there is a vector field generating reparameterizations,
\be K(X) = \dv{X}{\sigma} \ . \ee
Given a family of forms $\{ \omega_i \}$ on $M$, we introduce an integral
\be\ba
\Omega_{\{\omega_i\} , (\alpha,\beta)} &\equiv \oint_{X\large| _\alpha^\beta} (\omega_1, \cdots, \omega_n) \\
&\equiv \int_{\alpha\le\sigma_i\le\sigma_{i+1}\le \beta}  \iota_K \omega_1(\sigma_1) \wedge \cdots \wedge  \iota_K \omega_n(\sigma_n)
\ea\ee
If each ${\rm deg}(\omega_i) = p_i+1$, then ${\rm deg}(\Omega_{\{\omega_i \}}) = \sum_i p_i$.

Now we introduce the line holonomy,
\be\ba
W_A[X](\sigma_1,\sigma_2) &\equiv \CP \exp(\int_{X\large|_{\sigma_1}^{\sigma_2}} \ A) \\  
&\equiv \sum_{n=0}^\infty \oint_{X\large| _{\sigma_1}^{\sigma_2}} (A^{a_1} , \cdots, A^{a_n})\cdot T^{a_1} \cdots T^{a_n} 
\ea\ee
The parallel transport of $\mathfrak{g}$-valued element is
\be\ba
T^A(\sigma ) &\equiv T^{W_A[X]}(\sigma) = W_A^{-1}[X](\sigma,1) T(\sigma) W_A[X](\sigma ,1) \\ 
S^A(\sigma)  &\equiv S^{W_A[X]}(\sigma) = W_A^{-1}[X](\sigma,1) \rhd S(\sigma) \ .
\ea\ee
Now we abbreviate the integral for $\mathfrak{g}$-valued forms on $M$,
\be\ba  
\oint_A (\omega_1 , \cdots ,\omega_n) &\equiv \oint (\omega_1^{W_A} , \cdots ,\omega_n^{W_A})  \\
&= \int_{\alpha\le\sigma_i\le\sigma_{i+1}\le \beta}  \iota_K (W_A^{-1}[X](\sigma_1,1)\omega_1(\sigma_1) W_A[X](\sigma_1 ,1)) \wedge \cdots \\ 
&\quad\quad \quad\quad \quad\quad\quad \   \wedge  \iota_K (W^{-1}_A[X](\sigma_n ,1)\omega_n(\sigma_n)W_A[X](\sigma_n ,1))
\ea\ee
and similarly for the $\mathfrak{h}$-valued forms, we substitute the adjoint with 2-group action.

More precisely, for $\oint_A (a)$ for an $\mathfrak{h}$-valued 1-form $a$, the explicit expression is
\be\ba
a\in \mathfrak{h}\otimes \Omega^1(M) &\mapsto a(\sigma) \equiv e^*_\sigma (a) \in \mathfrak{h}\otimes \Omega^1(\CP_s^t(M)) \\ 
&\mapsto W_A^{-1}[X](\sigma,1)\rhd a(\sigma) \in \mathfrak{h}\otimes \Omega^1(\CP_s^t(M))\\
&\mapsto  \iota_K( W_A^{-1}[X](\sigma,1)\rhd a(\sigma)) \in \mathfrak{h}\\
&\mapsto  \oint_A(a) \equiv \int_0^1\dd\sigma \  \iota_K( W_A^{-1}[X](\sigma,1)\rhd a(\sigma)) \in \mathfrak{h} \ .
\ea\ee
Notably, the integral  $\oint_A$ helps us to see how a correspondence 
\be  \Omega^p(M) \leftrightarrow \Omega^{p-1}(\CP(M))  \ee
takes place, the important thing here is that a 1-form connection $\CA$ in path space shall correspond to a 2-form field $B$ in the real space, and a first-order derivative $\mathbf{d}$ of the path space correspond to an area derivative in the real space.
Notably, there is the following formula for $\omega \in \Omega^k(M)$,
\be\label{eq:pathd}  \mathbf{d}\oint_A (\omega) = -\oint_A(\dd_A \omega) - (-1)^{{\rm deg}(\omega)} \oint_A ( T^a\rhd \omega , F_A^a) \ . \ee

Given $A$ a $\mathfrak{g}$-valued 1-form and $B$ an $\mathfrak{h}$-valued 2-form on $M$, we define the path space 1-form
\be  
\label{Cont-path space-1-form}
\CA_{(A,B)} \equiv \oint_A (B) = \int_0^1 \dd\sigma \ \iota_K( W_A^{-1}[X](\sigma,1)\rhd B(\sigma))   \ \in \Omega^1(\CP_s^t(M))\otimes \mathfrak{h}  \ee
and the path space holonomy is thus ($\Sigma$ in $\CP_s^t(M)$ is a curve from $X_0$ to $X_1$, which sweep through a surface in $M$)
\be \CW_\CA (\Sigma) \equiv \CP\exp(\int_\Sigma \CA) \ , \ee
which admits 2 kinds of gauge transformations.

The first kind of gauge transformation is for a given $\phi\in \Omega^0(M,G)$, when the path space is a loop space $\CP_x^x(M)$,
\be\ba
\CW_{\CA_{(A',B')}} (\Sigma) &= \phi(x) \rhd \CW_{\CA_{(A,B)}} (\Sigma)  \\
A' &= \phi A\phi^{-1} + \phi \dd \phi^{-1}\\
B' &= \phi\rhd B \ .
\ea\ee
The second kind of gauge transformation is for a given $a\in \Omega^1(M,\mathfrak{h})$, (infinitesimal version)
\be\ba 
\CW_{\CA_{(A',B')}} (\Sigma\large|_{X_0}^{X_1}) &= (1-\epsilon\oint_A(a))_{X_0} \CW_{\CA_{(A,B)}} (\Sigma\large|_{X_0}^{X_1}) (1+\epsilon\oint_A(a))_{X_1} \\
A'&= A + \epsilon \ \ptl(a) \\
B'&= B - \epsilon \ \dd_A a
\ea \ee
the finite version of the second kind is denoted $a_{\gamma}$ in the above.

Now we see precisely what should $\Phi$ the matter field be. We would like to have a term looking like
\be  (\mathbf{d}+\CA) \cdot \Phi \ , \ee
so $\Phi \in \Omega^0(\CP_s^t(M) , V )  $, where $\CA$ admits a representation on $V$. Also, for the 0-form action $g$ to be well-defined, we should assert that $\Phi\in\Omega^0(\CP_x^x(M),V)$.

\subsection{Path space derivative}
\label{sec:line-space-deriv}

In this subsection we provide more detailed explanations of the exterior derivative in the path space.

In the above, the basis for differential are denoted as $\dd X^{(\mu,\sigma)}$, which could be depicted as infinitesimally dragging the line segment at $\sigma$ along the $\mu$ direction, pictorially shown as Figure~\ref{fig:lineseg}.

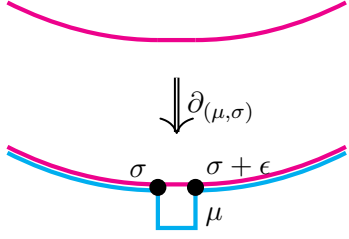
\begin{figure}[htbp]
    \centering
           
\begin{tikzpicture}
\draw[magenta, ultra thick] (-1,0.5) parabola bend (1,0) (1,0);
\draw[magenta, ultra thick] (1,0)--(1.5,0);
\draw[magenta, ultra thick] (1.5,0) parabola bend (1.5,0) (3.5,0.5);
\draw[double,->] (1.25,-0.5)--(1.25,-1.25) node[above right]{$\ptl_{(\mu,\sigma)}$};
\draw[shift = {(0,-2)},cyan, ultra thick] (-1,0.5) parabola bend (1,0) (1,0);
\draw[shift = {(0,-1.92)},magenta,ultra thick] (-1,0.5) parabola bend (1,0) (1,0);
\draw[shift = {(0,-1.92)},magenta,ultra thick] (1,0)--(1.5,0);
\draw[shift = {(0,-1.92)},magenta,ultra thick] (1.5,0) parabola bend (1.5,0) (3.5,0.5);
\draw[shift = {(0,-2)},cyan, ultra thick] (1.5,0) parabola bend (1.5,0) (3.5,0.5);
\draw[shift = {(0,-2)},cyan,ultra thick] (1.5,0)--(1.5,-0.5)--(1,-0.5)--(1,0);
\filldraw[shift = {(0,-1.96)},black] (1.5,0) circle (0.1) 
node[above right]{$\sigma + \epsilon$}
(1,0) circle (0.1) node[above left]{$\sigma$}; 
\node at(1.75,-2.35){$\mu$};

\end{tikzpicture}
    \caption{Here we demonstrate the picture of path space  derivative. The above shows the original line in magenta, the below shows both the original line in magenta and the deformed line in cyan. The line in cyan is formed by a deviation from $\sigma$ to $\sigma + \epsilon$ along the $\mu$ direction.}
    \label{fig:lineseg}
\end{figure}
Formally speaking, we consider two vectors $t^\mu$ and $n^\mu$, which are tangent vector of the line and the normal vector that generates the shift of line segment at $\sigma$. The area of the zone swept through by the line segment is therefore
\be  \delta A^{\mu\nu} = n^\mu t^\nu - n^\nu t^\mu \ .  \ee
We can re-write this in the path space language, suppose $\delta X^\mu(\sigma)$ denotes such a deviation and plays the role of normal vector, and $K= \dd X/\dd \sigma$ plays the role of tangent vector, thus
\be
\delta A^{\mu\nu} = \int_0^1 \dd\sigma \ \delta X^{[\mu} K^{\nu]}\Big|_\sigma \ .
\ee
When we take $\delta X^\mu(\sigma)$ to be localized in a certain point $\gamma(\sigma_0)$, namely
\be
\delta X^\mu(\sigma)= \delta v^\mu \delta(\sigma-\sigma_0)
\ee
one can decompose $v^\mu = v_{\rm n}^\mu + v_{\rm t}^\mu$ into components normal and tangent to the chosen path. By doing this, one finds out that only normal component contributes to the area, while tangent components generate reparameterization. To sum up, the localized path space derivative goes back to the area derivative described in \cite{Iqbal_2022}.


\subsection{Pure 1-form gauge theory}
\label{sec:1-form-gauge}

In this subsection, we observe that this formulation derives the mean string field theory formulation in \cite{Iqbal_2022}, when $\Pi_1=G=0$.

Consider the case of gauge theory with pure 1-form symmetry described by a Lie group $H = \Pi_2$. The covariant derivative written in explicit formula is
\be  \mathbf{d} + \CA = \mathbf{d} + \int_0^1 \dd \sigma \ \iota_K (B) \,. \ee
The exact meaning of $\mathbf{d}$ in the path space can be perceived from using the (\ref{eq:pathd}) . In the absence of group $G$, (\ref{eq:pathd}) reads
\be  \mathbf{d}\int_0^1 \dd \sigma \ \iota_K( \omega)  =  -\int_0^1 \dd \sigma \ \iota_K(\dd \omega) \ .\ee
Consider the action on the matter field,
\be\ba
(1  - \epsilon \oint_A(a) ) &\mapsto \exp(- \int_0^1\dd \sigma  \  \iota_K(a))  = \exp(-\oint (a))\,,\\ 
\mathbf{d} \exp(-\oint (a)) &= \exp(\oint(a)) \cdot \oint(\dd a) \ .
\ea \ee
While its action on the 2-form gauge field $B$ gives
\be  \CA = \oint (B)  \mapsto \CA' = \oint (B- \dd a )\,.  \ee
Since the $G$-action is trivial, we omit the $A$ in $\oint_A$. Thus, a matter field $\Phi$ in the path space transforming as
\be  \Phi \mapsto \exp(-\oint (a)) \cdot \Phi  \ee
generates the covariant term
\be\ba  
(\mathbf{d} + \CA)\cdot \Phi \mapsto & (\mathbf{d} + \CA’)\cdot \exp(-\oint (a))\Phi \\
=&(\mathbf{d}\exp(-\oint(a)))\Phi + \exp(-\oint(a)) \mathbf{d}\Phi + \oint(B+\dd a) \exp(-\oint(a))\Phi\\
=&\exp(-\oint(a)) \cdot [ \mathbf{d}\Phi + \oint(B+\dd a - \dd a) \Phi  ]\\
=& \exp(-\oint(a))(\mathbf{d} + \CA)\cdot \Phi \ .
\ea\ee
The resulting covariant derivative on $\Phi$  exactly matches the one in \cite{Iqbal_2022} (with a difference of conventions). Along this line, we can choose a unitary representation for $H$, and construct the kinetic terms and mass terms which lead to a Landau-Ginzburg Lagrangian for the 1-form symmetry $H$.

\subsection{Discrete gauge fields}
\label{sec:discrete-gauge}

For the discrete case, we take a triangulation of the spacetime manifold $M$, with each vertex labelled by $i$. A path on $M$ is defined to be a ordered set of vertex labels
\be \gamma = [i_1 , i_2 , \dots  , i_n] \ee
with each link $(i_k,i_{k+1})$ being a 1-simplex in the triangulation. A 1-form field $A\in\Omega^1(M , \mathfrak{g})$ is translated as an assignment of each link with a $G$ element, with
\be A_{ij} = A_{ji}^{-1}  \ .  \ee
Likewise, a 2-form field $B\in \Omega^2(M,\mathfrak{h})$ is translated as an assignment of each 2-simplex with an $H$ element.

As we previously described, the path space derivative is generated by deviation of a path on a certain point, which should be translated as
\be\label{eq:finite-path-diff} \Delta_{k,j}: \gamma = [i_1,\dots,i_n] \mapsto \gamma' = [i_1 , \dots , i_k , j , i_{k+1},\dots, i_n]  \ee
where $(i_k, j , i_{k+1})$ forms a 2-simplex \footnote{To make things consistent, we induce an identification of $[i_1,\dots,i_{k-1},i_{k},j,i_{k},i_{k+1}\dots,i_n]\cong[i_1,\dots , i_{k-1},i_k,i_{k+1},\dots , i_n]$.}. Thus, a path space 1-form should be defined as
\be  \CA_{\gamma , \Delta_{k,j}(\gamma)} \in H \ , \ \CA_{\gamma_1,\gamma_2} = \CA_{\gamma_2,\gamma_1}^{-1}  \ . \ee
Following the continuous version (\ref{Cont-path space-1-form}), we could define the discrete version of path space 1-form as
\be  \CA_{\gamma , \Delta_{k,j}(\gamma)} = (W_A[i_{k+1},\dots,i_n])^{-1} \rhd B_{i_{k+1},j,i_{k}}  \ .\ee
For multiple such line segments placed together, the 2-holonomy should be an ordered product
\be  \CW_\CA  = \prod_{k_{i}\le k_{i+1}} \CA_{\gamma ,\Delta_{k_i , j_i}(\gamma)} \ . \ee
Under the 0-form gauge transformation, the 1-form connection $\CA$ of path space transforms as
\be\ba
(W_A[i_{k+1},\dots,i_n])^{-1} \mapsto  g_{i_n}\cdot (W_A[i_{k+1},\dots,i_n])^{-1} \cdot g_{i_{k+1}}^{-1}  \ &, \ B_{i_{k+1},j,k} \mapsto  g_{i_{k+1}} \rhd B_{i_{k+1},j,k} \\
\Rightarrow \CA_{\gamma , \Delta_{k,j}(\gamma)} \mapsto g_{i_n} \rhd  \CA_{\gamma , \Delta_{k,j}(\gamma)} \ ,
\ea \ee
For 1-form transformation, we only have to do a direct translation. But if we take the assumption we had before in section~\ref{1FormGaugeTransform}, $\CA$ shall become
\be
\CA_{\gamma , \Delta_{k,j}(\gamma)} \mapsto {\rm Ad}_H(\Lambda^{-1}_{i_{n-1}i_n}) \circ A^{-1}_{i_{n-1}i_n} \rhd \dots  {\rm Ad}_H(\Lambda^{-1}_{i_{k+1}i_{k+2}}) \circ A^{-1}_{i_{k+1}i_{k+2}} \rhd (B_{i_{k+1}j i_k} (\delta_A \Lambda)_{i_{k+1}j i_k})
\ee
for computational convenience, if  we take the assumption that $H$ is abelian\footnote{Note that by assuming $H$ is abelian, we do not mean that  the chosen group algebra of the automorphism 2-representation should also be abelian. Suppose that the automorphism 2-representation is faithful and $H$ is abelian, we should still have ${\rm Im}(t_H)$ is abelian, and the formulas we introduce here will still hold.}, we get
\be\ba
\CA_{\gamma , \Delta_{k,j}(\gamma)} \mapsto &W_A^{-1}[i_{k+1} ,\dots , i_{n}]\rhd B_{i_{k+1}j i_k} \cdot W_A^{-1}[i_{k+1} ,\dots , i_{n}]\rhd (\delta_A\Lambda)_{i_{k+1}j i_k} \\
= &\CA_{\gamma , \Delta_{k,j}(\gamma)} \cdot W_A^{-1}[i_{k+1} ,\dots , i_{n}]\rhd (\delta_A\Lambda)_{i_{k+1}j i_k}
\ea \ee
Note that in the discrete scenario, we still require that the fake curvature should vanish, meaning that
\be \delta A = \ptl (B)  \ee
as \v{C}ech cocycles.

We can also see at what the 1-form transformation looks like in the discrete case. A 1-form transformation is defined as ($\epsilon = 1/N$)
\be\label{eq:finite1-formgauge}
\lim_{N\to \infty} (1 - \epsilon \oint_A(a) )(1 - \epsilon \oint_{A + \epsilon \ptl(a)}(a) )(1 - \epsilon \oint_{A+ 2\epsilon \ptl(a)}(a) )\dots (1 - \epsilon \oint_{A + \frac{N-1}{N} \ptl(a)}(a) ) 
\ee
and in the discrete language, for $a\in \Omega^1(M)\otimes \mathfrak{h}$, $a$ gives an $H$ element on each link, then for a given line $\gamma = [i_1 , \dots, i_n]$,
\be\label{eq:discreteoint} \exp(\oint_A (a))\mapsto  \prod_{k=n-1}^1 W_A^{-1}[i_{k+1},\dots,i_n]\rhd a[i_k,i_{k+1}]  \ .  \ee
Thus the consistency condition is 
\be 
\exp(-\oint_A (a))_{\gamma_1} \CW_{\CA}(\Sigma_{\gamma_1,\gamma_2}) \exp(\oint_A (a))_{\gamma_2} = \CW_{\CA'}  (\Sigma_{\gamma_1,\gamma_2})\,,
\ee 
which in the case of minimal deformation $\Delta_{k,j}$,
\be\ba  
{\rm LHS} = &W_A^{-1}[i_{k+1},\dots ,i_n]  \rhd B_{i_{k+1},j,i_k} \cdot [W_A^{-1}[i_{k+1} , \dots , i_n] \rhd a_{i_k , i_{k+1}}]^{-1} \cdot W_A^{-1}[i_{k+1} , \dots , i_n] \rhd a_{ j , i_{k+1}} \\ 
&\cdot W_A^{-1}[j , i_{k+1} , \dots , i_n] \rhd a_{i_k , j}\\
= &\CA_{\gamma , \Delta_{k,j}(\gamma)} \cdot W_A^{-1} [i_{k+1},\dots,i_n] \rhd (  A_{i_{k+1},j} \rhd a_{j,i_k}^{-1} a_{i_{k+1},i_{k}} a^{-1}_{i_{k+1},j}    )\\
=&\CA_{\gamma , \Delta_{k,j}(\gamma)} \cdot W_A^{-1} [i_{k+1},\dots,i_n] \rhd (\delta_A (a^{-1})) = {\rm RHS} {\rm ( \ taking \ } a = -\Lambda {\rm \ ) } \ .
\ea \ee
is indeed correct.

\subsection{Discrete 2-matter}
\label{sec:discrete-2-matter}

In this section, we build the non-local matter fields in the discrete case. In this section, we suppose the 2-group is represented by an automorphism 2-group,
\be  1\to Z(\Upsilon^\times) \to \Upsilon^\times \to {\rm Aut}(\Upsilon)\to {\rm Out}(\Upsilon)\to 1   \ee
naturally, the matter field should take value in the algebra $\Upsilon$. Note that we do not require any abelian property of the $\Upsilon$ in the 2-representation we choose.

Suppose we would like the matter $\Phi$ to be a scalar field in the \textbf{path space}, then due to the $\oint_A$ construction, we should build $\Phi$ by an $\Upsilon$-valued 1-form $\phi$ in the \textbf{real spacetime}. We could take the construction
\footnote{There can be another valid construction by substituting sum (in the algebra) with multiplication (in the algebra),
\be \Lambda_\gamma = \prod_{k=1}^n W_A^{-1}[i_{k+1},\dots,i_n] \rhd \phi_{i_k,i_{k+1}} \ , \ee
but as we will see, $\Phi$ is a better choice for discussing 2-matter for higher gauge theories.}
\be \Phi_\gamma = \sum_{k=1}^n W_A^{-1}[i_{k+1},\dots,i_n] \rhd \phi_{i_k,i_{k+1}}  \ , \ee
where under 0-form gauge transformation
\be \phi_{ij}\mapsto g_i\rhd \phi_{ij} \ ,   \ee
$\Phi$ is covariant under this construction,
\be
\Phi_\gamma \mapsto g_{i_n}\rhd \Phi_\gamma \ .
\ee
Now we should consider how $\mathbf{d}_\CA \Phi(\Lambda) $ transforms under 0-form gauge transformation. Note that
\be (\mathbf{d}_\CA \Phi)_{\gamma_1,\gamma_2} = (\CA_{\gamma_1,\gamma_2} \cdot \Phi_{\gamma_2} )\Phi_{\gamma_1}^{-1} \mapsto g_{i_n} \circ (\mathbf{d}_\CA \Phi)_{\gamma_1,\gamma_2}  \ee
where $\circ$ is the adjoint action of element of $H$ on $A=\Upsilon^\times$.
Now let's consider the transformation of $\mathbf{d}_\CA \Phi$ under 1-form gauge transformation,
\be\ba 
\Phi_\gamma \mapsto &\exp(\oint_A(a))_{\gamma} \Phi_\gamma           \\
(\mathbf{d}_\CA \Phi)_{\gamma_1,\gamma_2} \mapsto  &\Big\{ \Big[\CA_{\gamma_1,\gamma_2} \cdot (W_A^{-1}[i_{k+1},\dots , i_n \Big] \rhd (\delta_A a))] \circ \Big[\exp(\oint_A(a))_{\gamma_2} \Phi_{\gamma_2}  \Big] \Big\} \\ 
&\cdot \Big\{ \exp(\oint_A(a))_{\gamma_1} \Phi_{\gamma_1} \Big\}^{-1} \\
= & \exp(\oint_A(a))_{\gamma_1} \cdot \CA_{\gamma_1,\gamma_2} \cdot \Phi_{\gamma_2} \cdot \Phi^{-1}_{\gamma_1} \cdot \exp(-\oint_A(a))_{\gamma_1} \\
= &\exp(\oint_A(a))_{\gamma_1} \cdot (\mathbf{d}_\CA \Phi)_{\gamma_1,\gamma_2} \cdot \exp(-\oint_A(a))_{\gamma_1}
\ea\ee
where $\exp(\oint_A(a))$ is defined in the discrete case by (\ref{eq:discreteoint}), and we see this is indeed covariant under both 0-form and 1-form gauge transformation.


\subsection{Continuous 2-matter}
\label{sec:continuous-2-matter}

The discrete version shed light on how we can define the continuous 2-matter.

Given an $\Upsilon$(the algebra of 2-representation)-valued 1-form $\phi\in \Omega^{1}(M,\Upsilon)$, one can build the 2-matter in path space by
\be  \Phi_{\gamma} \equiv \oint_A (\phi) = \int_0^1 \dd \sigma \ \iota_K(W_A^{-1}(\sigma,1)\circ \phi(\sigma)) \in \Omega^0(\CP_s^t(M),\Upsilon)\,. \ee
Here $\circ$ denotes the $G = {\rm Aut}(\Upsilon)$ element acting on $\Upsilon$-element. But in general, we merely require the 2-matter to be
\be \Phi \in \Omega^0(\CP_s^t(M),\Upsilon) \ . \ee
The gauge transformation for $\phi$
\be \phi_\mu(x) \mapsto  g(x)\circ \phi_\mu(x) \ , \ W_A(\sigma,1)\mapsto g(\gamma(1)) W_A g^{-1}(\gamma(\sigma) )  \ . \ee
induces the transformation
\be  \Phi_\gamma \mapsto g(\gamma(1)) \circ \Phi_\gamma \ . \ee
The 1-form gauge transformation is non-local, since it's related to a path. We write the infinitesimal transformation as
\be \label{eq:1formGT} \Phi_\gamma \mapsto (1+ \epsilon\oint_A(a))_\gamma \circ \Phi_\gamma = (1+ \epsilon\oint_A(a))_\gamma \cdot \Phi_\gamma \ee
where $\circ$ denotes the $H = \Upsilon_0^\times$ element acting on $\Upsilon$-element by adjoint action, and $\cdot$ denotes the multiplication of $\Upsilon$-elements. Similarly, we can deduce the 1-form gauge transformation
\be\ba 
\mathbf{d}_\CA \Phi = &\mathbf{d}\Phi + \CA\circ \Phi \\
\mapsto &\mathbf{d}[(1+\epsilon\oint_A(a))\circ\Phi] + [\CA'(1 + \epsilon\oint_A(a))]\circ\Phi \\
= &(1+\epsilon\oint_A(a))\circ(\mathbf{d} \Phi) \\
&+ (\epsilon \mathbf{d}\oint_A(a))\circ \Phi +\{ [(1+ \epsilon\oint_A)(\mathbf{d}+\CA)(1-\epsilon\oint_A (a))](1+\epsilon \oint_A (a))\}\circ \Phi \\
= &(1+\epsilon\oint_A(a)) \circ (\mathbf{d}_\CA \Phi) + \CO(\epsilon^2) \ .
\ea\ee
Thus we have proved this covariant derivative term is indeed covariant under both 0-form and 1-form gauge transformation. Notably, for the continuous case, we do not require $H$ to be abelian.

\subsection{Towards $n$-matter}
\label{subsec:nmatter}

In this part, we consider a Lie 3-group with $\Pi_1 = 0$. In this case we have showed  that the structure of 3-group is essentially encoded a 2-group, see section \ref{sec:3-group-0Pi1}). Hence we can also describe the 3-representation of the 3-group by the language of 2-representation as depicted in section \ref{sec:3-reppi1=0}. The corresponding spacetime gauge fields are
\be C\in \Omega^3(M,\mfr{l}) \ , \ B \in \Omega^{2}(M,\mfr{h}) \ .   \ee
Since we turn off the zero form part $G$ and its 1-form gauge field $A$ in the 3-group, it is within our purview to write
\be\ba
\CC \equiv \oint(C) \in \Omega^2(\CP(M),\mfr{l})\,, \\  
\CB \equiv \oint(B) \in \Omega^1(\CP(M),\mfr{h})\,. \\  
\ea\ee
We could observe an acquainted visage on the path space! We could once again define
\be \mathscr{D} \equiv \oiint_{\CB} (\CC) \in \Omega^1(\CP(\CP(M)),\mfr{l})  \ee
where $\CP(\CP(M))=\CP^2(M)$ is the surface space of $M$. There should also be a fake-curvature condition
\be \mathbf{d}^{(1)} \CB - \ptl(\CC) = 0  \ , \ee
where $\mathbf{d}^{(k)}$ denotes the exterior derivative in the $k-$path space $\CP^k(M)$. Thus the 3-curvature constructed is
\be \mathcal{Z}  = \mbf{d}^{(1)}_\CB \CC \in \Omega^3(\CP( M) ,\mfr{l}) \ , \ \mathscr{F}_{\CB,\CC} = \oiint_\CB (\CZ)  \in \Omega^2(\CP^2(M),\mfr{l}) \ee

Likewise, given a 2-representation on algebra $\Upsilon$, we can similarly define a matter field $\Psi \in \Omega^0(\CP^2(M), \Upsilon)$, which is a brane field, s.t. $\mbf{d}^{(2)}_\mathscr{D} \Psi$ is covariant under both 1-form and 2-form gauge transformations.

This could be easily generalized to $n$-group with only the $n$-th and $(n-1)$-th categorical layer non-trivial. This generalizes the result in \cite{pace2023topological}.


\section{Landau-Ginzburg model of higher-group symmetries}
\label{sec:LG-model}

\subsection{2-group gauge theory with 2-matter}

With the formalism constructed in the last sections, we summarize the 2-group covariant terms as follows,
\be Z = \mbf{d}_A B \ , \ \CF_\CA = \oint_A(Z) \ , \ \mbf{d}_\CA \Phi \ , \ \Phi  \ee
and the vanishing condition for the fake curvature,
\be  \ptl(B)-F_A = \dd Z+ A\wedge^\rhd Z  = 0 \ee
which could be inserted into the Lagrangian through Lagrange multipliers.

With this regard, suppose both $\Upsilon_0^\times$ and $\Aut(\Upsilon)$ act on $\Upsilon$ unitarily, we propose an action of the following form,
\be\ba   
Z = &\int[\mathscr{D}A][\mathscr{D}B][\mathscr{D}\lambda][\mathscr{D}\phi]        \\
&\exp\Big\{i\int_{\CP(M)} \left[ -\frac{1}{2g^2} |\CF_\CA|^2 + \frac{1}{L(C)}(\mbf{d}_\CA \Phi)^\dagger (\mbf{d}_\CA \Phi)+ V(\Phi,\Phi^\dagger)\right][dC]  + i\int_M \lambda^{(d-2)}\wedge (\ptl(B)-F_A) \Big\}\,.
\ea\ee
$[dC]$ denotes the integration measure of the path space. 

For 3-group with $\Pi_1= 0$, we can build a similar 3-group gauge theory with 3-matter utilizing the 3-gauge-covariant terms constructed in section(\ref{subsec:nmatter}).

\subsubsection{Spontaneously symmetry breaking and area law}
In absence of gauge fields, i.e. $A=B=0$, above Lagrangian reduces to following form
\be\label{eq:2-groupglobal}
S=\int_{\mathcal{P}(M)}\left[\frac{1}{L(C)}(\mbf{d} \Phi)^\dagger (\mbf{d} \Phi)+ V(\Phi,\Phi^\dagger)\right][dC]\,.
\ee
This Lagrangian has following global symmetries:
\begin{enumerate}
    \item $\Phi\to g\circ \Phi$, where $g\in G$. This action is induced by the automorphic 2-representation on $\Upsilon$-valued 1-form $\phi$ (Recall that 2-representation contains a map $G\to \text{Aut}(\Upsilon)$).
    \item $\Phi_{\gamma}\to \e^{i\theta \oint_{\gamma} a} \circ\Phi_{\gamma}$, where $a$ is a $\mathfrak{h}$-valued 1-form. This action is induced by the 2-representation $H\to \Upsilon^{\times}$.
\end{enumerate}

The equation of motion is
\be
\star \mathbf{d} \star \left(\frac{1}{L(C)}\mathbf{d} \Phi\right)- \fdv{V}{\Phi^{\dagger}} = 0\,,
\ee
where $L(C)$ is the length of the path (especially, it is not a constant on the path space) and $\star$ is supposed to be the Hodge start operator on path space $\mathcal{P}(M)$.
In components, we have
\be
\label{eq:2GEOM}
\partial_{\mu,\sigma}\left(\frac{1}{L(C)}\partial^{\mu,\sigma}\Phi\right)-\frac{\delta V}{\delta \Phi^{\dagger}}=0
\ee
The contraction of index $\mu$ is usual Einstein convention, however, $\sigma$ should be thought as a continuous index, hence the contraction of $\sigma$ involves an integration over $\sigma$.

Now, let us focus on loops for now, thus we can talk about the area $A(C)$ bounded by the loop $C$. Notice that $\mathbf{d}A$ should be thought as a 1-form on $\mathcal{P}(M)$. Given a tangent vector, i.e. a deformation vector $\delta X^{\mu,\sigma}$ (they are really vector fields defined along the curve $C$), we have
\be
\mathbf{d}A \left(\frac{\delta}{\delta X^{\mu,\sigma}}\right)=K^{\mu}(\sigma)\,.
\ee
Let us recall that $K^{\mu}(\sigma)=\frac{d X^{\mu}(\sigma)}{d\sigma}$ is the tangent vector of the curve. Below, we fix the parameter $\sigma$ to be the arc length $s$ of path $C$. Hence, we can define
\be\label{modulus_of_dA}
|\mathbf{d}A|^{2}:=\partial_{\mu,\sigma} A \partial^{\mu,\sigma}A=\int_{0}^{L(C)}|K|^{2}ds=L(C)\,.
\ee
Since $|K|^{2}=1$ if $\sigma=s$. Intuitively, above equation means, the infinitesimal variation of area is proportional to length.

Now we focus on the simplest potential $V(\Phi^{\dagger}\Phi)=r|\Phi|^{2}$. Without higher order terms, $r$ should be positive to make sure that $V$ is bounded from below. Besides, we take the ansatz $\Phi=e^{S[A]}$ where $S[A]$ is a functional of $A$ on the path space.

The equation of motion reduces to
\be
S'[A]^{2}+\partial_{\mu,\sigma}\left(\frac{1}{L(C)}S'[A]\partial^{\mu,\sigma}A\right)=r
\ee
where we have used (\ref{modulus_of_dA}) in the first term. If we only consider loops with large areas, for regular curves, we should expect that $L(C)\propto\sqrt{A}$, so the second term is as worst as $S'[A]/\sqrt{A}$, hence it can be omitted safely.

To sum up, we have 
\be
S[A]=-\sqrt{r}A+O(A^{\frac{1}{2}})\,.
\ee
In the lowest order, we can see $\Phi$ decays with area-law.

Now suppose that $r<0$, in that case we have to incorporate the $u|\Phi|^4$ term, making the equation of motion Eq.(\ref{eq:2GEOM})
\be  \partial_{\mu,\sigma}\left(\frac{1}{L(C)}\partial^{\mu,\sigma}\Phi\right)= r \Phi + 2u |\Phi|^2 \Phi \ . \ee
One can consider a static stable solution of the form
\be |\Phi| = \sqrt{\frac{-r}{2u}} \ .  \ee
An easy observation is that one can choose a special direction s.t. the 0-form part of the symmetry is broken. Let's take the example of $\Upsilon = \bbC(U(1)^n)$, thus $\Aut(\Upsilon) = S_n\ , \ \Upsilon^\times = U(1)^n$. Under this specific 2-representation, one can choose
\be \Phi = \sqrt{\frac{-r}{2u}} (1,0,\dots , 0) \ee
Thus, the $S_n$ symmetry breaks down to $S_{n-1}$ symmetry. Yet one is not limited to make only this choice, for example,
\be  \Phi = \sqrt{\frac{-r}{2nu}} (1,\dots,1) \ee
which preserves the entire $S_n$ permutation symmetry is also plausible. For the 1-form symmetry spontaneously broken phase, one can fix a local counter term on $\Phi$,
\be \Phi \mapsto \exp(i c \int_\gamma \dd x ) \Phi   \ee
s.t. the $\expval{\Phi} = const$ phase is equivalent to the $\expval{\Phi} = \exp(-a L[C])$ perimeter law phase.

We can also formulate the above discussion in the discrete scenario. For simplicity, let us assume that the spacetime is discretized into square lattices with uniform edge length $1$, and we work in the loop space of the system. Thus a loop is defined as\footnote{A loop is defined up to thin homotopy equivalence when we consider the mapping into the algebra. When we count length, we choose the configuration with minimal length.}
\be C = [i_1,\dots,i_{L[C]-1},i_{L[C]}]   \ee
s.t. $i_{L(C)} = i_1$ and each $i_k i_{k+1}$ is a link in the plaquettes. In this sense, a minimal deformation of a loop at site $i_k\in C = [i_1,\dots ,i_{L[C]-1},i_{L(C)}]$ towards the direction $\mu$ becomes
\be\label{eq:loopinplaquette} \Delta_{i_k , \mu} C  = [i_1,\dots , i_k , j_k , j_{k+1} ,i_{k+1},\dots ,i_{L[C]-1},i_{L[C]}] =C'  \ee
s.t. $i_k, j_k , j_{k+1} ,i_{k+1}$ forms a plaquette spanning the $\mu$ direction and the $i_k i_{k+1}$ direction. Note that the sequence described in \eqref{eq:loopinplaquette} may be redundant, and can be reduced up to thin homotopy equivalence.

One can classify the loop deformation into 3 types, (we always assume that $i_ki_{k+1}$ is orthogonal to the $\mu$ direction), see figure \ref{fig:loopdeformation} for visual demonstration:
\begin{itemize}
    \item Suppose $j_{k+1} \ne i_{k+2}$, $i_{k-1} \ne j_k$ and $j_{k+1} \ne i_{k+2}$ we have
    \be \Delta_{i_k , \mu} L[C] = +2 \ , \ \Delta_{i_k , \mu} A[C]= \pm 1  \ . \ee

    \item Suppose $j_{k+1} = i_{k+2}$, then the sequence $[\dots i_k j_k j_{k+1} i_{k+1} i_{k+2}\dots ] = [\dots i_k j_k i_{k+2}\dots]$, rendering
    \be \Delta_{i_k , \mu} L[C] = 0 \ , \ \Delta_{i_k , \mu} A[C]= \pm 1  \ . \ee
    \item Suppose $i_{k-1} = j_k$ and $j_{k+1} = i_{k+2}$, then $[\dots i_{k-1} i_k j_k j_{k+1} i_{k+1} i_{k+2}\dots ] = [\dots i_{k-1} i_{k+2}\dots]$, resulting in
    \be \Delta_{i_k , \mu} L[C] = -2 \ , \ \Delta_{i_k , \mu} A[C]= \pm 1  \ . \ee
\end{itemize}
We could recover (\ref{modulus_of_dA}) up to a normalization factor,
\be\label{eq:norm} \Big|\mathbf{d} A[C]\Big|^2 = \sum_{i_k\in C 
 \ ,\ \mu } g^{\mu} \, _\mu \Big|\Delta_{i_k,\mu} A[C]\Big|^2 = (D-2)L[C] \ . \ee

\begin{figure}[htbp]
    \centering
    \begin{tikzpicture}[scale = 0.8]
        \draw[step = 1 , help lines] (1,0) grid (20,6);
        \draw[orange,ultra thick] (2,1)--(2,5)--(5,5)--(5,1)--(2,1);
        \draw[cyan,ultra thick,shift = {(0.12,0.12)}] (2,1)--(2,2.88)--(2.88,2.88)--(2.88,3.88)--(2,3.88)--(2,4.76)--(4.76,4.76)--(4.76,1)--(2,1);
        \draw[orange,ultra thick,shift = {(7,0)}] (2,1)--(2,5)--(5,5)--(5,1)--(2,1);
        \draw[cyan,ultra thick,shift = {(7,0)}] (2.12,1.12)--(2.12,4)--(3,4)--(3,4.88)--(4.88,4.88)--(4.88,1.12)--(2.12,1.12); 
        \draw[cyan,ultra thick,shift = {(14,0)}] (2,1)--(2,5)--(5,5)--(5,1)--(2,1);
        \draw[orange,ultra thick,shift = {(14.12,0.12)}] (2,1)--(2,2.88)--(2.88,2.88)--(2.88,3.88)--(2,3.88)--(2,4.76)--(4.76,4.76)--(4.76,1)--(2,1);
        \filldraw[white, fill = lime] (2.12,3.12)--(2.88,3.12)--(2.88,3.88)--(2.12,3.88)--(2.12,3.12);
        \filldraw[white, fill = lime,shift = {(7,1)}] (2.12,3.12)--(2.88,3.12)--(2.88,3.88)--(2.12,3.88)--(2.12,3.12);
        \filldraw[white, fill = lime,shift = {(14,0)}] (2.12,3.12)--(2.88,3.12)--(2.88,3.88)--(2.12,3.88)--(2.12,3.12);
        \draw[cyan,ultra thick] (7,7)--(10,7);
        \draw[orange, ultra thick] (7,8)--(10,8);
        \node[anchor = west] at (10.5,7) {loop after deformation};
        \node[anchor = west] at (10.5,8) {loop before deformation};
        \filldraw[white, fill=olive,shift = {(1,3)}] (1,0) circle (4pt) (2,0) circle (4pt) (1,1) circle (4pt) (2,1) circle (4pt);
        \filldraw[white, fill=olive,shift = {(8,4)}] (1,0) circle (4pt) (2,0) circle (4pt) (1,1) circle (4pt) (2,1) circle (4pt);
        \filldraw[white, fill=olive,shift = {(15,3)}] (1,0) circle (4pt) (2,0) circle (4pt) (1,1) circle (4pt) (2,1) circle (4pt);
        \node[font = \small,anchor = east,olive] at (2,2.7) {$i_k$};
        \node[font = \small,anchor = west,olive] at (3,2.7) {$j_{k}$};
        \node[font = \small,anchor = west,olive] at (3,4.3) {$j_{k+1}$};
        \node[font = \small,anchor = east,olive] at (2,4.3) {$i_{k+1}$};
        
        \node[font = \small,anchor = east,olive] at (9,3.7) {$i_k$};
        \node[font = \small,anchor = west,olive] at (10,3.7) {$j_{k}$};
        \node[font = \small,anchor = west,olive] at (10,5.3) {$j_{k+1} = i_{k+2}$};
        \node[font = \small,anchor = east,olive] at (9,5.3) {$i_{k+1}$};

        \node[font = \small,anchor = east,olive] at (16,2.7) {$i_{k-1} = j_k$};
        \node[font = \small,anchor = west,olive] at (17,2.7) {$i_{k}$};
        \node[font = \small,anchor = west,olive] at (17,4.3) {$i_{k+1}$};
        \node[font = \small,anchor = east,olive] at (16,4.3) {$j_{k+1} = i_{k+2}$};
    \end{tikzpicture}
    \caption{From left to right, this figure demonstrates three different cases of loop deformation described above. The orange lines are the loops before deformation and cyan lines are loops after deformation. The lime-coloured region signifies the change of area.}
    \label{fig:loopdeformation}
\end{figure}
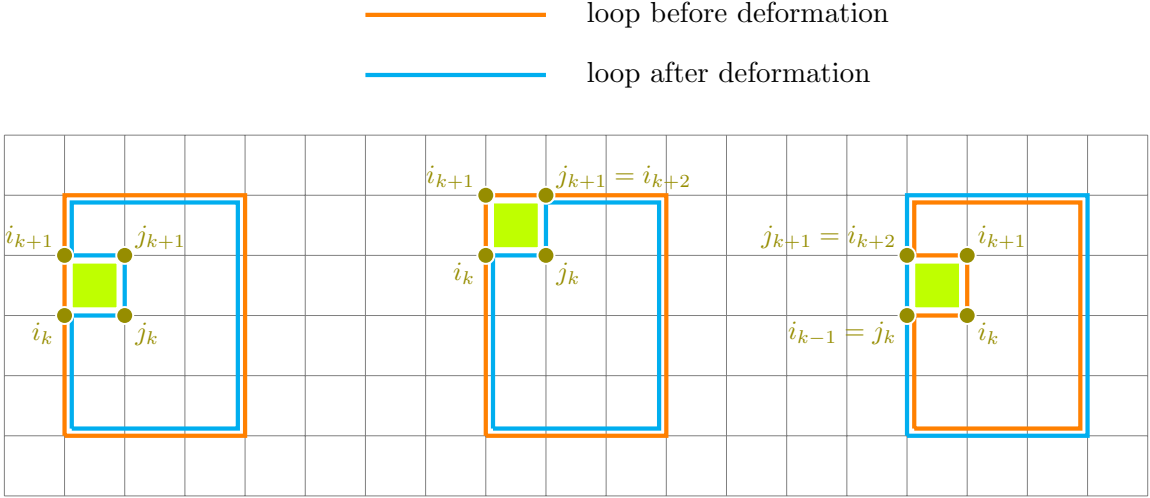

To write down the equation of motion, we need to explicitly define how the differential operators acts on the functions. We define
\be \Delta_{i_k,\mu} f[C] \equiv f[\Delta_{i_k,\mu}C] - f[C]  \ . \ee
For example,
\be
\Delta_{i_k,\mu} \left(\frac{1}{L[C]}\right) = \left \{
\ba
-\frac{2}{L[C](L[C]+2)} &, \ &\Delta L[C] = 2 \\
0 &,  \ &\Delta L[C] = 0 \\
\frac{2}{L[C](L[C]-2)} & , \ &\Delta L[C] = -2
\ea \right. 
\ee

Suppose $V(\Phi) = r\abs{\Phi}^2 \ (r>0)$, the equation of motion is 
\be (D-2)S'[A]^2 + \Delta_{i_k,\mu} \Big( \frac{1}{L[C]} S'[A] \Delta^{i_k,\mu} A[C]  
 \Big) = r \ ,\ee
where the $(D-2)$ comes from the normalization in (\ref{eq:norm}). Taking that the loop is large and roughly rectangular, we can assume $L[C] \propto \sqrt{A}$ and omit the second term as we previously analyzed. Thus
\be \label{eq:discretenonbroken} S[A[C]] \propto \sqrt{\frac{r}{D-2}} A + \mathcal{O}(A^{\frac{1}{2}}) \ . \ee
The result unabated if we do a discrete analysis.

\subsubsection{A concrete example: $G = H = \bbZ_4$}
In this section we make precise of the simplest non-trivial example of 2-group gauge theory with the framework we just developed.

Suppose we have a 2-group $\mathcal{G} = (\bbZ_2,\bbZ_2,\alpha = {\rm id} , \beta \ne 0)$, this weak 2-group admits a strictification $(\bbZ_4,\bbZ_4,\rhd ,\ptl)$, where $\ptl = \times 2$, $(1)\rhd(k) =4-k {\rm \ mod \ } 4 $ as established in section \ref{sec:Z2Z2}. The next thing to do is automorphic 2-representation, which is discussed in section \ref{sssec:Z2Z2-2Rep}. In this simple case, we merely use the 2-representation for completeness of the procedure, since after 2-representation, we choose a group representation s.t. only the
\be {\rm im}(t_H)\cong \bbZ_4 \ , \ {\rm im}(t_G)\cong\bbZ_4 \ee
parts are left non-trivial.

With the chosen representation, the 2-group globally symmetric action takes the form of (\ref{eq:2-groupglobal}), with space discretized and  $\ptl_{(\mu,\sigma)}$ substituted by $\Delta_{k,j}$ defined as (\ref{eq:finite-path-diff}), 
\be
S=\int_{\mathcal{P}(M)}\left[\frac{1}{L(C)}(\mbf{d} \Phi)^\dagger (\mbf{d} \Phi)+ V(\Phi)\right][dC]\,.
\ee
In this example, there are two ways of $H = \bbZ_4$ action on the vector $\Phi$ as stated in section \ref{sec:Auto-2-rep}. The natural action is given by
\be 1 \circ_n \mqty(a \\ b \\ c \\ d)  = \mqty(d \\ a \\ b \\ c) \ , \ee
and the Wilsonian action is given by
\be 1 \circ_W \mqty(a \\ b \\ c \\ d ) = \mqty(a \\ i b \\ - c \\ -i d) \ . \ee
With discrete 2-global symmetry $(\bbZ_4,\bbZ_4,\times 2,\rhd)$, the potential term $V(\Phi)$ to the lowest orders shall include
\be  \abs{\Phi}^2 \ , \ \abs{\Phi}^4 \ .  \ee
Only for the natural action can we admit $\Phi^4$ term into the potential.

Now suppose the spacetime dimension $D> 2$, we study the 2-matter field configuration
\be \Phi[C] = e^{S[A[C]]}  \Phi_0 = e^{S[A[C]]} \mqty(a\\b\\c\\d) =  e^{S[A[C]]} (a \mathbf{0} + b \mathbf{1} + c\mathbf{2} + d \mathbf{3} )  \ee
where we have chosen a canonical representation for group $H$, and the group algebra structure should be
\be  \Phi_1\cdot \Phi_2 = \mqty(a_1\\b_1\\c_1\\d_1)\cdot \mqty(a_2\\b_2\\c_2\\d_2) = \mqty(a_1a_2 + b_1 d_2 + d_1 b_2 + c_1 c_2\\b_1a_2 + a_1b_2 + c_1d_2 + d_1c_2\\c_1a_2 + c_2 a _1 +b_1b_2 + d_1d_2\\d_1a_2 + d_2 a_1 + b_1c_2 + b_2c_1) \ , \ee
one can verify that this multiplication is associative. Later we will use
\be \mqty(a\\b\\c\\d)^3 = \mqty(a^3 + 3(ac^2 + b^2 c + cd^2) + 6abd \\ d^3 + 3(a^2b + bc^2 + b^2d )+ 6acd \\ c^3 + 3(a^2c + ab^2 + ad^2) + 6bcd \\ b^3 +3(a^2d +bd^2 + c^2d)+6abc)   \ee
Since we assume that $a,b,c,d$ do not depend on the path space, the variation only concerns the $e^{S[A[C]]}$ factor. In the following, we always assume $\Phi_0\ne 0$.

The result of $V(\Phi) = r \abs{\Phi}^2 \ (r>0)$ is given previously by \eqref{eq:discretenonbroken}. In the following, we will discuss the case of $r<0$ for this 2-group, and analyze the configurations of spontaneous symmetry breaking.

A potential admitting broken symmetry can take the form of\footnote{Note that here we take $\abs{\Phi}^2 = \abs{a}^2+\abs{b}^2+\abs{c}^2+\abs{d}^2$} \be V_1(\Phi) = r\abs{\Phi}^2 + u \abs{\Phi}^4 \ (r<0 \ , \ u>0) \ , \ee
for both choices of $H$-action. The analysis is identical with the previous analysis. A static solution with
\be \abs{\Phi} = \sqrt{\frac{-r}{2u}}  \ee
would satisfy the equation of motion. 

Also one can consider the potential (when taking the natural $H$-action)
\be V_2(\Phi) = r\abs{\Phi}^2 + u \abs{\Phi}^4 +s \Tr_\Upsilon(\Phi\Phi^*\Phi\Phi^*) \ (r<0 \ , \ u , s >0) \ , \ee
where
\be \Tr_\Upsilon\mqty(a\\b\\c\\d)  =\Tr_\Upsilon(\Phi)= a+b+c+d\ \, \ee
\be \Tr_\Upsilon(\Phi\Phi^*\Phi\Phi^*)=(a+b+c+d)^2(\bar{a}+\bar{b}+\bar{c}+\bar{d})^2\,.
\ee
 The static solution shall satisfy
\be\label{eq:z2z2solution}
r\Phi + 2u \abs{\Phi}^2 \Phi + 2s \Tr_\Upsilon(\Phi)^2\Tr_\Upsilon(\Phi^*) \mqty(1\\1\\1\\1) =   0 \ .
\ee

When considering the Wilsonian $H$-action, we can add another $\abs{\Phi^2}$ term into $V_1(\Phi)$\footnote{It is also plausible to add $\abs{\Phi^4}$ term, but doing so will significantly increase the complexity of the situation without providing new insights.}, resulting in the potential
\be
V_3(\Phi) =  r\abs{\Phi}^2 + w\abs{\Phi^2} + u \abs{\Phi}^4
\ee
and the equation of motion
\be  r\Phi + 2u\abs{\Phi}^2 \Phi + w \abs{\Phi^2}' = 0 \ . \ee

In the following we will analyze the configurations and the symmetries they preserve and break with either the natural $H$ action or the Wilsonian $H$ action.

\subsubsection*{With Natural $H$ Action and $V_1(\Phi)$}
Note that as stated before, the solution could both preserve and break the $G$-action symmetry. Namely, the $\Phi$ configurations of the form
\be \Phi = \sqrt{\frac{-r}{2u}} \mqty(a\\b\\c\\d) \ , \ |a|^2 + |b|^2 + |c|^2 + |d|^2= 1 \ , \ b = d\ee
preserves the $G$ symmetry, while configurations with $b \ne d$ breaks the $G$ symmetry.

For $H$ symmetry acting on $\Phi$ by multiplying an element in the group, the only configuration preserving the complete $H = \bbZ_4$ symmetry is
\be\label{eq:2-brokenconfig1} \Phi = \sqrt{\frac{-r}{8u}} e^{i\theta} \mqty(1\\1\\1\\1) \ . \ee
This configuration preserves the entire 2-group symmetry, both strict and weak.

The configuration preserving $\bbZ_2\subset H$ is
\be \label{eq:2-brokenconfig2}\Phi = \sqrt{\frac{-r}{2u}} \mqty(a \\ b \\ a \\ b) \ , \ 2|a|^2 + 2\abs{b}^2 = 1\ ,\ a,b\neq 0  \ . \ee
This configuration preserves the sub-2-group symmetry $(\bbZ_2,\bbZ_2,\times 2, {\rm id})$, which has $\Pi_1 = \Pi_2 = \bbZ_2$, a trivial Postnikov class and trivial action in the weak 2-group language.

\subsubsection*{With Natural $H$-Action and $V_2(\Phi)$}

There can be several non-trivial solutions, one type of which is
\be  \Phi = c \cdot \mqty(1 \\ 1 \\ 1 \\ 1) \ , \ee
making (\ref{eq:z2z2solution})
\be ( r +8 u\abs{c}^2 + 128 s \abs{c}^2 )c  = 0  \ , \ee
solved by
\be
\Phi =  \sqrt{\frac{-r}{ 8u+128s}}e^{i\theta} \mqty(1 \\ 1\\ 1\\1) \ , \ \Phi = 0 \ , \ \big(\forall \theta\in [0,2\pi)\big).
\ee
These solutions preserve the entire 2-group symmetry.

Another type of solution is
\be \label{eq:ansatzPhi4Phi4} \Phi_1 = b\mqty( -2 \\ 1 \\ 0 \\ 1) \ , \ \Phi_2 = b\mqty( 0 \\ 1 \\ -2 \\ 1) \ , \ee
with
\be  |b|=\sqrt{-\frac{r}{12 u}}  \ .\ee
From the strict 2-group perspective, these two solutions preserve the $G$-action symmetry and breaks the $H$-symmetry completely, the preserved sub-2-group is $(\bbZ_2,0,0,{\rm id})$. From the weak 2-group perspective, the only remaining symmetry is the $\bbZ_2$ 0-from symmetry, rendering $(\bbZ_2,0,{\rm id},0)$.

The third type of solution we introduce here is
\be \Phi = \mqty(a\\b\\a\\b) \ , \ a\ne b \ . \ee
Notice that now \eqref{eq:z2z2solution} reads
\be \big(r+ 4u(\abs{a}^2 + \abs{b}^2)\big)\mqty(a\\b\\a\\b) = -16s(a+b)^2(\bar{a}+\bar{b})\mqty(1\\1\\1\\1) \ ,  \ee
which is never going to hold for $a\ne b$, unless certain terms equal to zero. Here the obvious choice is $a = -b$, rendering $\abs{a} = \sqrt{-r/(8u)}$, the final configuration becomes
\be  \Phi = \sqrt{\frac{-r}{8u}} e^{i\theta}\mqty(1\\-1 \\1\\-1) \ , \ \forall \theta\in [0,2\pi) \ .\ee

This configuration preserves the $G$ symmetry and $\bbZ_2\subset H =\bbZ_4$ symmetry. From the strict 2-group perspective it preserves $(\bbZ_2,\bbZ_2,0,{\rm id})$ symmetry. From the weak 2-group perspective, it preserves the $(\bbZ_2,\bbZ_2,{\rm id},0)$ symmetry, the $0$-form and $1$-form symmetry groups are preserved respectively, but the Postnikov class is trivialized.

\subsubsection*{With Wilsonian $H$-Action and $V_1(\Phi)$}

The analysis procedure is largely the same. The $\Phi$ configurations can be classified as follows,\begin{enumerate}
    \item Configuration \be \Phi = \sqrt{\frac{-r}{2u}}e^{i\theta}\mqty(1 \\ 0 \\ 0 \\ 0)  \ee
preserves the entire 2-group symmetry from both strict and weak perspective;
    \item Configuration \be \Phi = \sqrt{\frac{-r}{2u}} \mqty(a\\0\\ b \\ 0)  \ , \  \abs{a}^2 + \abs{b}^2 = 1\ ,\ a,b\neq 0 \ee
preserves the $(\bbZ_2,\bbZ_2,\times 2, {\rm id})$ sub-2-group symmetry, as before, the $\Pi_1=\Pi_2 =\bbZ_2$ are preserved but the Postnikov class is rendered trivial;
    \item Configuration
    \be
    \Phi = \sqrt{\frac{-r}{2u}} \mqty(a\\b\\c\\b) \ , \ \abs{a}^2 + 2\abs{b}^2 + \abs{c}^2 = 1 \ , \ b\ne 0
    \ee
    preserves only $G$ symmetry action, thus the remaining 2-group symmetry is $(\bbZ_2,0,0,{\rm id})$. From the weak perspective, only $\Pi_1 = \bbZ_2$ survives.
\end{enumerate}


\subsubsection*{With Wilsonian $H$-Action and $V_3(\Phi)$}
Adding $|\Phi^2|$ term into the potential drastically changes the structure of the equation of motion. Here we analyze a few types of solutions.
\begin{enumerate}
    \item The first type of solution is
    \be \Phi = \sqrt{\frac{-(r+w)}{2u}}e^{i\theta} \mqty(1\\0\\0\\0)  \ ,  \ r+w<0,u>0 \ee
    which preserves the entire 2-group symmetry.
    \item The second type of solution is 
    \be  \Phi =\sqrt{\frac{-(r+w)}{2u}}e^{i\theta}\mqty(0\\1\\0\\0) \ , \ \Phi = \sqrt{\frac{-(r+w)}{2u}}e^{i\theta}\mqty(0\\0\\0\\1)  \ , \ r+ w<0,u>0 \ee
    which completely breaks the 2-group symmetry.
    \item The third type of solution is
    \be \Phi = \sqrt{-\frac{r+ \sqrt{2}w}{4u}} e^{i\theta} \mqty(0 \\ 1 \\ 0 \\ 1) \ , \  \Phi = \sqrt{-\frac{r+ \sqrt{2}w}{4u}} e^{i\theta}  \mqty(0 \\ 1 \\ 0 \\ -1) \ , \ r+\sqrt{2}w<0 , u>0 \ .\ee
    The first solution preserves the $G$-symmetry and breaks the entire $H$-symmetry, preserving the weak 2-group $(\bbZ_2,0,0,0)$, while the second solution completely breaks the 2-group symmetry.
\end{enumerate}

In summary, we discussed various phases of SSB of the weak 2-group symmetry $(\mb{Z}_2,\mb{Z}_2,\mathrm{id.},\beta\neq 0)$ in the strict formulation, and interestingly we found that such a non-split 2-group symmetry can be spontaneously broken to a split 2-group $(\mb{Z}_2,\mb{Z}_2,\mathrm{id.},0)$ with trivial Postnikov class!

\subsubsection{Approaching continuous 2-groups}
In this section we consider the Landau-Ginzberg model of the 2-group
\begin{equation}
    1\to U(1)\stackrel{i}{\longrightarrow} U(1)\times\mb{Z}_N\stackrel{\ptl}{\longrightarrow} \mb{Z}_N.\mb{Z}_N\stackrel{p}{\longrightarrow}\mb{Z}_N\to 1
\end{equation}
as described in section \ref{sec:pi1znpi2u1}. It is a 2-group $(G=\mb{Z}_N.\mb{Z}_N,H = U(1)\times\mb{Z}_N,\ptl,\rhd)$ with
\be
\ptl(e^{2\pi i a},b) = (b,0) \ , \ (a,b)\rhd (e^{2\pi i c},d) = (e^{2\pi i (c + \frac{bd}{N})},d)
\ee
and the addition in $G = \mb{Z}_N.\mb{Z}_N$ is
\be
(a,b)+(c,d)=\left\{\begin{array}{rl} (a+c,b+d)& (b+d<N)\\
(a+c+m,b+d) & (\text{otherwise})
\end{array}
\right.\,.
\ee

Here we will approach the $U(1)\times \bbZ_N$ continuous group by considering a sequence of $\{\bbZ_M\times \bbZ_N\}_M$ with increasing $M$ to infinity. In other words, we consider the limit of $\{(G=\mb{Z}_N.\mb{Z}_N,H = \bbZ_M\times\mb{Z}_N,\ptl,\rhd) \}_M$ with
\be
\ptl(a,b) = (b,0) \ , \ (a,b)\rhd (c,d) = \Big(c + \frac{bd}{N},d\Big) \ .
\ee
For the action to be well-defined, we require that ${\rm gcd}(M,N) = 1$.

We can build an automorphism 2-representation for such a discrete 2-group $(\mb{Z}_N.\mb{Z}_N, \bbZ_M\times\mb{Z}_N,\ptl,\rhd)$ and build the Landau-Ginzberg theory as we previously performed. We consider the 2-matter field $\Phi$ to take value in $\bbC[\bbZ_M\times \bbZ_N]$. 

The question is what kind of terms can appear in the potential $V(\Phi)$. Of course $\abs{\Phi}^2$ and terms proportional to that can appear, and all the $\Phi^k$ terms do not survive the limit $M\to \infty$, since such term depends explicitly on $M$ and $N$ to satisfy 2-group global symmetry in the Lagrangian.

Therefore, we can consider the previously considered potential
\be V_1(\Phi) = r\abs{\Phi}^2 + u\abs{\Phi}^4 \ , \ (r< 0 \ , \ u> 0)  \ee
with static solution satisfying
\be  \abs{\Phi} = \sqrt{\frac{-r}{2u}} \ . \ee
Passing to the $M\to \infty$ limit, this should become
\be\ba
\Phi = \sum_{k=0}^{N-1}\int_0^{2\pi}  f(\theta,k)\big[e^{i\theta} ,k \big] \ \dd \theta   \\
\sum_{k=0}^{N-1} \int_0^{2\pi} \abs{f(\theta,k)}^2 \ \dd \theta = \frac{-r}{2u}\ ,
\ea\ee
where $[e^{i\theta},k]$ signifies group element.

\subsubsection*{With Natural $H$-Action}
There are several possible types of solutions. The first type is uniform distribution of coefficients,
\be  f(\theta,k) = \sqrt{\frac{-r}{4\pi u N} } e^{i\alpha} \ , \  \alpha\in \bbR \ . \ee
This configuration preserves the entire 2-group symmetry. Also we can observe another type of solution
\be  f(\theta,k) = \sqrt{\frac{-r}{4\pi uN}} e^{2\pi i P\theta} \ , \ P\in \bbZ_{>0} \ee
which breaks $H=U(1)\times \bbZ_N$ down to $H'=\bbZ_P\times \bbZ_N$, and the $G=\bbZ_N.\bbZ_N$ is broken down to $G'=\bbZ_N. \bbZ_{N/Q}$, where $Q$ is the smallest positive integer satisfying
\be
Q|N \ , \ \frac{QP}{N}  \in \bbZ \ ,
\ee
and we will denote $K = N/Q$.
While the additive rule signifying the Postnikov class is unchanged, the Postnikov class would generically be changed. Since the weak symmetries after the SSB are given by $\Pi_1' = \bbZ_K$, $\Pi_2' = \bbZ_P$, the group cohomology accommodating the Postnikov class is $H^3_{\rm grp}(\bbZ_K,\bbZ_P) \cong \bbZ_{\mathrm {gcd}(P,K)}$. 

Since all groups are subgroups of the entire 2-group symmetry while preserving the algebraic structure of $i$, $\ptl$ and $p$, each step of the calculation of the Postnikov class can be directly transformed by substituting the group elements with sub-group elements (\emph{id est}, cosets). 

More precisely, the exact sequence after the SSB is
\be
1\to \mb{Z}_P\stackrel{i}{\longrightarrow} \bbZ_P\times \bbZ_N\stackrel{\partial}{\longrightarrow}\bbZ_N. \bbZ_K\stackrel{p}{\longrightarrow} \bbZ_Q\to 1\,,
\ee
where we still have
\be
i:\ a\rightarrow(a,0)\ ,\ \ptl:\ (a,b)\rightarrow (b,0)\ ,\ p:\ (c,d)\rightarrow d\,,
\ee
\be
(a,b)\rhd (c,d)=(c+bd\ (\mathrm{mod}\ P),d)
\ee
and the group operation on $\bbZ_N. \bbZ_{K}$:
\be
(a,b)+(c,d)=\left\{\begin{array}{rl}(a+c,b+d) & b+d<\frac{N}{Q}\\ (a+c+m,b+d) & \mathrm{otherwise}\end{array}\right.\,.
\ee

As a result the Postnikov class inherited from $m\in \bbZ_N \cong H^3_{\rm grp}(\bbZ_N,U(1))$ is
\be  [m] = m \ {\rm mod} \ {\rm gcd}(P,K) \in \bbZ_{\rm gcd(P,K)}\cong H^3_{\rm grp}(\bbZ_K,\bbZ_P) \ .\ee
Therefore, such configuration completely trivializes the Postnikov class if $m\big| {\rm gcd}(P,K)$.

\subsubsection*{With Wilsonian $H$-Action}
With Wilsonian $H$ action, the only configuration that preserves the $H$-symmetry is to concentrate on the identity. Consider the sequence of $\bbC[\bbZ_M\times \bbZ_N]$, the coefficients are given by
\be  f(p, k ) = \sqrt{\frac{-r}{2u}}e^{i\alpha}\delta_{p,0} \delta_{k,0}\ . \ee
The reason we fall back to finite group case is that for $U(1)\times \bbZ_N$, we would encounter the difficulty of defining a square root of Dirac Delta function. Since only identity element enjoys a non-zero coefficient, it automatically preserves the $G$-symmetry, rendering the entire 2-group symmetry preserved from both strong/weak category perspective.

The previously considered 
\be  f(\theta,k) = \sqrt{\frac{-r}{4\pi u N} } e^{i\alpha} \ee
solution, however, breaks the entire $H$ symmetry while preserving the entire $G$ symmetry. In weak 2-group perspective, it preserves $(\Pi_1 = \bbZ_N.\bbZ_N , 0 , 0,0)$.

Then, the previously considered 
\be  f(\theta,k) = \sqrt{\frac{-r}{4\pi uN}} e^{2\pi i P\theta} \ , \ P\in \bbZ_{>0} \ee
solution with
\be
Q|N \ , \ \frac{QP}{N} \in \bbZ \ , K = N/Q
\ee
would preserve only $(\Pi_1 = \bbZ_N.\bbZ_K , 0 , 0,0)$.

\subsection{Screening center 2-form symmetry in 2-group gauge theory}
In the scenario of screening 1-form symmetry, if one can construct a Wilson line with end points (i.e. if there is a field $\phi$ in the same representation with the Wilson line),
\be \phi(x)^{\dagger} \CP [e^{i\int_x^y A}] \phi(y)  \ , \ee
which becomes gauge invariant under
\be \phi(x)\mapsto e^{-i\alpha (x) } \phi(x) \ , \ \CP[e^{i\int_x^y A}] \mapsto e^{-i\alpha(x)} \CP[e^{i\int_x^y A}] e^{i\alpha(y)} \ , \ee
then we say the matter field screens the 1-form symmetry. In many cases, the 1-form symmetry acting on the Wilson lines are the center 1-form symmetry, i.e., $G^{(1)} = Z(G^{(0)})$.

We can build an analog mechanism in 2-group gauge theories. In pure 2-group gauge theories without matter, consider a closed Wilson surface
\be \CW_\CA ( \Sigma_{\gamma_1,\gamma_2} ) = \CP \exp(i\int_{ \Sigma_{\gamma_1,\gamma_2}} \CA )  \ee
where $\gamma_1 = \gamma_2$ and thus the surface $\Sigma_{\gamma_1,\gamma_2}$ resembles a spindle. This operator also possess a center symmetry. Suppose there is a 2-form field $\lambda \in \Omega^2(M, Z(\mfr{h}))$ valued in the center of the Lie algebra $\mfr{h}$, since automorphisms shall preserve the center, we have
\be \oint_A (\lambda) = \int_0^1\dd\sigma \  \iota_K\big( W_A^{-1}[X](\sigma,1)\rhd \lambda (\sigma)\big) \in   \Omega^1(\CP(M), Z(\mfr{h})) \ . \ee
Thus we arrive at
\[ [ \oint_A(B) , \oint_A (\lambda) ]_{\mfr{h}} = [\CA,\oint_A(\lambda)]_{\mfr{h}} = 0 \ . \]
Following the previous formalism of center 1-form symmetry in a 0-form symmetry gauge theory, we have a center 2-form symmetry for a 2-group gauge theory.

Now we can elaborate on the idea of screening a 2-form center symmetry with the presence of 2-matter. Since under $H$-gauge transformation, the 2-matter field $\Phi$ transforms as (\ref{eq:1formGT}), we can also define a gauge-invariant term
\be \Phi_{\gamma_1}^\dagger \CW_\CA ( \Sigma_{\gamma_1,\gamma_2} ) \Phi_{\gamma_2} \ . \ee
As a result, in a 2-group gauge theory with matter, the center 2-form symmetry can be screened.


\subsection{Higgs mechanism}

The Higgs mechanism of our effective model does not differ much to the Higgs mechanism of the ordinary gauge theory. Suppose that we take the form of potential and the ansatz of $\Phi$ matter field to be
\be V(\Phi^\dagger \Phi) =  \frac{1}{4}\lambda \Big(\Phi^\dagger \Phi -\frac{v^2}{2}\Big)^2  \ee
\be  \Phi(\gamma) = \frac{1}{\sqrt{2}} (v + \rho(\gamma)) \exp(\frac{i \chi(\gamma)}{v})  \ .  \ee
With this assumption, the action becomes
\be\ba  
\CL = &-\frac{1}{2} \ptlums\rho\ptldms\rho -\frac{1}{2} \Big( 1+\frac{\rho}{v} \Big)^2 (\ptldms\chi -ev \CA_{(\mu,\sigma)} )(\ptlums\chi -ev\CA^{(\mu,\sigma)} ) \\ 
&- \frac{1}{4} \lambda \Big( v^2\rho^2 -v\rho^3 -\frac{1}{4} \rho^4 \Big) -\frac{1}{4} \CF^{(\mu,\sigma)(\nu,\sigma')}\CF_{(\mu,\sigma)(\nu,\sigma')} \ .
\ea\ee
Just as in QFT, here we can do a gauge transformation to make
\be (\ptldms\chi -ev \CA_{(\mu,\sigma)} )^2 \mapsto e^2v^2 \CA^2 \ . \ee
The Lagrangian becomes
\be\ba
\CL = &-\frac{1}{2} \ptlums\rho\ptldms\rho - \frac{1}{4} \lambda \Big( v^2\rho^2 -v\rho^3 -\frac{1}{4} \rho^4 \Big) \\
&-\frac{1}{4} \CF^{(\mu,\sigma)(\nu,\sigma')}\CF_{(\mu,\sigma)(\nu,\sigma')} -\frac{1}{2} \Big( 1+\frac{\rho}{v} \Big)^2 e^2v^2 \CA^2 \ .
\ea\ee
We can observe that the Higgs mechanism of 2-gauge theories with 2-matter resembles that of ordinary gauge theories. With proper gauge-fixing, one can have a massive boson in the path space.

\section{Discussions}
\label{sec:discussions}

In this paper, we provided a Lagrangian formulation of 2-matter charged under 2-group symmetries in the path space, and discussed the spontaneous symmetry breaking of 2-group symmetries under such Landau-Ginzburg model. A key technique is to consider the strictification of weak 2-group symmetry, and construct the 2-matter $\Phi$ living in an automorphism 2-representation. Using different Landau-Ginzburg potential $V(\Phi)$, we can realize different symmetry breaking patterns including the one from a non-split 2-group to a split 2-group. In the future, it would be interesting to further investigate the dynamics of the 2-group gauge theory defined in the path space, and it is also worthy to extend the discussions to other 2-representations and 2-groups. 

For weak 3-groups, we only provided the strictification procedure for special scenarios with either $\Pi_1=0$ or $\Pi_2=0$. Hence a more complete discussion of the strictification of general weak 3-groups would be subject to future work. The discussions of 3-representations of 3-groups are also quite limited, and we hope to further investigate the structures of 3-representations in a more detailed manner, for instance, using automorphism 3-groups $\mc{A}ut(\mc{G})$ for 2-groups $\mc{G}$. As a goal of this line of research, we hope to formulate higher representations of higher groups in a more clear, algebraic language.

We have shown that when generalized into $n$-groups and $n$-matter, the brane fields naturally appear into the higher gauge theory with higher-matter (for higher-form symmetry case, it was discussed in \cite{pace2023topological}). A natural question would be to build an algebraic and physical model to formulate such brane-field mechanism for general $n$-group symmetries and further, generic higher-categorical symmetries. 

Another important direction is to apply this formulation to physical systems with higher-group global symmetries, and discuss the SSB of these symmetries, such the lattice models in \cite{Barkeshli:2022wuz,Barkeshli:2022edm} which possess higher-group symmetries and realize topological error correction codes. It would also be interesting to further investigate concrete continuous QFT models with higher-group symmetries, and see how to encode its SSB structure in the path space Landau-Ginzburg theory. There could also be interesting interplay with supersymmetry if we also consider the fermionic 2-group gauge theory discussed in \cite{Ambrosino:2024ggh}.

We briefly discussed the screening of center 2-form symmetry in 2-group gauge theory, without specifying the particular 2-representation. It is known in the usual gauge theory that matter under different representations shall provide different screenings of the center 1-form symmetries. The role of different higher-representations in screening such center higher-from symmetries shall be subject to future research.

The Higgs mechanism of field theory in path space is also to be developed. Future research may consider the formulation of Goldstone bosons and their counting in path space, and relate such theories with physical models.

\paragraph{\textbf{Acknowledgments.}}
We thank Lakshya Bhardwaj, Shi Cheng, Clay Cordova, Ho Tat Lam, Sakura Schafer-Nameki, Urs Schreiber, Yi Zhang for discussions. Ran Luo and Yi-Nan Wang are supported by National Natural Science Foundation of China under Grant No. 12175004, by Peking University under startup Grant No. 7100603667, and by Young Elite Scientists Sponsorship Program by CAST (2022QNRC001, 2023QNRC001). Research of Ruizhi Liu at Perimeter Institute is supported in part by the Government of Canada
through the Department of Innovation, Science and Industry Canada and by the Province of Ontario through
the Ministry of Colleges and Universities.

\appendix

\section{Triviality of Postnikov class after the strictification}
\label{app:trivial-Post}

We present a general proof that given exact sequence
\begin{equation}
    1\to \Pi_{2}\to H\to G\stackrel{p}{\longrightarrow}\Pi_{1}\to 1
\end{equation}
the Postnikov class satisfies $p^{*}\beta=0$.

Let us consider following lemma case first.
\begin{lemma}
  Given a principal $G$ bundle (G is assumed to be a 1-group) $P\stackrel{p}{\longrightarrow}M$, then the pullback bundle $p^{*}P\to P$ is trivial.
\end{lemma}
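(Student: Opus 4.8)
The plan is to prove the lemma by producing an explicit global section of the pullback bundle, and then invoking the standard fact that a principal bundle which admits a global section is trivial.

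First I would recall the fibered-product model of the pullback: writing $p:P\to M$ for the bundle projection, one has
\be
p^{*}P=\{(e,e')\in P\times P\mid p(e)=p(e')\}\,,
\ee
with projection $\pi:p^{*}P\to P$ given by $\pi(e,e')=e$ and with right $G$-action $(e,e')\cdot g=(e,e'g)$ inherited from the second factor. This is a principal $G$-bundle over $P$ because the $G$-action on each fiber $\pi^{-1}(e)\cong p^{-1}(p(e))$ is free and transitive, being a copy of the $G$-action on a fiber of $P$.

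Next I would exhibit the \emph{diagonal section} $s:P\to p^{*}P$, $s(e)=(e,e)$. This is well defined since $p(e)=p(e)$ holds trivially, it is smooth (being the restriction of the diagonal embedding $P\hookrightarrow P\times P$), and it satisfies $\pi\circ s=\mathrm{id}_{P}$. Finally I would apply the general principle that a global section $s$ of a principal $G$-bundle induces a trivialization $P\times G\xrightarrow{\ \sim\ }p^{*}P$, $(e,g)\mapsto s(e)\cdot g=(e,eg)$; hence $p^{*}P$ is trivial. (In the \v{C}ech/cocycle language used later, this says that the transition cocycle of $P$ becomes a coboundary after pullback along $p$, which is exactly what will feed into showing $p^{*}\beta=0$ for the 2-group Postnikov class.)

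There is essentially no serious obstacle here: the only point that needs care is verifying that $(e,g)\mapsto s(e)\cdot g$ is a $G$-equivariant isomorphism of principal bundles — equivariance is immediate from the definition of the action, and bijectivity on each fiber follows from freeness and transitivity of the $G$-action on the fibers of $P$. This is routine, and the genuinely important observation is simply the existence of the diagonal section.
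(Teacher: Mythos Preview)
Your proof is correct and follows essentially the same approach as the paper: both identify the diagonal map $e\mapsto (e,e)$ as a global section of $p^{*}P\to P$ and then invoke the fact that a principal bundle admitting a global section is trivial. Your write-up is slightly more detailed (spelling out the $G$-action on $p^{*}P$ and the explicit trivialization $(e,g)\mapsto (e,eg)$), but the argument is the same.
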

\begin{proof}
Recall that for principal bundles, they are trivial iff they admit a global section (in contrast to vector bundles, they always have trivial zero section).
Set theoretically, the pullback bundle is constructed as $P\times_{\pi}P:=\{(x,y)\in P\times P|\pi(x)=\pi(y)\}$, there is a canonical diagonal map $\Delta:P\to P\times_{\pi}P$ given by $x\to (x,x)$. As a result, the pullback bundle admits a global section and hence trivial.
\end{proof}

Let us go back to 2-group case. Consider the Postnikov tower of the bottom layer, as indicated in following diagram.

\begin{center}
    \begin{tikzcd}
	{B^{2}\Pi_{2}} & {B\mathcal{G}} \\
	& {B\Pi_{1}}
	\arrow[from=1-1, to=1-2]
	\arrow["p", from=1-2, to=2-2]
\end{tikzcd}
\end{center}

The classifying space of 2-group $\mathcal{G}$ is a total space of a principal $B^{2}\Pi_{2}$ bundle over $B\Pi_{1}$. As a result, it is labelled by a particular class in \v{C}ech cohomology $\check{H}^{1}(B\Pi_{1};B^{2}\Pi_{2})$. We have known that this is Postnikov class $\beta$.

Now the Postnikov class is trivial iff the bundle itself is trivial. We have argued that $p^{*}B\mathcal{G}$ is trivial, as a result $p^{*}\beta=0$.

This has confirmed that Postnikov class is necessarily zero after strictification.

There is another algebraic proof based on decomposition of 2-groups:
\begin{equation}
\begin{tikzcd}
	1 & {\Pi_2} & H & {\mathrm{im}\partial} & 1 \\
	1 & {\mathrm{im}\partial} & G & {\Pi_1} & 1
	\arrow[from=1-1, to=1-2]
	\arrow["i", from=1-2, to=1-3]
	\arrow["\partial", from=1-3, to=1-4]
	\arrow[from=1-4, to=1-5]
	\arrow[from=1-2, to=2-2]
	\arrow["i", from=2-2, to=2-3]
	\arrow["p", from=2-3, to=2-4]
	\arrow[from=2-4, to=2-5]
	\arrow[from=2-1, to=2-2]
	\arrow[from=1-3, to=2-3]
	\arrow[from=1-4, to=2-4]
\end{tikzcd}
\end{equation}
We view the bottom line as a group extension and top line as a sequence of coefficients. Hence we have Bockstein homomorphism:
\begin{equation}
    \mathrm{Bock}: H^{2}_{grp}(\Pi_{1};\mathrm{im}\partial)\to H^{3}_{grp}(\Pi_{1};\Pi_{2})
\end{equation}
This is completely valid at least when $H$ is abelian. We denote the extension class of the bottom line as $\alpha\in H^{2}_{grp}(\Pi_{1};\mathrm{im}\partial)$, so
\begin{equation}
    p^{*}(\mathrm{Bock}(\alpha))=\mathrm{Bock}(p^{*}\alpha)=0
\end{equation}
(This is due to the naturality of $\mathrm{Bock}$.)

\section{More examples of strictification of weak 2-groups}
\label{app:more-strictify}

\subsection{$\Pi_1=\mb{Z}_N$, $\Pi_2=\mb{Z}_M$}\label{general_cyclic_group}
\label{subsec:ZmZn}

We discuss the case for a general $\Pi_1=\mb{Z}_N$, $\Pi_2=\mb{Z}_M$. In this case, $H^3(B\mb{Z}_N;\mb{Z}_M)$ is generated by the $N$-torsion subgroup of $\mb{Z}_M$, which consists of elements $\{a\in\mb{Z}_M|Na=0\ (\text{mod}\ M)\}$. Hence we have
\be\label{cyclic_Postnikov_class}
H^3(B\mb{Z}_N;\mb{Z}_M)=\mathbb{Z}_{\text{gcd}(N,M)}.
\ee

The general exact sequence is
\begin{equation}
\label{ZN-ZM}
    1\to \mb{Z}_M\stackrel{i}{\longrightarrow} \mb{Z}_{MK}\stackrel{\ptl}{\longrightarrow} \mb{Z}_{NK}\stackrel{p}{\longrightarrow}\mb{Z}_N\to 1
\end{equation}
The maps are
\be
i:\ a\rightarrow Ka\,,\quad \ptl:\ a\rightarrow Na\ (\text{mod}\ NK)\,,\quad p:\ (\text{mod}\ N)\,.
\ee

The section $s:\mb{Z}_N\rightarrow\mb{Z}_{NK}$ is chosen as $s(g)\equiv g$, and the map $f:\mb{Z}_N\times\mb{Z}_N\rightarrow \mb{Z}_{NK}$ is 
\be
f(g,h)=\left\{\begin{array}{rl} 0 & (g+h<N)\\
N & (g+h\geq N)\end{array}\right.\,.
\ee

The map $F:\mb{Z}_N\times\mb{Z}_N\rightarrow\mb{Z}_{MK}$ is
\be
F(g,h)=\left\{\begin{array}{rl} 0 & (g+h<N)\\
1 & (g+h\geq N)\end{array}\right.\,.
\ee

In order to have a non-trivial Postnikov class, it is required that the short exact sequence
\be
1\to \ker(\ptl)\stackrel{\ptl}{\longrightarrow} \mb{Z}_{NK}\stackrel{p}{\longrightarrow}\mb{Z}_N\to 1
\ee
is a non-split central extension ($H^2(B\mb{Z}_N;\mb{Z}_K)\neq 0$), such that $s$ cannot be taken as a homomorphism.

The Postnikov class element $c:\mb{Z}_N\times\mb{Z}_N\times\mb{Z}_N\rightarrow\mb{Z}_M$ and the action $\rhd:\mb{Z}_N\rightarrow\Aut(\mb{Z}_M)$ depends on the choice of action $\rhd:\mb{Z}_{NK}\rightarrow \Aut(\mb{Z}_{MK})$.

If the action $\rhd:\mb{Z}_{NK}\rightarrow \Aut(\mb{Z}_{MK})$ is trivial, then both the Postnikov class and the action of $\mb{Z}_N$ on $\mb{Z}_M$ will be trivial. To write down all the  non-trivial actions, we first need a mathematical fact that $\Aut(\mb{Z}_{MK})$ is generated by the units of  $\mb{Z}_{MK}$, which form a set $\{p\in\mb{Z}_{MK}|\mathrm{gcd}(p,MK)=1\}$. Each element $p$ corresponds to the following action:
\be\label{group_action_cyclic}
a\rhd_p b=(p^a\ (\text{mod}\ MK))\cdot b\,.
\ee

Furthermore, to satisfy the conditions (\ref{2-group-gh1}) of a strict 2-group, it is also required that
\be\
\ba
\label{ZN-ZM-action-cond}
p&=1\ (\text{mod}\ K)\,,\cr
p^{N}&=1\ (\text{mod}\ MK)\,.
\ea
\ee

For each $p\in\{0,1,\dots,MK-1\}$ satisfying (\ref{ZN-ZM-action-cond}), there is a well-defined strict 2-group $(\mb{Z}_{NK},\mb{Z}_{MK},\ptl,\rhd_p)$ that gives rise to different weak 2-groups $(\mb{Z}_N,\mb{Z}_M,\rho,\beta)$.

In particular, when $p=1\ (\text{mod}\ M)$, the action $\rhd:\mb{Z}_N\rightarrow\mb{Z}_M$ is trivial, and we get an element in the untwisted cohomology $H^3(B\mb{Z}_N,\mb{Z}_M)$. On the other hand, if $p\neq 1\ (\text{mod}\ M)$, the action $\rhd:\mb{Z}_N\rightarrow\mb{Z}_M$ is non-trivial, and we get an element in the twisted cohomology $H^3_\rho(B\mb{Z}_N,\mb{Z}_M)$.

Actually, all classes in (\ref{cyclic_Postnikov_class}) can be constructed in this way, as we will describe below.

First, we have three constraints on the choice of $p$ in (\ref{group_action_cyclic}): two of them are given by (\ref{ZN-ZM-action-cond}) and the other is $\text{gcd}(p,MK)=1$.

  For general $K$, we have $p=1\,(\text{mod}\,K)$,
  hence we can write 
  \be\label{action_parameter}
  p=wK+1\,(\text{mod}\,MK), w=1,2,...,M-1
  \ee
  The untwisted condition is $p=1(\text{mod}\,M)$, which amounts to say $M|wK$. We require
  \be\label{admissible_condition}
  (wK+1)^{N}-1=NwK+\frac{N(N-1)}{2}(wK)^2+...=0\,(\text{mod}\,MK)
  \ee
  Note $M|wK$, above equation is actually equivalent to $M|Nw$. Now we write $M=md,N=nd$ where $d=\text{gcd}(M,N)$ and $\text{gcd}(m,n)=1$. The condition is $m|nw$ hence $m|w$.
  
  We can now write an admissible $w$ as $w=ms,s=0,1,..,d-1$.
  But we need $M|wK$ (untwisted condition), which now simplifies to $d|sK$ for any $s=0,1,2,...,d-1$. Hence we have $d|K$ by taking $s=1$. So we have $K\geq d=\text{gcd}(M,N)$. 

Let us fix $K=d$ and choose any admissible $w$, one can compute the Postnikov class on $(g_{1},g_{2},g_{3}),\,0\leq g_{1},\,g_{2},\,g_{3}<N$
\be
c(g_{1},g_{2},g_{3})=\begin{cases}
    wg_{1},\,g_{2}+g_{3}\geq N\\
    0,\, g_{2}+g_{3}<N
\end{cases}
\ee
This class has some general features:
\begin{enumerate}
    \item It's linear in the first argument $g_{1}$.
    \item We can check, using GAP package that it's indeed a nontrivial cohomology class parameterized by group actions.
\end{enumerate}

Let's comment on twisted case, i.e., there is nontrivial group action of $\Pi_{1}$ on $\Pi_{2}$. We still have (\ref{action_parameter}) and (\ref{admissible_condition}) in this case. Given $M,N$, one can always check if there is any $p\not =1\,(\text{mod}\,M)$ satisfying these equations, however, it's unlikely to obtain a general solution in this case. We will be content with several examples described below:

\begin{enumerate}
\item{$M=N=2$. This is the case in section~\ref{sec:Z2Z2}. The action $\mb{Z}_2\rightarrow \Aut(\mb{Z}_2)$ is always trivial. The minimal $K$ is
\be
\text{min.}(K)=\left\{\begin{array}{cc} 1 & (\text{trivial\ }\beta)\\
2 & (\text{non-trivial\ }\beta)
\end{array}\right.
\ee
}
\item{$M=N=3$. In this case there is no non-trivial action of $\mb{Z}_3\rightarrow \Aut(\mb{Z}_3)$. 

The smallest $K$ with a non-trivial $H^2(B\mb{Z}_3;\mb{Z}_K)$ is $K=3$. The actions $\rhd_p$ satisfying (\ref{ZN-ZM-action-cond}) are $p=1$, 4, 7, which one-to-one corresponds to the three elements in $H^3(B\mb{Z}_3;\mb{Z}_3)=\mb{Z}_3$. 

Hence for any non-trivial Postnikov class $\beta$, the smallest $K=3$. 
\be
\text{min.}(K)=\left\{\begin{array}{cc} 1 & (\text{trivial\ }\beta)\\
3 & (\text{non-trivial\ }\beta)
\end{array}\right.
\ee
}

\item{$N=2$, $M=4$. In this case there exists a non-trivial action of $\mb{Z}_4\rightarrow \Aut(\mb{Z}_4)=\mb{Z}_2$. $H^3(B\mb{Z}_2;\mb{Z}_4)=\mb{Z}_2$ and we denote its generator by $\alpha$.
\be
\text{min.}(K)=\left\{\begin{array}{cc} 1 & (\text{trivial\ }\beta)\\
2 & (\text{non-trivial\ }\beta\ \text{or}\ \text{non-trivial\ }\rho)
\end{array}\right.
\ee

}

\item{$M=N=4$. In this case there exists a non-trivial action of $\mb{Z}_4\rightarrow \Aut(\mb{Z}_4)=\mb{Z}_2$. $H^3(B\mb{Z}_4;\mb{Z}_4)=\mb{Z}_4$ and we denote its generator by $\alpha$.

The smallest $K$ with a non-trivial $H^2(B\mb{Z}_4;\mb{Z}_K)$ is $K=2$. For $K=2$, the actions $\rhd_p$ satisfying (\ref{ZN-ZM-action-cond}) are $p=1$, 3, 5, 7. For $p=5$, the induced action of $\mb{Z}_4\rightarrow \Aut(\mb{Z}_4)$ is trivial, and the Postnikov class $\beta=2\alpha$. For $p=3$ or $p=7$, the induced action of $\mb{Z}_4\rightarrow \Aut(\mb{Z}_4)$ is non-trivial.

On the other hand, we can also choose $K=4$, and the actions $\rhd_p$ satisfying (\ref{ZN-ZM-action-cond}) are $p=1$, 5, 9, 13. For these cases, the induced action of $\mb{Z}_4\rightarrow \Aut(\mb{Z}_4)$ is always trivial. The Postnikov class $\beta=\frac{1}{4}(p-1)\alpha$.

As a conclusion, for a given weak 2-group $(\mb{Z}_4,\mb{Z}_4,\rho,\beta)$, the smallest $K$ is
\be
\text{min.}(K)=\left\{\begin{array}{cc} 1 & (\text{trivial\ }\beta)\\
2 & (\text{non-trivial}\ \rho\quad \text{or}\ \text{trivial}\ \rho,\  \beta=2\alpha)\\
4 & (\text{trivial}\ \rho,\ \beta=(2k+1)\alpha)
\end{array}\right.
\ee

}
\end{enumerate}

\subsection{$\Pi_{1}=\Z_{N}$, $\Pi_{2}=\bigoplus_{i=1}^{m}\Z_{M_{i}}$}
In this case, we note following identity in cohomology groups:
\be
H^{3}(\Pi_{1};\bigoplus_{i=1}^{m}\Z_{M_{i}})\simeq\bigoplus_{i=1}^{m}H^{3}(\Pi_{1};\Z_{M_{i}})
\ee
It means, essentially summands in $\Pi_{2}$ are decoupled, we can treat them separately. As a result, we can obtain following crossed module extension:
\be\label{decoupled_extension}
1\to\bigoplus_{i=1}^{m}\Z_{M_{i}}\to\Z_{M_{j}K}\oplus\bigoplus_{i\not= j}\Z_{M_{i}}\to \Z_{N K}\to Z_{N}\to 1
\ee
The boundary map $\partial:\Z_{M_{j}K}\oplus\bigoplus_{i\not= j}\Z_{M_{i}}\to \Z_{N K}$ is given by
\be
\partial (x_{1},x_{2},...x_{j}...,x_{m}):=Nx_{j},\,(\text{mod}\,NK)
\ee
where $x_{j}\in \Z_{M_{j}K}$ and $x_{i}\in \Z_{M_{i}}$ for $i\not =j$ and $K$ can be chosen as $K=\text{gcd}(M_{j},N)$ as we described in sec \ref{general_cyclic_group}. These extensions realized all classes in (\ref{decoupled_extension}).

\subsection{$\Pi_{1}=\bigoplus_{i=1}^{n}\Z_{N_{i}},\,\Pi_{2}=\Z_{M}$}
For simplicity, we will begin with $n=2$ case, i.e., $\Pi_{1}=\Z_{N_{1}}\times\Z_{N_{2}}$.

We define a group 
\be\label{generalized_Heisenberg_group}
\Gamma_{\alpha}(N_{1},N_{2},K):=\left \langle a,b,x|x^{K}=a^{N_{1}}=b^{N_{2}}=1,ax=xa,bx=xb,bab^{-1}=ax^{\alpha}  \right \rangle 
\ee
where $K\in\Z^{>0}$ is a free parameter to be specified later.The group can be viewed as some kind of generalization of finite Heisenberg group. Note $bab^{-1}=ax^{\alpha}$ imposes a strong constraint on $\alpha$: taking $N_{1}$-th power of both sides, we find:
\be
x^{\alpha N_{1}}=1
\ee
which means $K|\alpha N_{1}$. Analogously, $K|\alpha N_{2}$, Bezout's identity shows
\be
K|\alpha\,\text{gcd}(N_{1},N_{2})
\ee
Denote $K=kd$ where $d:=\text{gcd}(N_{1},N_{2},K)$,we now obtain admissible values of $\alpha$:
\be
\alpha=0,k,...,(d-1)k<kd=K
\ee
This group satisfies following exact sequence:
\be
0\to\Z_{K}\longrightarrow \stackrel{i}{\hookrightarrow}\Gamma_{\alpha}(N_{1},N_{2},K)\stackrel{p}{\longrightarrow}\Z_{N_{1}}\times\Z_{N_{2}}\to 1
\ee
where we write $Z_{K}$ additively and $\Z_{N_{1}}\times\Z_{N_{2}}$ multiplicatively. The maps are $i(a)=x^{a},\,\forall a\in\Z_{K}$, $p$ is a surjection whose kernel is generated by $x$, i.e., $p(x)=1$. Sometimes, we will denote $p(a)$ as $a$ and $p(b)$ as $b$ if there is no confusion.

We now construct the crossed module extension.
Note the cohomology class is calculated by Künneth formula
\be
    H^{3}(\Z_{N_{1}}\times\Z_{N_{2}};\Z_{M})\simeq \bigoplus_{i+j=3}H^{i}(\Z_{N_{1}};H^{j}(\Z_{N_{2}};\Z_{M}))
\ee
Note following result:
\be
H^{k}(B\Z_{m};\Z_{n})\simeq \Z_{\text{gcd}(m,n)},\, k>0
\ee
Now, we have
\be\label{product_cohomology}
H^{3}(B(\Z_{N_{1}}\times\Z_{N_{2}});\Z_{M})\simeq 
\Z_{\text{gcd}(N_{1},M)}\oplus\Z_{\text{gcd}(N_{2},M)}\oplus\Z_{\text{gcd}(M,N_{1},N_{2})}^{2}
\ee
where $\Z_{\text{gcd}(M,N_{1},N_{2})}^{2}=\Z_{\text{gcd}(M,N_{1},N_{2})}\oplus \Z_{\text{gcd}(M,N_{1},N_{2})}$. The first two terms above correspond to those extensions in which $\Z_{N_{1}}$ or $\Z_{N_{2}}$ decouple in the sense of following sequences:
\be
1\to\Z_{M}\to \Z_{MK}\to\Z_{N_{1}K}\times\Z_{N_{2}}\to \Z_{N_{1}}\times\Z_{N_{2}}\to 1
\ee
where the map $\partial:\Z_{MK}\to\Z_{N_{1}K}\times\Z_{N_{2}}$ is given by $\partial(a)=(N_{1}a,0)$. Symmetrically, we have the same construction involving $\Z_{N_{2}}$ but $\Z_{N_{1}}$ decouples.

So we only concentrate on mixing case, i.e., corresponding to $\Z_{\text{gcd}(M,N_{1},N_{2})}^{2}$ in the cohomology group. We will argue following extension can realize all the mixing classes:
\be\label{product_sequence}
0\to\Z_{M}\stackrel{i}{\longrightarrow}\Z_{MK}\stackrel{\partial}{\longrightarrow}\Gamma_{\alpha}(N_{1},N_{2},K)\stackrel{\pi}{\longrightarrow}\Z_{N_{1}}\times\Z_{N_{2}}\to 1
\ee
(See (\ref{generalized_Heisenberg_group}) for the definition of $\Gamma_{\alpha}$.)

Before diving into details, note following commutative diagram:
\be
\begin{tikzcd}
	&& {\mathbb{Z}_{MK}} & {\Gamma_{\alpha}} \\
	0 & {\mathbb{Z}_{M}} & {} && {\mathbb{Z}_{N_{1}}\times\mathbb{Z}_{N_{2}}} & 1 \\
	&& {\mathbb{Z}_{lMK}} & {\Gamma_{l\alpha}}
	\arrow[from=2-1, to=2-2]
	\arrow["{\times K}", from=2-2, to=1-3]
	\arrow["{\times l K}"', from=2-2, to=3-3]
	\arrow[from=1-3, to=1-4]
	\arrow[from=3-3, to=3-4]
	\arrow["{\times l}"', from=1-3, to=3-3]
	\arrow["{\phi_{l}}"', from=1-4, to=3-4]
	\arrow[from=1-4, to=2-5]
	\arrow[from=3-4, to=2-5]
	\arrow[from=2-5, to=2-6]
\end{tikzcd}
\ee
where $l=0,1,2,...,\text{gcd}(K,N_{1},N_{2})$ and the morphism $\phi_{l}$ is given by $\phi_{l}(x)=x'^{l}$ where $x$ and $x'$ are generators of the centers of $\Gamma_{\alpha}$ and $\Gamma_{l\alpha}$ respectively.
By definition, we know the sequence (\ref{product_sequence}) is (weakly) equivalent to 
\be\label{equivalent_sequence}
0\to\Z_{M}\stackrel{i}{\longrightarrow}\Z_{lMK}\stackrel{\partial}{\longrightarrow}\Gamma_{l\alpha}(N_{1},N_{2},K)\stackrel{\pi}{\longrightarrow}\Z_{N_{1}}\times\Z_{N_{2}}\to 1
\ee
hence, they correspond to the same element in cohomology class $H^{3}(\Pi_{1};\Pi_{2})$. For this reason, we can limit us with minimal choice: $\alpha=\frac{K}{\text{gcd}(K,N_{1},N_{2})}$.

Let's discuss the group action in this case.
It is determined by a homomorphism $\Gamma_{\alpha}\to\Z_{MK}^{\times}$, where $\Z_{MK}^{\times}$ is the group of units in $\Z_{MK}$ (i.e., invertible elements under ring multiplication). Since $\Gamma_{\alpha}$ is generated by $x,a,b$, we only need to determine their images in $\Z_{MK}^{\times}$. We will denote the image of $a,b,x$ as $p_{1},p_{2},p_{3}$ respectively. They are subject to group relations from $\Gamma_{alpha}$, that is 
\be\label{Gamma_constraint}
p_{i}^{N_{i}}=1\,(\text{mod}\,MK),\, i=1,2
\ee
Note $x=[b,a]=bab^{-1}a^{-1}$ and $\Z_{MK}^{\times}$ is abelian, hence $p_{3}=1$, i.e., the center $\Z_{K}$ must act trivially.
And $p_{i}$'s have to satisfy equivariance (\ref{2-group-gh1}) and Peiffer identity (\ref{Peiffer_identity}), explicitly,
\be\label{Peiffer_constraint}
p_{i}=1\,(\text{mod}\,K),\, i=1,2
\ee
And we assume the weak 2-group is untwisted, hence
\be\label{untwisted_condition_product}
p_{i}=1\,(\text{mod}\,M),\, i=1,2,
\ee
We can combine (\ref{Peiffer_constraint}) and (\ref{untwisted_condition_product}), write
\be
p_{i}=w_{i}\frac{MK}{\text{gcd}(M,K)}+1\,(\text{mod}\,MK)
\ee
where $w_{i}=0,1,2,...,\text{gcd}(M,K)-1$. We now solve (\ref{Gamma_constraint}). It's easy to see $MK|(\frac{MK}{\text{gcd}(M,K)})^{n}$ if $n\geq 2$, so if we expand (\ref{Gamma_constraint}), we see it is equivalent to 
\be
\text{gcd}(M,K)|(w_{i}N_{i})
\ee
Let $d_{i}:=\text{gcd}(M,K,N_{i})$ and $m_{i}:=\frac{\text{gcd}(M,K)}{d_{i}}$. We have $w_{i}=s_{i}m_{i},\,s_{i}=0,1,2,...,d_{i}-1$.
There is no more constraint on group actions.

Let's summarize the discussion on group actions:
group $\Gamma_{\alpha}$ acts on $\Z_{MK}$ as
\be
a\triangleright l=(\frac{s_{1}MK}{\text{gcd}(M,K,N_{1})}+1)l \,(\text{mod}\,MK)
\ee
\be
b\triangleright l=(\frac{s_{2}MK}{\text{gcd}(M,K,N_{2})}+1)l \,(\text{mod}\,MK)
\ee
where $s_{i}=0,1,...,\text{gcd}(M,K,N_{i})-1,i=1,2$.
There is no further constraint on the group action.

Let's compute the Postnikov cocyle. Following the notation of section \ref{general_cyclic_group}, we denote $x=(x_{1},x_{2})\in \Z_{N_{1}}\times\Z_{N_{2}}$ (in additive notation) and $s((1,0))=a,\,s((0,1))=b$, hence
\be
f(x,y)=x^{\alpha x_{1}y_{2}}
\ee
We then lift $f$ to $F(x,y)=\alpha x_{1}y_{2}$. According to (\ref{cocycle-F}), we have
\be\label{mixing_Postnikov}
c(x,y,z)=\alpha y_{1}z_{2}(\sum_{i=1}^{2}\frac{Ms_{i}x_{i}}{\text{gcd}(M,K,N_{i})}),\, s_{i}=0,1,...,\text{gcd}(M,K,N_{i})-1
\ee
We can argue that these classes are indeed mixing case.
Note $c$ vanishes if it's evaluated on $(x,y,z)$ with $x_{1}=y_{1}=z_{1}=0$ or $x_{2}=y_{2}=z_{2}=0$. It cannot happen in non-mixing case (i.e. first two summands in (\ref{product_cohomology})) unless the class is trivial.

In fact, if $K=\text{gcd}(M,N_{1},N_{2})$, then we set $\alpha=1$ (see the discussion below (\ref{equivalent_sequence})). In this case $s_{1},s_{2}\in\{0,1,2,...,\text{gcd}(M,N_{1},N_{2})-1\}$
in one-to-one correspondence with $\Z_{\text{gcd}(M,N_{1},N_{2})}^{2}$. Obviously, $K$ can't be smaller than $\text{gcd}(M,N_{1},N_{2})$, otherwise, 
$\text{gcd}(M,N_{i},K)\leq K<\text{gcd}(M,N_{i},N_{2}),\,i=1,2$. In that situation, one cannot realize all (mixing) cohomology classes in (\ref{product_cohomology}).

There is no difficulty to generalize above construction to any finite abelian group (which is product of cyclic groups after all). So one immediate corollary of our construction is that: if one starts with finite abelian $\Pi_{1},\,\Pi_{2}$ (without twist), one can always assume that $H$ in (\ref{2-group-seq}) is also finite abelian (a cyclic group actually).

\subsection{Examples with non-abelian Lie groups}
\label{app:non-abelian-strictify}

Let us consider the field theory examples with a weak 2-group symmetry $(\Pi_1,\Pi_2,\mathrm{id.},\beta)$. $\Pi_1$ is the 0-form symmetry group, which is non-simply-connected. We can write $\Pi_1=G/(\im\ptl)$, where $G$ is a simply-connected Lie group and $\im\ptl$ is a subgroup of the center of $G$. $\Pi_2$ is the 1-form symmetry group of the theory. 

In the terminology of \cite{Apruzzi:2021vcu}, we have the identification
\be
\Pi_1=\mc{F}\ ,\ \Pi_2=\mc{O}\ ,\ G=F\ ,\ H=\mc{E}\ ,\ \im\ptl=\mc{Z}\,,
\ee
and these groups fit into the following commutative diagram
\be
    \begin{tikzcd}
	{1}\arrow{r} & {\Pi_2}\arrow{r}{i}\arrow{d} & {H}\arrow{r}\arrow{d}{\ptl} & {\im\ptl}\arrow{r}\arrow{d} & {1} \\
	{1}\arrow{r} & {\im\ptl}\arrow{r} & {G}\arrow{r}{p} & {\Pi_1}\arrow{r} & {1}
\end{tikzcd}
\ee
Here $H=\mc{E}\subset Z(F)\times Z(G_{\rm gauge})$ is the maximal subgroup of the product of flavor center $Z(F)$ and gauge center $Z(G_{\rm gauge})\supset\mc{O}$ that acts trivially on matter fields. For example see the context of 5D SCFTs with M-theory geometric construction~\cite{Apruzzi:2021vcu,DelZotto:2022joo}.

For example, Let us consider the 5d rank-1 SCFT with IR gauge theory description $SU(2)_0$. In \cite{Apruzzi:2021vcu} it is shown that the theory has a weak 2-group symmetry with $\Pi_1=SO(3)$, $\Pi_2=\mb{Z}_2$. We first use the following commutative diagram
\be
    \begin{tikzcd}
	{1}\arrow{r} & {\mb{Z}_2}\arrow{r}{i}\arrow{d} & {\mb{Z}_4}\arrow{r}\arrow{d}{\ptl} & {\mb{Z}_2}\arrow{r}\arrow{d} & {1} \\
	{1}\arrow{r} & {\mb{Z}_2}\arrow{r} & {SU(2)}\arrow{r}{p} & {SO(3)}\arrow{r} & {1}
\end{tikzcd}
\ee
We show that the exact sequence
\begin{equation}
\label{SO3-sequence}
    1\to \mb{Z}_2\stackrel{i}{\longrightarrow} \mb{Z}_4\stackrel{\ptl}{\longrightarrow} SU(2)\stackrel{p}{\longrightarrow}SO(3)\to 1
\end{equation}
realizes the non-trivial element $\beta$ in $H^3(BSO(3),\mb{Z}_2)=\mb{Z}_2$. 

Let us denote the group elements in $SU(2)$ by $g$
 and the group elements in $SO(3)$ by the conjugacy class $\{g,-g\}$. The identity element of $SO(3)$ is $\{I,-I\}$. The maps in (\ref{SO3-sequence}) are
\be
i:\ a\rightarrow 2a\,,\quad \ptl:\ a\rightarrow e^{\pi i a}I\,,\quad p:\ g\rightarrow\{g,-g\}\,.
\ee

Let us choose the section $s:SO(3)\rightarrow SU(2)$ as a canonical way to embed an $SO(3)$ element into $SU(2)$. The group multiplication in $SO(3)$ can be written as
\be
\{g,-g\}\cdot\{h,-h\}=\left\{\begin{array}{rl} \{gh,-gh\} & s(g)s(h)=s(gh)\\ \{-gh,gh\} & s(g)s(h)=-s(gh)\end{array}\right.\,.
\ee

Hence we can write down the function $f(g,h)$ (in the multiplicative notation) and its uplift $F(g,h)$ (in the additive notation):
\be
f(g,h)=\left\{\begin{array}{rl} I & \{g,-g\}\cdot\{h,-h\}=\{gh,-gh\}\\ -I & \{g,-g\}\cdot\{h,-h\}=\{-gh,gh\}\end{array}\right.\,.
\ee
\be
F(g,h)=\left\{\begin{array}{rl} 0 & \{g,-g\}\cdot\{h,-h\}=\{gh,-gh\}\\ 1 & \{g,-g\}\cdot\{h,-h\}=\{-gh,gh\}\end{array}\right.\,.
\ee
Plug in the formula (\ref{cocycle-F}), we can compute the following discontinuous map $c(g,h,k)$
\be
\ba
c(g,h,k)&=\frac{1}{2}(F(h,k)+F(g,hk)-F(g,h)-F(gh,k))\cr
&=\left\{\begin{array}{rl} 1 & s(g)s(hk)=-s(gh)s(k)\ ,\ s(g)s(h)s(k)=s(ghk)\\
0 & \text{other\ cases}\end{array}\right.\,.
\ea
\ee

$c(g,h,k)$ corresponds to the non-trivial element in $H^3(BSO(3);\mb{Z}_2)$, because it cannot be written as a coboundary ($\frac{1}{2}F(g,h)$ is not a well-defined function).

On the other hand, if one uses the exact sequence
\begin{equation}
\label{SO3-Z2Z2-sequence}
    1\to \mb{Z}_2\stackrel{i}{\longrightarrow} \mb{Z}_2\times\mb{Z}_2\stackrel{\ptl}{\longrightarrow} SU(2)\stackrel{p}{\longrightarrow}SO(3)\to 1\,,
\end{equation}
one can derive $c(g,h,k)\equiv 0$ and the Postnikov class is trivial.

\paragraph{Physical Example}

Let us show an physical example corresponding to the case of non-trivial $c(g,h,k)\in H^3(BSO(3);\mb{Z}_2)$, which can be applied to the case of 5d $SU(2)_0$ SCFT in \cite{Apruzzi:2021vcu}. Consider a gauge theory with gauge group $G_{\rm gauge}=SU(2)$ and flavor algebra $\mathfrak{su}(2)$ (with simply-connected group $G=SU(2)$), and the following charges of matter fields:
\be
\begin{array}{c|c|c|c}
 & \mathfrak{u}(1)_{\rm gauge} & \mathfrak{u}(1)_{\rm flavor} & \mb{Z}_{2,\rm flavor}\\
\hline
\phi_1 & 2 & -1 & 1\\
\phi_2 & 0 & 2 & 0
\end{array}
\ee
The listed numbers are the charges under the Cartan subalgebra and center of $G_{\rm gauge}=SU(2)$ and $\mathfrak{su}(2)$. We have the following observations:
\begin{enumerate}
\item Each matter field has integral charge under the linear combination $\frac{1}{4}(\mathfrak{u}(1)_{\rm gauge}+2\mathfrak{u}(1)_{\rm flavor})$. This is the source of $H=\mb{Z}_4$. Roughly speaking, the computation of $H=\mb{Z}_4$ is achieved by computing the Smith normal form of the charge matrix which involves both $\mathfrak{u}(1)_{\rm gauge}$ and $\mathfrak{u}(1)_{\rm flavor}$.
\item The $\mb{Z}_2$ center part of the $G=SU(2)$ is a part of the $U(1)_{\rm gauge}$, hence the actual flavor symmetry group is $\Pi_1=G/\mb{Z}_2=SO(3)$, which is consistent with the weak 2-group structure.
\end{enumerate}
In summary, one can see that in the strictification (\ref{SO3-sequence}), the group $G=SU(2)$ is bigger than the actual 0-form symmetry group $\Pi_1=SO(3)$. The group $H=\mb{Z}_4$ is also bigger than the actual 1-form symmetry group $\Pi_2=\mb{Z}_2$.

\paragraph{Gauge theory}

For the strict 2-group, we have the following gauge transformation (following \cite{Baez:2010ya} for example)
\be
\ba
A'&=\lambda A\lambda^{-1}+\lambda \dd \lambda^{-1}+\underline{\partial} \Lambda\cr
B'&=B+\delta\Lambda\,.
\ea
\ee
$A$ is the gauge field for $G=SU(2)$, which takes value in the Lie algebra $\mathfrak{su}(2)=\mathfrak{so}(3)$. $B$ is the discrete gauge field for $H=\mb{Z}_4$. The gauge parameters $\lambda\in SU(2)$, $\Lambda\in\mb{Z}_4$. Finally, $\underline{\partial} \Lambda$ corresponds to the center gauge transformation $\mb{Z}_2\subset SU(2)$ when $\Lambda\in\{1,3\}$, which is not a part of the $\mathfrak{su}(2)$ Lie algebra. 

\paragraph{Representations}

We investigate the representation of this strict (weak) 2-group. First let us take the 2d fundamental rep. $\pi$ of $SU(2)$. Now note that we can naturally define a unitary projective representation $\tilde{\pi}:SO(3)\rightarrow SU(2)$ of $SO(3)$, such that $ \pi=\tilde{\pi}\circ p, \tilde{\pi}=s\circ \pi$. For any $\{g,-g\},\{h,-h\}\in SO(3)$, we have
\be
\label{SU2-SO3-projrep}
\tilde{\pi}(\{g,-g\})\tilde{\pi}(\{h,-h\})=\tilde{\pi}(\{gh,-gh\})f(g,h)\,.
\ee
Similar to the computation of Postnikov class, the phase factor $f(g,h)=\pm I$ can also be uplifted to $F(g,h)=0,1$ and one can compute the non-trivial $c(g,h,k)\in H^3(BSO(3);\mb{Z}_2)$.

If we start from other irreducible representation $\pi$, there are two different scenarios. If $\pi$ is even-dimensional, it corresponds to a projective representation of $SO(3)$, and the above discussions still hold. If $\pi$ is odd-dimensional, it would correspond to a representation of $SO(3)$, hence in (\ref{SU2-SO3-projrep}) the phase factor $f(g,h)=I$, and we are unable to generate a non-trivial Postnikov class (the representation is not faithful).

Nonetheless, this is not the notion of 2-representations introduced in the main text.

\bibliographystyle{JHEP}
\bibliography{FM}

\providecommand{\href}[2]{#2}\begingroup\raggedright\begin{thebibliography}{10}

\bibitem{Kong:2020cie}
L.~Kong, T.~Lan, X.-G. Wen, Z.-H. Zhang and H.~Zheng, \emph{{Algebraic higher
  symmetry and categorical symmetry -- a holographic and entanglement view of
  symmetry}},
  \href{http://dx.doi.org/10.1103/PhysRevResearch.2.043086}{\emph{Phys. Rev.
  Res.} {\bf 2} (2020) 043086}, [\href{https://arxiv.org/abs/2005.14178}{{\tt
  2005.14178}}].

\bibitem{Schafer-Nameki:2023jdn}
S.~Schafer-Nameki, \emph{{ICTP lectures on (non-)invertible generalized
  symmetries}},
  \href{http://dx.doi.org/10.1016/j.physrep.2024.01.007}{\emph{Phys. Rept.}
  {\bf 1063} (2024) 1--55}, [\href{https://arxiv.org/abs/2305.18296}{{\tt
  2305.18296}}].

\bibitem{Brennan:2023mmt}
T.~D. Brennan and S.~Hong, \emph{{Introduction to Generalized Global Symmetries
  in QFT and Particle Physics}},  \href{https://arxiv.org/abs/2306.00912}{{\tt
  2306.00912}}.

\bibitem{Bhardwaj:2023kri}
L.~Bhardwaj, L.~E. Bottini, L.~Fraser-Taliente, L.~Gladden, D.~S.~W. Gould,
  A.~Platschorre et~al., \emph{{Lectures on generalized symmetries}},
  \href{http://dx.doi.org/10.1016/j.physrep.2023.11.002}{\emph{Phys. Rept.}
  {\bf 1051} (2024) 1--87}, [\href{https://arxiv.org/abs/2307.07547}{{\tt
  2307.07547}}].

\bibitem{Luo:2023ive}
R.~Luo, Q.-R. Wang and Y.-N. Wang, \emph{{Lecture notes on generalized
  symmetries and applications}},
  \href{http://dx.doi.org/10.1016/j.physrep.2024.02.002}{\emph{Phys. Rept.}
  {\bf 1065} (2024) 1--43}, [\href{https://arxiv.org/abs/2307.09215}{{\tt
  2307.09215}}].

\bibitem{Gukov:2013zka}
S.~Gukov and A.~Kapustin, \emph{{Topological Quantum Field Theory, Nonlocal
  Operators, and Gapped Phases of Gauge Theories}},
  \href{https://arxiv.org/abs/1307.4793}{{\tt 1307.4793}}.

\bibitem{Kapustin:2013uxa}
A.~Kapustin and R.~Thorngren, \emph{{Higher symmetry and gapped phases of gauge
  theories}},  \href{https://arxiv.org/abs/1309.4721}{{\tt 1309.4721}}.

\bibitem{Sharpe:2015mja}
E.~Sharpe, \emph{{Notes on generalized global symmetries in QFT}},
  \href{http://dx.doi.org/10.1002/prop.201500048}{\emph{Fortsch. Phys.} {\bf
  63} (2015) 659--682}, [\href{https://arxiv.org/abs/1508.04770}{{\tt
  1508.04770}}].

\bibitem{Cordova:2018cvg}
C.~C\'ordova, T.~T. Dumitrescu and K.~Intriligator, \emph{{Exploring 2-Group
  Global Symmetries}},
  \href{http://dx.doi.org/10.1007/JHEP02(2019)184}{\emph{JHEP} {\bf 02} (2019)
  184}, [\href{https://arxiv.org/abs/1802.04790}{{\tt 1802.04790}}].

\bibitem{Delcamp:2018wlb}
C.~Delcamp and A.~Tiwari, \emph{{From gauge to higher gauge models of
  topological phases}},
  \href{http://dx.doi.org/10.1007/JHEP10(2018)049}{\emph{JHEP} {\bf 10} (2018)
  049}, [\href{https://arxiv.org/abs/1802.10104}{{\tt 1802.10104}}].

\bibitem{Benini:2018reh}
F.~Benini, C.~C\'ordova and P.-S. Hsin, \emph{{On 2-Group Global Symmetries and
  their Anomalies}},
  \href{http://dx.doi.org/10.1007/JHEP03(2019)118}{\emph{JHEP} {\bf 03} (2019)
  118}, [\href{https://arxiv.org/abs/1803.09336}{{\tt 1803.09336}}].

\bibitem{Wan:2018djl}
Z.~Wan and J.~Wang, \emph{{Adjoint QCD$_4$, Deconfined Critical Phenomena,
  Symmetry-Enriched Topological Quantum Field Theory, and Higher
  Symmetry-Extension}},
  \href{http://dx.doi.org/10.1103/PhysRevD.99.065013}{\emph{Phys. Rev. D} {\bf
  99} (2019) 065013}, [\href{https://arxiv.org/abs/1812.11955}{{\tt
  1812.11955}}].

\bibitem{Hsin:2019fhf}
P.-S. Hsin and A.~Turzillo, \emph{{Symmetry-enriched quantum spin liquids in (3
  + 1)$d$}}, \href{http://dx.doi.org/10.1007/JHEP09(2020)022}{\emph{JHEP} {\bf
  09} (2020) 022}, [\href{https://arxiv.org/abs/1904.11550}{{\tt 1904.11550}}].

\bibitem{Tanizaki:2019rbk}
Y.~Tanizaki and M.~\"Unsal, \emph{{Modified instanton sum in QCD and
  higher-groups}}, \href{http://dx.doi.org/10.1007/JHEP03(2020)123}{\emph{JHEP}
  {\bf 03} (2020) 123}, [\href{https://arxiv.org/abs/1912.01033}{{\tt
  1912.01033}}].

\bibitem{Bhardwaj:2020phs}
L.~Bhardwaj and S.~Sch\"afer-Nameki, \emph{{Higher-form symmetries of 6d and 5d
  theories}}, \href{http://dx.doi.org/10.1007/JHEP02(2021)159}{\emph{JHEP} {\bf
  02} (2021) 159}, [\href{https://arxiv.org/abs/2008.09600}{{\tt 2008.09600}}].

\bibitem{Hidaka:2020izy}
Y.~Hidaka, M.~Nitta and R.~Yokokura, \emph{{Global 3-group symmetry and 't
  Hooft anomalies in axion electrodynamics}},
  \href{http://dx.doi.org/10.1007/JHEP01(2021)173}{\emph{JHEP} {\bf 01} (2021)
  173}, [\href{https://arxiv.org/abs/2009.14368}{{\tt 2009.14368}}].

\bibitem{Iqbal:2020lrt}
N.~Iqbal and N.~Poovuttikul, \emph{{2-group global symmetries, hydrodynamics
  and holography}},
  \href{http://dx.doi.org/10.21468/SciPostPhys.15.2.063}{\emph{SciPost Phys.}
  {\bf 15} (2023) 063}, [\href{https://arxiv.org/abs/2010.00320}{{\tt
  2010.00320}}].

\bibitem{Cordova:2020tij}
C.~Cordova, T.~T. Dumitrescu and K.~Intriligator, \emph{{2-Group Global
  Symmetries and Anomalies in Six-Dimensional Quantum Field Theories}},
  \href{http://dx.doi.org/10.1007/JHEP04(2021)252}{\emph{JHEP} {\bf 04} (2021)
  252}, [\href{https://arxiv.org/abs/2009.00138}{{\tt 2009.00138}}].

\bibitem{DelZotto:2020sop}
M.~Del~Zotto and K.~Ohmori, \emph{{2-Group Symmetries of 6D Little String
  Theories and T-Duality}},
  \href{http://dx.doi.org/10.1007/s00023-021-01018-3}{\emph{Annales Henri
  Poincare} {\bf 22} (2021) 2451--2474},
  [\href{https://arxiv.org/abs/2009.03489}{{\tt 2009.03489}}].

\bibitem{Yu:2020twi}
M.~Yu, \emph{{Symmetries and anomalies of (1+1)d theories: 2-groups and
  symmetry fractionalization}},
  \href{http://dx.doi.org/10.1007/JHEP08(2021)061}{\emph{JHEP} {\bf 08} (2021)
  061}, [\href{https://arxiv.org/abs/2010.01136}{{\tt 2010.01136}}].

\bibitem{DeWolfe:2020uzb}
O.~DeWolfe and K.~Higginbotham, \emph{{Generalized symmetries and 2-groups via
  electromagnetic duality in $AdS/CFT$}},
  \href{http://dx.doi.org/10.1103/PhysRevD.103.026011}{\emph{Phys. Rev. D} {\bf
  103} (2021) 026011}, [\href{https://arxiv.org/abs/2010.06594}{{\tt
  2010.06594}}].

\bibitem{Gukov:2020btk}
S.~Gukov, P.-S. Hsin and D.~Pei, \emph{{Generalized global symmetries of $T[M]$
  theories. Part I}},
  \href{http://dx.doi.org/10.1007/JHEP04(2021)232}{\emph{JHEP} {\bf 04} (2021)
  232}, [\href{https://arxiv.org/abs/2010.15890}{{\tt 2010.15890}}].

\bibitem{Brennan:2020ehu}
T.~D. Brennan and C.~Cordova, \emph{{Axions, higher-groups, and emergent
  symmetry}}, \href{http://dx.doi.org/10.1007/JHEP02(2022)145}{\emph{JHEP} {\bf
  02} (2022) 145}, [\href{https://arxiv.org/abs/2011.09600}{{\tt 2011.09600}}].

\bibitem{Brauner:2020rtz}
T.~Brauner, \emph{{Field theories with higher-group symmetry from composite
  currents}}, \href{http://dx.doi.org/10.1007/JHEP04(2021)045}{\emph{JHEP} {\bf
  04} (2021) 045}, [\href{https://arxiv.org/abs/2012.00051}{{\tt 2012.00051}}].

\bibitem{Hsin:2021qiy}
P.-S. Hsin, W.~Ji and C.-M. Jian, \emph{{Exotic invertible phases with
  higher-group symmetries}},
  \href{http://dx.doi.org/10.21468/SciPostPhys.12.2.052}{\emph{SciPost Phys.}
  {\bf 12} (2022) 052}, [\href{https://arxiv.org/abs/2105.09454}{{\tt
  2105.09454}}].

\bibitem{Apruzzi:2021vcu}
F.~Apruzzi, S.~Schafer-Nameki, L.~Bhardwaj and J.~Oh, \emph{{The Global Form of
  Flavor Symmetries and 2-Group Symmetries in 5d SCFTs}},
  \href{http://dx.doi.org/10.21468/SciPostPhys.13.2.024}{\emph{SciPost Phys.}
  {\bf 13} (2022) 024}, [\href{https://arxiv.org/abs/2105.08724}{{\tt
  2105.08724}}].

\bibitem{Apruzzi:2021mlh}
F.~Apruzzi, L.~Bhardwaj, D.~S.~W. Gould and S.~Schafer-Nameki, \emph{{2-Group
  symmetries and their classification in 6d}},
  \href{http://dx.doi.org/10.21468/SciPostPhys.12.3.098}{\emph{SciPost Phys.}
  {\bf 12} (2022) 098}, [\href{https://arxiv.org/abs/2110.14647}{{\tt
  2110.14647}}].

\bibitem{Bhardwaj:2021wif}
L.~Bhardwaj, \emph{2-group symmetries in class s},
  \href{http://dx.doi.org/10.21468/scipostphys.12.5.152}{\emph{SciPost Physics}
  {\bf 12} (May, 2022) }.

\bibitem{Hidaka:2021kkf}
Y.~Hidaka, M.~Nitta and R.~Yokokura, \emph{{Global 4-group symmetry and
  \textquoteright{}t Hooft anomalies in topological axion electrodynamics}},
  \href{http://dx.doi.org/10.1093/ptep/ptab150}{\emph{PTEP} {\bf 2022} (2022)
  04A109}, [\href{https://arxiv.org/abs/2108.12564}{{\tt 2108.12564}}].

\bibitem{DelZotto:2022fnw}
M.~Del~Zotto, J.~J. Heckman, S.~N. Meynet, R.~Moscrop and H.~Y. Zhang,
  \emph{{Higher symmetries of 5D orbifold SCFTs}},
  \href{http://dx.doi.org/10.1103/PhysRevD.106.046010}{\emph{Phys. Rev. D} {\bf
  106} (2022) 046010}, [\href{https://arxiv.org/abs/2201.08372}{{\tt
  2201.08372}}].

\bibitem{Kaya:2022edp}
S.~Kaya and T.~Rudelius, \emph{{Higher-group symmetries and weak gravity
  conjecture mixing}},
  \href{http://dx.doi.org/10.1007/JHEP07(2022)040}{\emph{JHEP} {\bf 07} (2022)
  040}, [\href{https://arxiv.org/abs/2202.04655}{{\tt 2202.04655}}].

\bibitem{DelZotto:2022joo}
M.~Del~Zotto, I.~n. Garc\'\i{}a~Etxebarria and S.~Schafer-Nameki,
  \emph{{2-Group Symmetries and M-Theory}},
  \href{http://dx.doi.org/10.21468/SciPostPhys.13.5.105}{\emph{SciPost Phys.}
  {\bf 13} (2022) 105}, [\href{https://arxiv.org/abs/2203.10097}{{\tt
  2203.10097}}].

\bibitem{Barkeshli:2022edm}
M.~Barkeshli, Y.-A. Chen, P.-S. Hsin and R.~Kobayashi, \emph{{Higher-group
  symmetry in finite gauge theory and stabilizer codes}},
  \href{http://dx.doi.org/10.21468/SciPostPhys.16.4.089}{\emph{SciPost Phys.}
  {\bf 16} (2024) 089}.

\bibitem{Cvetic:2022imb}
M.~Cveti\v{c}, J.~J. Heckman, M.~H\"ubner and E.~Torres, \emph{{0-form, 1-form,
  and 2-group symmetries via cutting and gluing of orbifolds}},
  \href{http://dx.doi.org/10.1103/PhysRevD.106.106003}{\emph{Phys. Rev. D} {\bf
  106} (2022) 106003}, [\href{https://arxiv.org/abs/2203.10102}{{\tt
  2203.10102}}].

\bibitem{Nakajima:2022feg}
T.~Nakajima, T.~Sakai and R.~Yokokura, \emph{{Higher-group structure in
  2n-dimensional axion-electrodynamics}},
  \href{http://dx.doi.org/10.1007/JHEP01(2023)150}{\emph{JHEP} {\bf 01} (2023)
  150}, [\href{https://arxiv.org/abs/2211.13861}{{\tt 2211.13861}}].

\bibitem{Bhardwaj:2022scy}
L.~Bhardwaj and D.~S.~W. Gould, \emph{{Disconnected 0-form and 2-group
  symmetries}}, \href{http://dx.doi.org/10.1007/JHEP07(2023)098}{\emph{JHEP}
  {\bf 07} (2023) 098}, [\href{https://arxiv.org/abs/2206.01287}{{\tt
  2206.01287}}].

\bibitem{Bhardwaj:2022lsg}
L.~Bhardwaj, S.~Schäfer‐Nameki and J.~Wu, \emph{Universal non‐invertible
  symmetries},
  \href{http://dx.doi.org/10.1002/prop.202200143}{\emph{Fortschritte der
  Physik} {\bf 70} (Oct., 2022) }.

\bibitem{Nakajima:2022jxg}
T.~Nakajima, T.~Sakai and R.~Yokokura, \emph{{BCF anomaly and higher-group
  structure in the low energy effective theories of mesons}},
  \href{http://dx.doi.org/10.1007/JHEP01(2023)175}{\emph{JHEP} {\bf 01} (2023)
  175}, [\href{https://arxiv.org/abs/2212.12987}{{\tt 2212.12987}}].

\bibitem{Barkeshli:2022wuz}
M.~Barkeshli, Y.-A. Chen, S.-J. Huang, R.~Kobayashi, N.~Tantivasadakarn and
  G.~Zhu, \emph{{Codimension-2 defects and higher symmetries in (3+1)D
  topological phases}},
  \href{http://dx.doi.org/10.21468/SciPostPhys.14.4.065}{\emph{SciPost Phys.}
  {\bf 14} (2023) 065}, [\href{https://arxiv.org/abs/2208.07367}{{\tt
  2208.07367}}].

\bibitem{Cordova:2022qtz}
C.~Cordova and S.~Koren, \emph{{Higher Flavor Symmetries in the Standard
  Model}}, \href{http://dx.doi.org/10.1002/andp.202300031}{\emph{Annalen Phys.}
  {\bf 535} (2023) 2300031}, [\href{https://arxiv.org/abs/2212.13193}{{\tt
  2212.13193}}].

\bibitem{Bhardwaj:2022maz}
L.~Bhardwaj, L.~E. Bottini, S.~Schafer-Nameki and A.~Tiwari,
  \emph{{Non-invertible symmetry webs}},
  \href{http://dx.doi.org/10.21468/SciPostPhys.15.4.160}{\emph{SciPost Phys.}
  {\bf 15} (2023) 160}, [\href{https://arxiv.org/abs/2212.06842}{{\tt
  2212.06842}}].

\bibitem{Barkeshli:2023bta}
M.~Barkeshli, P.-S. Hsin and R.~Kobayashi, \emph{Higher-group symmetry of
  (3+1)d fermionic $\mathbb{Z}_2$ gauge theory: Logical ccz, cs, and t gates
  from higher symmetry},
  \href{http://dx.doi.org/10.21468/scipostphys.16.5.122}{\emph{SciPost Physics}
  {\bf 16} (May, 2024) }.

\bibitem{Bhardwaj:2023wzd}
L.~Bhardwaj and S.~Schäfer-Nameki, \emph{Generalized charges, part i:
  Invertible symmetries and higher representations},
  \href{http://dx.doi.org/10.21468/scipostphys.16.4.093}{\emph{SciPost Physics}
  {\bf 16} (Apr., 2024) }.

\bibitem{Debray:2023rlx}
A.~Debray, \emph{{Bordism for the 2-group symmetries of the heterotic and CHL
  strings}},  \href{https://arxiv.org/abs/2304.14764}{{\tt 2304.14764}}.

\bibitem{Bartsch:2023pzl}
T.~Bartsch, M.~Bullimore and A.~Grigoletto, \emph{{Higher representations for
  extended operators}},  \href{https://arxiv.org/abs/2304.03789}{{\tt
  2304.03789}}.

\bibitem{Moradi:2023dan}
H.~Moradi, O.~M. Aksoy, J.~H. Bardarson and A.~Tiwari, \emph{{Symmetry
  fractionalization, mixed-anomalies and dualities in quantum spin models with
  generalized symmetries}},  \href{https://arxiv.org/abs/2307.01266}{{\tt
  2307.01266}}.

\bibitem{Bartsch:2023wvv}
T.~Bartsch, M.~Bullimore and A.~Grigoletto, \emph{{Representation theory for
  categorical symmetries}},  \href{https://arxiv.org/abs/2305.17165}{{\tt
  2305.17165}}.

\bibitem{Pace:2023kyi}
S.~D. Pace, \emph{{Emergent generalized symmetries in ordered phases}},
  \href{https://arxiv.org/abs/2308.05730}{{\tt 2308.05730}}.

\bibitem{Cvetic:2023pgm}
M.~Cveti\v{c}, J.~J. Heckman, M.~H\"ubner and E.~Torres, \emph{{Generalized
  symmetries, gravity, and the swampland}},
  \href{http://dx.doi.org/10.1103/PhysRevD.109.026012}{\emph{Phys. Rev. D} {\bf
  109} (2024) 026012}, [\href{https://arxiv.org/abs/2307.13027}{{\tt
  2307.13027}}].

\bibitem{Kang:2023uvm}
M.~J. Kang and S.~Kang, \emph{{Central extensions of higher groups:
  Green-Schwarz mechanism and 2-connections}},
  \href{https://arxiv.org/abs/2311.14666}{{\tt 2311.14666}}.

\bibitem{Apruzzi:2024htg}
F.~Apruzzi, F.~Bedogna and N.~Dondi, \emph{{SymTh for non-finite symmetries}},
  \href{https://arxiv.org/abs/2402.14813}{{\tt 2402.14813}}.

\bibitem{Armas:2024caa}
J.~Armas, G.~Batzios and A.~Jain, \emph{{Higher-group global symmetry and the
  bosonic M5 brane}},  \href{https://arxiv.org/abs/2402.19458}{{\tt
  2402.19458}}.

\bibitem{Baez0307}
J.~C. Baez and A.~D. Lauda, \emph{{Higher-Dimensional Algebra V: 2-Groups}},
  \href{https://arxiv.org/abs/math/0307200}{{\tt math/0307200}}.

\bibitem{BaezSchreiber2005}
J.~Baez and U.~Schreiber, \emph{{Higher gauge theory: 2-connections on
  2-bundles}},  \href{https://arxiv.org/abs/hep-th/0412325}{{\tt
  hep-th/0412325}}.

\bibitem{Baez:2010ya}
J.~C. Baez and J.~Huerta, \emph{{An Invitation to Higher Gauge Theory}},
  \href{http://dx.doi.org/10.1007/s10714-010-1070-9}{\emph{Gen. Rel. Grav.}
  {\bf 43} (2011) 2335--2392}, [\href{https://arxiv.org/abs/1003.4485}{{\tt
  1003.4485}}].

\bibitem{Chen:2014wse}
X.~Chen, F.~J. Burnell, A.~Vishwanath and L.~Fidkowski, \emph{{Anomalous
  Symmetry Fractionalization and Surface Topological Order}},
  \href{http://dx.doi.org/10.1103/PhysRevX.5.041013}{\emph{Phys. Rev. X} {\bf
  5} (2015) 041013}, [\href{https://arxiv.org/abs/1403.6491}{{\tt 1403.6491}}].

\bibitem{Barkeshli:2014cna}
M.~Barkeshli, P.~Bonderson, M.~Cheng and Z.~Wang, \emph{{Symmetry
  Fractionalization, Defects, and Gauging of Topological Phases}},
  \href{http://dx.doi.org/10.1103/PhysRevB.100.115147}{\emph{Phys. Rev. B} {\bf
  100} (2019) 115147}, [\href{https://arxiv.org/abs/1410.4540}{{\tt
  1410.4540}}].

\bibitem{Bhardwaj:2023ayw}
L.~Bhardwaj and S.~Schafer-Nameki, \emph{{Generalized Charges, Part II:
  Non-Invertible Symmetries and the Symmetry TFT}},
  \href{https://arxiv.org/abs/2305.17159}{{\tt 2305.17159}}.

\bibitem{elgueta2007representation}
J.~Elgueta, \emph{Representation theory of 2-groups on kapranov and voevodsky's
  2-vector spaces},
  \href{http://dx.doi.org/https://doi.org/10.1016/j.aim.2006.11.010}{\emph{Advances
  in Mathematics} {\bf 213} (2007) 53--92}.

\bibitem{Huang:2024wdr}
M.~Huang, H.~Xu and Z.-H. Zhang, \emph{{The 2-character theory for finite
  2-groups}},  \href{https://arxiv.org/abs/2404.01162}{{\tt 2404.01162}}.

\bibitem{Iqbal_2022}
N.~Iqbal and J.~McGreevy, \emph{Mean string field theory: Landau-ginzburg
  theory for 1-form symmetries},
  \href{http://dx.doi.org/10.21468/scipostphys.13.5.114}{\emph{SciPost Physics}
  {\bf 13} (Nov., 2022) }.

\bibitem{Hidaka:2020iaz}
Y.~Hidaka, M.~Nitta and R.~Yokokura, \emph{{Higher-form symmetries and 3-group
  in axion electrodynamics}},
  \href{http://dx.doi.org/10.1016/j.physletb.2020.135672}{\emph{Phys. Lett. B}
  {\bf 808} (2020) 135672}, [\href{https://arxiv.org/abs/2006.12532}{{\tt
  2006.12532}}].

\bibitem{robinson1972moore}
C.~A. Robinson, \emph{Moore-postnikov systems for non-simple fibrations},
  {\emph{Illinois Journal of Mathematics} {\bf 16} (1972) 234--242}.

\bibitem{gordon1995coherence}
R.~Gordon, A.~J. Power and R.~Street, \emph{Coherence for tricategories},
  vol.~558.
\newblock American Mathematical Soc., 1995.

\bibitem{brown2012cohomology}
K.~S. Brown, \emph{Cohomology of groups}, vol.~87.
\newblock Springer Science \& Business Media, 2012.

\bibitem{144620}
E.~J. (https://mathoverflow.net/users/396/evan jenkins), ``Crossed modules and
  degree-3 group cohomology.'' MathOverflow.

\bibitem{Tachikawa:2017gyf}
Y.~Tachikawa, \emph{{On gauging finite subgroups}},
  \href{http://dx.doi.org/10.21468/SciPostPhys.8.1.015}{\emph{SciPost Phys.}
  {\bf 8} (2020) 015}, [\href{https://arxiv.org/abs/1712.09542}{{\tt
  1712.09542}}].

\bibitem{Eilenberg_Maclane_II}
S.~Eilenberg and S.~MacLane, \emph{Cohomology theory of abelian groups and
  homotopy theory ii}, {\emph{Proceedings of the National Academy of Sciences
  of the United States of America} {\bf 36} (1950) 657--663}.

\bibitem{Baez0801}
J.~C. Baez and D.~Stevenson, \emph{The classifying space of a topological
  2-group},  \href{https://arxiv.org/abs/0801.3843}{{\tt 0801.3843}}.

\bibitem{kristel2022insidious}
P.~Kristel, M.~Ludewig and K.~Waldorf, \emph{{The insidious bicategory of
  algebra bundles}},  \href{https://arxiv.org/abs/2204.03900}{{\tt
  2204.03900}}.

\bibitem{kristel2023representation}
P.~Kristel, M.~Ludewig and K.~Waldorf, \emph{A representation of the string
  2-group},  \href{https://arxiv.org/abs/2308.05139}{{\tt 2308.05139}}.

\bibitem{pace2023topological}
S.~D. Pace and Y.~L. Liu, \emph{{Topological aspects of brane fields: solitons
  and higher-form symmetries}},
  \href{http://dx.doi.org/10.21468/SciPostPhys.16.5.128}{\emph{SciPost Phys.}
  {\bf 16} (2024) 128}, [\href{https://arxiv.org/abs/2311.09293}{{\tt
  2311.09293}}].

\bibitem{Ambrosino:2024ggh}
F.~Ambrosino, R.~Luo, Y.-N. Wang and Y.~Zhang, \emph{{Understanding Fermionic
  Generalized Symmetries}},  \href{https://arxiv.org/abs/2404.12301}{{\tt
  2404.12301}}.

\end{thebibliography}\endgroup

\end{document}